\documentclass[a4paper,UKenglish,cleveref, autoref, thm-restate]{lipics-v2021}

\pdfoutput=1 
\hideLIPIcs

\newcommand{\bettersc}[1]{%
	\ifdim\fontdimen1\font>0pt%
		\textsl{\textsc{#1}}%
	\else%
		\textsc{#1}%
	\fi%
}

\usepackage[xcolor, hyperref, notion, quotation]{knowledge}

\knowledge{notion}
	| string synchronizing set
	| string synchronizing sets
	| String synchronizing set
	| String synchronizing sets
	| synchronizing set
	| synchronizing sets
	| Synchronizing set
	| Synchronizing sets
\knowledge{notion}
	| common CRCW PRAM
	| Common CRCW PRAM
	| common CRCW PRAMs
	| PRAMs
	| PRAM
\knowledge{notion}
	| LCE
	| longest common extension
\knowledge{notion}
	| string synchronizing sets hierarchy
	| synchronizing sets hierarchy
\knowledge{notion}
	| hierarchy of decompositions
	| decompositions hierarchy
\knowledge{notion}
	| Consistency
	| consistency
\knowledge{notion}
	| Density
	| density
\knowledge{notion}
	| Sparseness
	| sparseness
\knowledge{notion}
	| Names
	| names
	| consistent names
\knowledge{notion}
	| Dyck language
	| Dyck languages
	| Dyck Languages
\knowledge{notion}
	| square
	| Square
	| squares
	| Squares
\knowledge{notion}
	| square-free
	| square-freeness
\knowledge{notion}
	| monoid
	| monoids
\knowledge{notion}
	| affected segment
	| affected segments
\knowledge{notion}
	| relevant segment
	| relevant segments
\knowledge{notion}
	| affected path
\knowledge{notion}
	| deactivates factors
	| deactivated factors
	| deactivated factor
	| deactivates a factor
	| deactivates
	| deactivated
	| Deactivating
	| Deactivated factors
\knowledge{notion}
	| void characters
	| void character
	| void-characters
	| void-character
\knowledge{notion}
	| erasing-function
\knowledge{notion}
	| effect-preserving
\knowledge{notion}
	| effect-inducing
\knowledge{notion}
	| current instance
\knowledge{notion}
	| future instance

\knowledgenewrobustcmd \calS {\mathrel {\cmdkl {\mathcal{S}}}}	
\knowledgenewrobustcmd \calR {\mathrel {\cmdkl {\mathcal{R}}}}

\knowledgenewrobustcmd \calBtau {\mathrel {\cmdkl {\mathcal{B}^{\tau}}}}
\knowledgenewrobustcmd \calNBtau {\mathrel {\cmdkl {\mathcal{N}_B^{\tau}}}}

\knowledgenewrobustcmd \calSl {\mathrel {\cmdkl {\mathcal{S}^{\ell}}}}
\knowledgenewrobustcmd \calDl {\mathrel {\cmdkl {\mathcal{D}^{\ell}}}}
\knowledgenewrobustcmd \calNDl {\mathrel {\cmdkl {\mathcal{N}_D^{\ell}}}}

\knowledgenewrobustcmd \Dell {\mathrel {\cmdkl {D^{\ell}}}}

\knowledgenewrobustcmd \erase [1]{\ensuremath{\cmdkl {e_{#1}}}}

\knowledgenewrobustcmd \setcount [2]{\cmdkl {| }#1\cmdkl {| }_{#2}}

\knowledgenewrobustcmd \dynamicLCE {\cmdkl{\bettersc{DynamicLce}}}
\knowledgenewrobustcmd \dynamicMarkedString {\cmdkl{\bettersc{DynamicMarkedString}}}
\knowledgenewrobustcmd \dynamicNames {\cmdkl{\bettersc{DynamicNames}}}
\knowledgenewrobustcmd \dynamicString {\cmdkl{\bettersc{DynamicString}}}
\knowledgenewrobustcmd \dynamicNeighbor {\cmdkl{\bettersc{DynamicNeighbor}}}
\knowledgenewrobustcmd \dynamicFunction {\cmdkl{\bettersc{DynamicFunction}}}
\knowledgenewrobustcmd \member [1]{\cmdkl{\bettersc{Member}(#1)}}
\knowledgenewrobustcmd \stringEquality {\cmdkl{\bettersc{StringEquality}}}
\knowledgenewrobustcmd \rangeEval [1]{\cmdkl{\bettersc{RangeEval}(#1)}}
\knowledgenewrobustcmd \dynamicSquareFreeness {\cmdkl{\bettersc{DynamicSquareFreeness}}}
\knowledgenewrobustcmd \dynamicRangeSquareFreeness {\cmdkl{\bettersc{DynamicRangeSquareFreeness}}}
\knowledgenewrobustcmd \dynamicStringMatching {\cmdkl{\bettersc{Dynamic2-1StringMatching}}}
\knowledgenewrobustcmd \dynamicPrefixSuffix {\cmdkl{\bettersc{DynamicPrefixSuffix}}}

\title{Longest Common Extension of a Dynamic String in Parallel Constant Time}

\titlerunning{Dynamic LCE in Constant Time}

\author{Daniel Alexander {Albert}}{TU Dortmund University, Germany }{daniel.albert@tu-dortmund.de}{}{}

\authorrunning{D.\ Albert}

\Copyright{Daniel Albert}

\ccsdesc[500]{Theory of computation~Parallel algorithms}
\ccsdesc[500]{Theory of computation~Pattern matching}
\ccsdesc[500]{Theory of computation~Complexity theory and logic}

\keywords{Dynamic Strings, Work, Parallel Constant Time, Longest Common Extension, Longest Common Prefix}

\nolinenumbers

\begin{document}

\maketitle

\begin{abstract}
	A "longest common extension" ("LCE") query on a string computes the length of the longest common suffix or prefix at two given positions.
	A dynamic "LCE" algorithm maintains a data structure that allows efficient "LCE" queries on a string that can change via character insertions and deletions.
	
	A dynamic parallel constant-time algorithm is presented that can maintain "LCE" queries on a "common CRCW PRAM" with $\mathcal{O}(n^{\epsilon})$ work, for any $\epsilon > 0$.
	The algorithm maintains a "string synchronizing sets hierarchy", which it uses to answer substring equality queries, which it in turn uses to answer "LCE" queries.
	To achieve constant runtime, the algorithm allows parts of its information to become outdated by up to $\log n \log^* n$ updates.
	It answers queries by combining this slightly outdated information with a list of the recent changes.
	
	Two applications of this dynamic "LCE" algorithm are shown.
	Firstly, a dynamic parallel constant-time algorithm can maintain membership in a "Dyck language" $D_k, k > 0$ with $\mathcal{O}(n^{\epsilon})$ work for any $\epsilon > 0$.
	Secondly, a dynamic parallel constant-time algorithm can maintain "squares" with $\mathcal{O}(n^{\epsilon})$ work for any $\epsilon > 0$.
\end{abstract}

\section{Introduction}
\label{sec:introduction}

Answering "longest common extension" ("LCE") queries is a fundamental problem that appears as part of many string algorithms.
An ""LCE"" query determines, for two given indices $i,j$ of a string $S$, the length of the longest common prefix (\text{LCP}) of $S[i,|S|]$ and $S[j,|S|]$, or the length of the longest common suffix (\text{LCS}) of $S[1,i]$ and $S[1,j]$.

\subparagraph{Background on LCE Queries}
"LCE" queries trace back to Landau and Vishkin \cite[Section 2.3]{landauFastStringMatching1988}, who use the name \text{MAX-LENGTH} for what we call the length of the \text{LCP}.
Optimal "LCE" algorithms build a data structure for a given string which answers "LCE" queries in constant time.
The first construction algorithm runs in linear time for constant alphabets (\cite[Section 4]{landauFastStringMatching1988} and \cite{weinerLinearPatternMatching1973}).
Later algorithms achieve the same runtime even for integer alphabets \cite{farach-coltonSortingcomplexitySuffixTree2000}.

These algorithms are sequential and run on a single processor.
With parallel machines, we can find algorithms with shorter runtime at the cost of adding more processors.

An early parallel "LCE" algorithm is the skew algorithm by Kärkkäinen and Sanders \cite[Theorem 2]{karkkainenSimpleLinearWork2003}, which runs in $\mathcal{O}(\log^2 n)$ time on $\mathcal{O}(n/ \log n)$ processors.
This algorithm runs on the very restrictive \text{EREW PRAM} model.
Shun improved this algorithm to run on $\mathcal{O}(n/\log^2 n)$ processors \cite[Theorem 3.6]{shunFastParallelComputation2014}.
This algorithm is work-efficient, meaning that simulating it with just one processor is as efficient as the best known sequential algorithm.

These algorithms operate in the static setting.
In the dynamic setting, we are looking for a data structure which can keep up with changes to the string.
We are interested in dynamic problems where we can insert or delete single characters anywhere in the string.
There has been some research on dynamic "LCE" algorithms on sequential machines.

Early work on dynamic strings was done by Gu, Farach and Beigel \cite{guEfficientAlgorithmDynamic1994}.
They studied the related dynamic text indexing problem.
This problem is about maintaining a data structure for a changing string which can report all occurrences of a given pattern in the string.
Gawrychowski et al.\ show an optimal data structure for a different related problem where updates and "LCE" queries run in logarithmic time, with high probability \cite{gawrychowskiOptimalDynamicStrings2018}.%
\footnote{%
	Gawrychowski et al.\ study a different but related problem and their algorithm is more powerful.
	The dynamic "LCE" problem is reducible to the problem in Gawrychowski et al.
	This reduction increases the runtime of queries from constant to logarithmic.%
}

\subparagraph{Parallel Constant-Time Algorithms}
We study the problem of dynamic "LCE" queries in the context of parallel constant-time algorithms.
We use the parallel computational model described by Shiloach and Vishkin \cite{shiloachFindingMaximumMerging1981}, which since has become known as the "common CRCW PRAM" (see e.g.\ \cite{fichRelationsConcurrentwriteModels1984}).
A "common CRCW PRAM" employs multiple parallel processors that access a shared memory.
Multiple processors can access the same cell of memory concurrently, but concurrent writes are only allowed if all write the same value.

We study parallel algorithms where the maximum runtime over all processors is constant.
Our measure for the complexity of such algorithms is the number of processors.

\subparagraph{Our Contributions}
Our primary result is the following theorem.

\begin{theorem}\label{the:LCE}
	For any $\epsilon > 0$, there is a dynamic algorithm on a "common CRCW PRAM" that maintains a data structure for a string $S$ with fixed maximum size $n$, where
	\begin{itemize}
		\item the size of the data structure is in $\mathcal{O}(n \log n \log^* n)$,\footnote{
			The iterated logarithm $\log^* n$ counts the number of times the logarithm (with base $2$) has to be applied until the result is at most $1$.
			It grows extremely slowly, e.g.\ $\log^* 2^{65536} = 5$.
		}
		\item the initialization for $S = \epsilon$ runs in parallel constant time on $\mathcal{O}(n \log n \log^* n)$ processors,
		\item updates to the data structure after the insertion or deletion of a character are processed in parallel constant time on $\mathcal{O}(n^{\epsilon})$ processors, and
		\item the "LCE" of two given indices is computed in parallel constant time on $\mathcal{O}(n^{\epsilon})$ processors.
	\end{itemize}
\end{theorem}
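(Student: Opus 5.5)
The proof follows the pipeline in the abstract: LCE reduces to substring equality, substring equality is answered with a hierarchy of string synchronizing sets, and constant update time comes from letting parts of the hierarchy lag behind the string.

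\textbf{Step 1: LCE from equality tests.} Let \stringEquality{} decide whether $S[i,i+\ell-1]=S[j,j+\ell-1]$ in parallel constant time with $\mathcal{O}(n^{\epsilon'})$ processors. Then the "LCE" at $i,j$ is found by an $n^{\delta}$-ary search over candidate lengths. In each round we test $n^{\delta}$ evenly spaced lengths in parallel and keep the largest length whose test succeeds, together with the gap up to the next candidate. After $1/\delta$ rounds the gap has size one. Every round costs constant time with $n^{\delta}\cdot n^{\epsilon'}$ processors, so choosing $\delta,\epsilon'$ small relative to $\epsilon$ gives the query bound. Common-suffix queries are handled symmetrically.

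\textbf{Step 2: the hierarchy.} We keep a hierarchy of "string synchronizing sets" $\calSl$ for levels $\tau_\ell = 2^\ell$, with $\ell\le\log n$. These satisfy "consistency", which is locality: membership of a position depends only on a window of length $\mathcal{O}(\tau_\ell)$ around it. They also satisfy "density" and "sparseness": the sets have no large gaps except inside periodic runs, and there are few elements per window. On top of this we build the "hierarchy of decompositions" $\calDl$, whose factors at level $\ell$ receive "consistent names" $\calNDl$. Equal factors at equal contexts get equal names, and each level-$\ell$ factor splits into $\mathcal{O}(1)$ level-$(\ell-1)$ factors, or is periodic with a short description.

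Equality of two substrings is then tested as follows. Each substring is cut into at most two partial boundary factors per level plus a middle part. By density, the middle parts are aligned sequences of named factors at one fixed level. Periodic runs are compared through their period and length. Each substring has $\mathcal{O}(\log n)$ boundary pieces; these can be handled simultaneously, but the levels are not independent. To get constant time we therefore use a hierarchy with branching factor $n^{\epsilon'}$ instead of $2$. There are then $\mathcal{O}(1/\epsilon')$ levels, and each level compares $n^{\epsilon'}$ names in parallel, as in Step 1. The space bound $\mathcal{O}(n\log n\log^* n)$ comes from storing all levels, together with the $\log^* n$ factor coming from deterministic coin tossing.

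\textbf{Step 3: dynamic maintenance.} This is the main obstacle. An insertion or deletion changes, by consistency, only the "affected segment" of length $\mathcal{O}(\tau_\ell)$ at each level. However, a deterministic local construction of synchronizing sets needs $\log^* n$ rounds of symmetry breaking per level, and the recomputation must propagate through $\log n$ levels. So the updates cannot be done in constant time. Instead, each update performs a constant amount of work for each of the $\mathcal{O}(\log n\log^* n)$ pending recomputation stages, pipelined across updates. Consequently the stored hierarchy is outdated by at most $\log n\log^* n$ updates. The processors per update stay $\mathcal{O}(n^{\epsilon})$, since each stage touches windows of size $n^{\epsilon'}$ times constants.

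Queries use the possibly stale "current instance" combined with the list of the last $\log n\log^* n$ changes. We map the query positions through these changes, so the string becomes a concatenation of $\mathcal{O}(\log n\log^* n)$ pieces of the old string and single new characters. Each piece is compared with \stringEquality{} on the old structure. Since the pieces can be processed in parallel, this keeps constant time with only a polylogarithmic blow-up in processors, which is absorbed into $n^{\epsilon}$. The delicate points are the following. Several versions ("current instance" and "future instance") must be maintained and switched atomically. Names must remain consistent between versions; a "deactivated factor" must not be compared with a live one. Finally, the $\mathcal{O}(n\log n\log^* n)$ initialization of the empty string must be constant time, which we achieve by preallocating all arrays.
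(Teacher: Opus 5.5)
Your Steps 1 and 3 match the paper: an $n^{\epsilon}$-ary search over equality tests, and $\log n\log^* n+1$ pipelined constant-time actions, with queries cut at the recorded recent changes and answered piece by piece on the delayed structures.

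The gap is in Step 2, the substring-equality test on which everything rests. You compare two substrings through their decomposition factors, with one partial boundary factor per level plus an aligned middle part. But a factor of $D^{\ell}$ is determined only by a context of $\alpha_\ell$ characters on its left and $\beta_\ell=(\log^*N+9)(2^{\ell+1}-1)$ on its right (\Cref{lem:contextSize}). So the factorizations of two equal substrings at different positions can disagree on $\mathcal{O}(\log^* n)$ factors near each end at every level, not just on one partial factor. Your aligned comparison can then reject equal substrings.

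You then assert that the levels are not independent, and switch to a hierarchy of branching factor $n^{\epsilon'}$ with $\mathcal{O}(1/\epsilon')$ levels. Nothing you describe produces such a hierarchy. Deterministic coin tossing merges groups of two to four factors with $\mathcal{O}(\log^* n)$ context. A locally consistent $n^{\epsilon'}$-ary grouping would be a composition of $\Theta(\epsilon'\log n)$ such rounds, with correspondingly growing context. It also contradicts your own Step 3, whose $\log n\log^* n$ delay comes from $\log n$ levels of $\log^* n$ rounds each. It likewise contradicts your space bound, which you attribute to storing those $\log n$ levels.

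The missing idea is that the synchronizing sets already make the levels independent, so you need neither a sequential descent nor a change of fan-out. $B^{\tau}$ is obtained from $D^{\ell}$ by shifting each factor start left by $\alpha_\ell$, with $\tau\ge\alpha_\ell+1+\beta_\ell$ (\Cref{lem:syncSet}). Hence membership of a position in $B^{\tau}$, and its name, depend only on the length-$\tau$ window starting there. The query then runs, for every $\tau=2^z$ in parallel:
\begin{itemize}
\item It uses \textsc{Succ} and \textsc{Pred} on $\calBtau$ to find the first and last occurrences lying inside $S[i,i+m)$.
\item It checks that occurrences with the same names sit at the same offsets in $S[j,j+m)$.
\item By density, the first occurrence starts in $[i,i+\tau/2)$ and so covers $S[i+\tau/2,i+\tau)$. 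Symmetrically at the right end, so over all levels the checked occurrences cover the whole substring.
\item If no such occurrence exists, the substring has a periodic prefix. It then suffices to check that the other side has none either; the period itself is verified at a lower level.
\end{itemize}
Equality is the conjunction of these $\mathcal{O}(\log n)$ independent checks, each costing $\mathcal{O}(n^{\theta})$ processors (\Cref{lem:queryNoMuddling} and \Cref{lem:queryMuddling}). This is exactly the binary hierarchy that your Step 3 maintains.

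Two smaller corrections. First, the processor bound in Step 3 should rest on the relevant segment having $\mathcal{O}(\log^* n)$ factors (\Cref{lem:segments}), together with the $n^{\theta}$ cost of the marked-string and naming structures, not on windows of size $n^{\epsilon'}$. Second, switching between a current and a future instance belongs to the extension that removes the size bound. It plays no role here, where $n$ is fixed.
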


A maximum size $n$ for the string is fixed at the initialization.

Our algorithm answers "LCE" queries using "string synchronizing sets", which keep track of the occurrences of some substrings within the underlying string.
Kempa and Kociumaka \cite[Section 7]{kempaDynamicSuffixArray2022} show a dynamic algorithm that maintains these sets.
We modify this algorithm to run in parallel constant time, using techniques related to the so-called  Muddling-Lemma \cite{dattaStrategyDynamicPrograms2019}.

As an application of \Cref{the:LCE}, a "common CRCW PRAM" can maintain "Dyck languages" (\Cref{cor:dyck}) and "squares" (\Cref{cor:squares}) in constant time on $\mathcal{O}(n^{\epsilon})$ processors.

\subparagraph{Related Work}
We already discussed related work on "LCE" queries.
Here we go over some related work on dynamic parallel constant-time algorithms.

The study of dynamic parallel constant-time algorithms originated in the field of database theory.
It traces back to the introduction of the complexity class DynFO by Patnaik and Immerman \cite[Section 3.1]{patnaikDynFOParallelDynamic1997}.
DynFO contains dynamic problems where the actions required for processing updates and answering queries can be expressed as first-order formulae.

Patnaik and Immerman show that many relevant problems are in DynFO \cite{patnaikDynFOParallelDynamic1997}, like reachability on undirected graphs.
Their conjecture that reachability on directed graphs is also in DynFO was later proven by Datta et al.\ \cite{dattaReachabilityDynFO2018}.

Schmidt et al.\ study DynFO-algorithms to maintain membership queries for various classes of formal languages \cite{schmidtWorksensitiveDynamicComplexity2021} via "common CRCW PRAMs".
The complexity class DynFO coincides with dynamic parallel constant-time algorithms on "common CRCW PRAMs" \cite[Corollary 5.10]{immerman_1999}.
The advantage of using "PRAMs" instead of first-order formulae to specify dynamic algorithms, as argued by Schmidt et al.\ \cite{schmidtWorksensitiveDynamicComplexity2021}, is that "PRAM" algorithms are usually easier to understand and "PRAMs" have a very clear and natural notion of work.

\subparagraph{Structure}

We first give an informal description of our algorithm in \Cref{sec:informal}.
The detailed description starts with preliminary definitions and auxiliary algorithms in \cref{sec:preliminaries}.
\Cref{sec:data_structure} describes how the algorithm stores its information.
\Cref{sec:queries} explains how to answer queries with this information and \cref{sec:updates} describes how to process updates.
\Cref{sec:applications} shows two applications of \Cref{the:LCE}, namely for maintaining "Dyck languages" and "squares".

\section{Informal Description of the Algorithm}\label{sec:informal}

We first give a rough informal description of the algorithm, split into three parts.
As a foundation we begin with "string synchronizing sets" and how they appear in static algorithms.
Then we move onto a dynamic algorithm for maintaining "string synchronizing sets".
Finally we describe how we modify this dynamic algorithm so that it runs in parallel constant time.

\subparagraph{String Synchronizing Sets in a Static Setting}

A good starting point for "string synchronizing sets" is the algorithm by Kociumaka et al.\ \cite[Section 4]{kociumakaInternalPatternMatching2024}.
We explain its ideas in the following.

Simply speaking, a "string synchronizing set" is a set of occurrences of substrings of a certain size with some useful properties.
An occurrence of a substring is a sequence of positions so that the characters of the underlying string at these positions match the substring.
Occurrences are typically represented by their first position and some identifier for the substring.

A "string synchronizing set" stores occurrences for \emph{some} substrings.
If any occurrence of a substring is stored, then all its occurrences need to be stored.
This property is called \emph{"consistency"}.
Every position of the underlying string is covered by the first half of an occurrence and the total length of all occurrences is linear to the length of the underlying string.
These properties are called \textit{"density"} and \textit{"sparseness"} respectively.
Occurrences are assigned "consistent names" so that occurrences of the same substring have the same name.

Given a hierarchy of logarithmically many "string synchronizing sets" where the occurrence-length doubles with each level, an algorithm can decide efficiently if any two given segments are equal.
To do so, it computes canonical coverings for both segments and checks if they match.
A covering is a selection of occurrences from the "synchronizing sets" so that every position in the segment is covered by at least one occurrence.
To compute a canonical covering, the algorithm selects, from each level in the "synchronizing sets hierarchy", the first and last occurrences that lie entirely within the segment.
See \Cref{fig:query_example} for an example, where $B^1$ is the "string synchronizing set" with substring-length 1, $B^2$ with substring-length 2 and $B^4$ with substring-length 4.
In the string $S$, every first position of an occurrence in $B^1,B^2,B^3$ is marked with the name assigned to that occurrence.

\begin{table}[ht]
	\centering\includegraphics[scale=0.9]{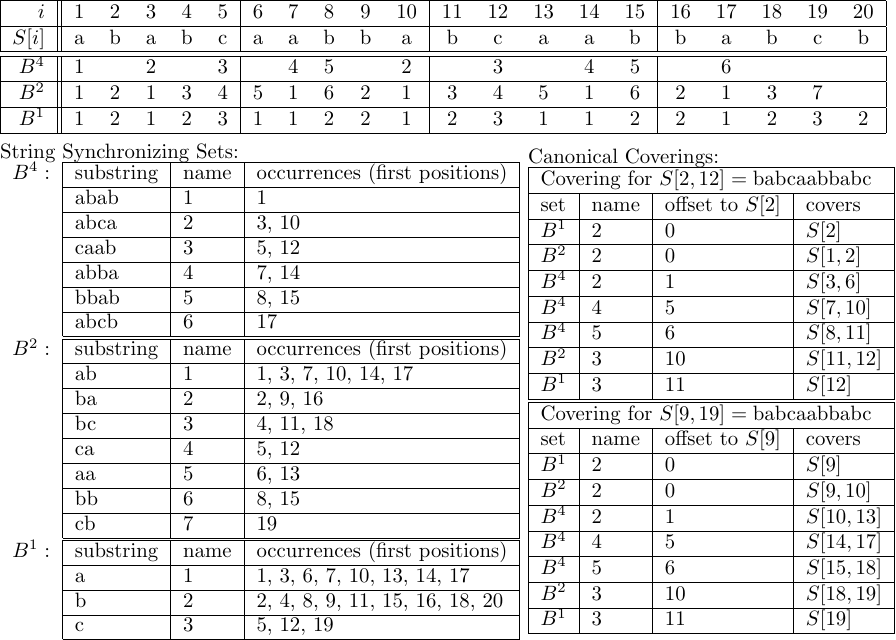}
	\caption{Example for the first three "string synchronizing sets" $B^1, B^2, B^4$ of string $S = \text{ababcaabbabcaabbabcb}$ and canonical coverings for substrings $S[2,12], S[9,19]$. The substrings $S[2,12]$ and $S[9,19]$ are identical and their canonical coverings match.}
	\label{fig:query_example}
\end{table}

It is not hard to see that, thanks to "consistency", two substrings are equal if and only if their canonical coverings match.
The algorithm can compute and compare canonical coverings efficiently because a canonical covering consists of at most a logarithmic number of occurrences.
The "LCE" is then computed with several queries for substring-equality.

The construction algorithm computes the "synchronizing sets hierarchy" via a "hierarchy of decompositions".
Each decomposition divides the underlying string into a sequence of factors.
Unlike the "string synchronizing sets", the factors may have different lengths, but can not overlap one another.
These decompositions are computed in a bottom-up fashion, starting with the decomposition of singleton factors.
Then follows a series of stages where each stage takes the current decomposition and merges factors to produce the next decomposition.

A stage first looks for and merges runs of identical factors.
After this preparation, a stage merges each factor with some of its neighbors consistently, i.e., if the context around two factors is identical, then it produces the same merged factors.
The context around a factor consists of that factor and a certain number of characters before and after that factor.
The exact number of characters depends on the rule used to choose these merges.
We do not look at the concrete rule used by Kociumaka et al.\ \cite[Section 4.2]{kociumakaInternalPatternMatching2024}.
The rule has to be consistent, have a small context, and each merge has to involve up to a constant number of factors.

The algorithm runs over a logarithmic number of stages.
The topmost decomposition consists of only one factor for the entire string.
The algorithm derives the "string synchronizing sets" from the "decompositions hierarchy".
For each level of the "synchronizing sets hierarchy", it selects an appropriate decomposition from the hierarchy.
Every factor in the decomposition is lengthened and becomes an occurrence in the "synchronizing set".

\subparagraph{String Synchronizing Sets in a Dynamic Setting}

Kempa and Kociumaka \cite[Section 7]{kempaDynamicSuffixArray2022} describe a dynamic algorithm that maintains a "string synchronizing sets hierarchy".
The algorithm is sequential and queries and updates run in poly-log-time.
The dynamic algorithm stores the "decompositions hierarchy" and "synchronizing sets hierarchy".
The idea is that, for each update, it suffices to recalculate only small parts of these hierarchies.
To achieve this, Kempa and Kociumaka had to ensure that each update affects only small parts of the hierarchies.
To this end, they substitute the \emph{global} "sparseness" criteria outlined above with \emph{local} "sparseness".
Local "sparseness" demands that the total length of all occurrences within any part of the string is linear in the length of that part, within a $\log^* n$ factor.

To achieve local "sparseness", Kempa and Kociumaka modify the processing of the stages.
They modify the rule for choosing merges so that the merging of each factor depends only on a few factors around it.%
\footnote{
	This is stricter than demanding only "consistency".
	For example, the algorithm in \cite{kociumakaInternalPatternMatching2024} first inspects the entire string to build a rule for merging factors, then applies that rule consistently.
	Changing one character might result in a different rule and lead to many changes in the resulting decomposition.
}
With their modified rule, changing one character of the string affects only a small part of the decompositions and "synchronizing sets".
The dynamic algorithm can afford to recalculate the affected parts for each update.

\subparagraph{String Synchronizing Sets in Parallel Constant-Time}

We take the dynamic algorithm from Kempa and Kociumaka \cite{kempaDynamicSuffixArray2022} and modify it so that updates and queries run in constant time.
Our main challenge is that an update affects logarithmically many levels and it is unclear how the modifications could be computed in parallel constant time.

Our solution is a technique inspired by the so-called Muddling-Lemma \cite{dattaStrategyDynamicPrograms2019}.
The algorithm allows its data structure to become slightly outdated, i.e.~the "synchronizing sets" do not match the current string, but the string as it was a certain number of updates ago.
This technique gives the algorithm enough time to traverse the hierarchy sequentially when calculating the changes that result from an update, but makes queries slightly more complicated.

The processing of one update runs through all $\log n$ levels in sequence, taking $\mathcal{O}(\log^* n)$ time per level.
The algorithm divides this processing into $\log n \log^* n + 1$ actions, each of which runs in constant time.
It spreads the actions across the following updates, so that each following update advances the processing by one action.
The changes that result from the $k$-th update are fully processed with the $(k+\log n \log^* n)$-th update.
Each update computes its first action as well as $\log n \log^* n$ actions belonging to previous updates, all in parallel.

Due to this delayed processing, the lowermost decomposition is delayed by $\log^* n$ updates, the next level by $2\log^* n$ updates, and so on, with the topmost decomposition delayed by $\log n \log^* n$ updates.
These delays carry over to the "synchronizing sets".

Queries have to account for the fact that the information does not reflect the current string.
The solution to this problem lies in the observation that there are only up to $\log^*n \log n$ individual positions with outdated information.
The algorithm cuts queries at these positions, splitting it into a poly-logarithmic number of sub-queries.
It answers each sub-query with the slightly outdated information, then compares the cut-out positions na\"ively.

\section{Preliminaries}
\label{sec:preliminaries}
We first address the model of computation, followed by definitions for strings, "string synchronizing sets", and the dynamic "LCE" problem.
Finally, this section introduces two auxiliary problems used within our "LCE" algorithm.

\subsection{Model of Computation}
We use the ""Common CRCW PRAM"" model as originally described by Shiloach and Vishkin \cite{shiloachFindingMaximumMerging1981}.
A \emph{PRAM} consists of $P(n)$ processors capable of performing the typical elementary operations, where $n$ is the size of the input.
The ticks of the processors are synchronized.

All processors have access to a common memory.
In a \emph{CRCW} PRAM multiple processors can read from and write to cells of this memory concurrently.
In a \emph{common} CRCW PRAM, if multiple processors write to the same cell concurrently, all must write the same value.

We use three measures for the complexity of our algorithms.
These are the runtime, number of processors, and work.
The runtime is the maximum runtime over all processors employed by the "PRAM".
The number of processors is determined at the start of the algorithm and the algorithm is not allowed to create new processors on-the-fly.
The work is the product of the runtime and the number or processors.
When dealing with constant-time algorithms, asymptotic bounds for the work and for the number of processors are interchangeable.
\subsection{Definitions}

\subparagraph{Strings}
A string $S$ is a finite sequence of characters from some alphabet $\Sigma$.
We only consider integer alphabets $\Sigma = [1,n]$.
A decomposition $D = v_1,\ldots,v_m$ of $S$ is a sequence of factors $v_i$ such that $v_1 \circ \ldots \circ v_m = S$.
We write $v_i = S[\alpha,\beta]$ to point to a specific occurrence of $v_i$ so that we can distinguish different occurrences of the same substring.
A string $S$ has a period of length $m$ if $S[i] = S[i+m]$ for all $i \in [1,|S|-m]$.
A string $S$ is periodic if it has a period with length at most $|S|/2$.
A periodic segment of $S$ is some substring $S[\alpha,\beta]$ that is periodic.

\subparagraph{String Synchronizing Sets}

Our algorithm compares substrings via "string synchronizing sets", which were introduced in \cite[Definition 3.1]{kempaStringSynchronizingSets2019}.
We expand the standard definition of "string synchronizing sets" by a function that assigns "consistent names" from some domain $\Delta$ to all occurrences in the set.
Our algorithm uses sets of integers $\{1,2,\ldots,m\}$ for $\Delta$.

\begin{definition}\label{def:synchronizingSet}
	For a string $S$, length $\tau \leq |S|$ and domain $\Delta$, a ""string synchronizing set"" $(B^{\tau},f^{\tau})$ is a set $B^{\tau} \subseteq \{1,\ldots,|S|-\tau\}$ of indices and a function $f^{\tau}:B^{\tau} \to \Delta$ such that:
	\begin{enumerate}
		\item ""Consistency"": If $S[i,i+\tau) = S[j,j+\tau)$ for some $i,j$, then $i \in B^{\tau}$ if and only if $j \in B^{\tau}$.
		\item ""Density"": $B^{\tau} \cap [i,i+\frac{1}{2}\tau) \neq \emptyset$ for all $i \in \{1,\ldots,|S|-\frac{1}{2}\tau\}$, unless there is a periodic segment of size $\frac{3}{2}\tau$ with a period of length at most $\frac{1}{2}\tau$ that contains $i$.
		\item""Sparseness""\footnote{
			We deviate from \cite[Definition 3.1]{kempaStringSynchronizingSets2019} in this property and instead use local sparseness from \cite{kempaDynamicSuffixArray2022}.
		} : $|B^{\tau} \cap [i,i+m)| = \mathcal{O}(\frac{m}{\tau} \log^* \frac{m}{\tau})$ for all $m > 1$ and $1 \leq i \leq n-m-\tau$.
		\item ""Names"" For all $i,j \in B^{\tau}$, $f^{\tau}(i) = f^{\tau}(j)$ if and only if $S[i,i+\tau) = S[j,j+\tau)$.
	\end{enumerate}
\end{definition}

\subparagraph{Dynamic LCE Problem}
Our algorithm solves the dynamic problem \dynamicLCE.

\begin{itemize}
	\item \AP \emph{Problem:} \intro *\dynamicLCE
	\item \emph{Initialization:} String $S=\epsilon$, alphabet $\Sigma = [1,n]$, maximum size $n$
	\item \emph{Updates:}
	\begin{itemize}
		\item $\textsc{Insert}(i,\sigma)$: Set $S$ to $S[1,i-1] \circ \sigma \circ S[i,|S|]$.
		\item $\textsc{Delete}(i)$: Set $S$ to $S[1,i-1] \circ S[i+1,|S|]$.
	\end{itemize}
	\item \emph{Queries:}
	\begin{itemize}
		\item $\textsc{Lcp}(i,j)$: Return $\max\{ m \mid S[i,i+m) = S[j,j+m) \}$.
		\item $\textsc{Lcs}(i,j)$: Return $\max\{ m \mid S(i-m,i] = S(j-m,j] \}$.
	\end{itemize}
\end{itemize}

A dynamic algorithm for \dynamicLCE~is a data structure with sub-algorithms for the initialization, updates, and queries.
The initialization, given the alphabet $\Sigma = [1,n]$ and maximum size $n$, builds a data structure for $S = \epsilon$.
The algorithms for updates modify the data structure according to changes to $S$.
The algorithms for queries use the data structure to compute the length of the "LCE" for two given indices.
We measure the complexity of a dynamic algorithm by the complexity of each sub-algorithm and the size of the data structure.

\subsection{Auxiliary Problems} \label{subsec:auxiliary}

The algorithm relies on two auxiliary dynamic problems.
The first problem \dynamicMarkedString~maintains a dynamic string where some indices are marked.
We use instances of \dynamicMarkedString~to store "string synchronizing sets" and decompositions by marking each first index of an occurrence/factor.
The second problem \dynamicNames~assigns "consistent names" to occurrences and factors in our "synchronizing sets" and decompositions.
We only define the dynamic problems and state the properties of the auxiliary algorithms, but leave descriptions of how they work for the Appendix, \Cref{sec:app:aux}.

\subparagraph{Marked String}

The first algorithm maintains a string with marked indices.
We use it not only to store the current and delayed strings, but also to store "synchronizing sets" and decompositions by marking their first positions.
Updates can insert and delete marks and modify the string.
Marks apply to characters so that, if an update shifts a character, a mark on that character shifts with it.
There are queries for retrieving substrings and for determining the nearest marked neighbor of a given index.

\begin{itemize}
	\item \AP \emph{Problem:} \intro *\dynamicMarkedString
	\item \emph{Initialization:} Alphabet $\Sigma = [1,n]$, string $S = \epsilon$, marked indices $Z = \emptyset$, maximum size $n$
	\item \emph{Updates:}
	\begin{itemize}
		\item $\textsc{Insert}(i,\sigma): \begin{cases}
			\text{Set $S$ to $S[1,i-1] \circ \sigma \circ S[i,|S|]$.}\\
			\text{Set $Z$ to $\{ j \mid j \in Z, j < i \} \cup \{ j+1 \mid j \in Z, j \geq i \}$.}
		\end{cases}$
		\item $\textsc{Delete}(i): \begin{cases}
			\text{Set $S$ to $S[1,i-1] \circ S[i+1,|S|]$.}\\
			\text{Set $Z$ to $\{ j \mid j \in Z, j < i \} \cup \{ j-1 \mid j \in Z, j > i \}$.}
		\end{cases}$
		\item $\textsc{Mark}(X)$ (with $X \subseteq [1,n]$): Set $Z$ to $Z \cup X$.
		\item $\textsc{Unmark}(X)$ (with $X \subseteq [1,n]$): Set $Z$ to $Z \setminus X$.
	\end{itemize}
	\item \emph{Queries:}
	\begin{itemize}
		\item $\textsc{SubStr}(i,m)$: Return $S[i,i+m)$.
		\item $\textsc{Succ}(i)$: Return $\min\{ j \in Z \mid j > i \}$, or $\bot$ if none exists.
		\item $\textsc{Pred}(i)$: Return $\max\{ j \in Z \mid j < i \}$, or $\bot$ if none exists.
	\end{itemize}
\end{itemize}

The problem \dynamicMarkedString~has batch-updates, which mark or unmark several indicies at the same time.
The size of the batches in our use-cases is $|X| = \mathcal{O}(\log^* n)$.

\begin{lemma}\label{lem:shiftingSuccessor}
	For any $\epsilon > 0$, there is a dynamic algorithm for \dynamicMarkedString~on a "common CRCW PRAM", where
	\begin{itemize}
		\item the size of the data structure is in $\mathcal{O}(n)$,
		\item the initialization runs in parallel constant time on $\mathcal{O}(n^{\epsilon})$ processors,
		\item \bettersc{Insert} and \bettersc{Delete} run in parallel constant time on $\mathcal{O}(n^{\epsilon})$ processors,
		\item \bettersc{Mark} and \bettersc{Unmark} run in parallel constant time on $\mathcal{O}(n^{\epsilon} |X|^{2+\epsilon})$ processors,
		\item \bettersc{SubStr} runs in parallel constant time on $\mathcal{O}(mn^{\epsilon})$ processors, and
		\item \bettersc{Succ} and \bettersc{Pred} run in parallel constant time on $\mathcal{O}(n^{\epsilon})$ processors.
	\end{itemize}
\end{lemma}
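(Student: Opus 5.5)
We plan to store the string and the marks in a balanced tree of constant depth and branching factor $n^{\epsilon'}$, where $\epsilon'$ is a small constant depending on $\epsilon$. This mirrors the standard constant-time dynamic-string techniques from the work-sensitive DynFO literature \cite{schmidtWorksensitiveDynamicComplexity2021}. Fix $h = \lceil 1/\epsilon' \rceil$ and allocate $n$ physical cells. Characters live in cells, but the logical order of the cells is given by a rank structure rather than by their addresses. We take a complete $n^{\epsilon'}$-ary tree of depth $h$ over the physical cells. Each cell carries an \emph{active} bit, a character, and a mark bit. Each internal node stores a prefix-count array over its children, counting active cells, and a second prefix-count array counting marked active cells.

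With this representation, finding the $i$-th active cell is a descent of $h$ levels. At each level, $n^{\epsilon'}$ processors compare $i$ against the prefix counts in parallel, and a unique processor detects the boundary. This costs constant time and $\mathcal{O}(n^{\epsilon'})$ work.

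\bettersc{Insert}$(i,\sigma)$ needs physical room between logical neighbours. We avoid the difficulty by letting the physical order not matter. Alongside the tree we maintain an explicit permutation in a second tree: a counting tree indexed by logical position whose leaves point to physical cells. Keeping that tree consistent under a shift is the hard part, since inserting at position $i$ shifts all later positions.

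The cleaner plan is therefore to make the tree itself logical. Leaves are slots in a $n^{\epsilon'}$-ary tree over $2n$ slots with gaps. Each node stores, per child, the number of occupied slots below it, and an insertion only needs a free slot near the target. To keep free slots available, each node reserves slack. When a node runs out, we rebuild it by spreading its contents over a fresh subtree. Rebuilding a node with $n^{k\epsilon'}$ leaves cannot be done in constant time with $n^{\epsilon}$ processors, so we deamortize. A node at level $k$ is refilled gradually, by moving $n^{\epsilon'}$ elements per update into a shadow copy, before its slack is exhausted. This is the standard global-rebuilding argument, and it is the step I expect to be the main obstacle: we must show that the slack at each level, of order $n^{k\epsilon'}$, outlasts the rebuild time, of order $n^{(k-1)\epsilon'}$ updates. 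This holds with a constant-factor slack choice. We must also show that queries consult both copies correctly while a rebuild is in progress.

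An update then changes one leaf and adjusts the prefix-count arrays along its root path. That is $h$ nodes, each with $n^{\epsilon'}$ entries, so the work is $\mathcal{O}(n^{\epsilon})$ and the time is constant.

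The queries and the remaining updates follow from the tree:
\begin{itemize}
\item \bettersc{SubStr}$(i,m)$ runs $m$ independent rank descents, each locating the $(i+t)$-th element, for total work $\mathcal{O}(mn^{\epsilon})$.
\item \bettersc{Succ}$(i)$ first locates position $i$. It then climbs the root path and, at each node, finds the first child to the right with a nonzero mark count; this is a leftmost-one computation over $n^{\epsilon'}$ bits, which takes constant time on a common CRCW PRAM with $n^{2\epsilon'}$ processors. It then descends to that leftmost marked leaf, and \bettersc{Pred} is symmetric.
\item \bettersc{Mark}$(X)$ and \bettersc{Unmark}$(X)$ locate each position of $X$ in parallel. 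They then update the mark counts on the union of the $\le h|X|$ affected root paths. Each affected prefix array is recomputed by counting, for every entry, how many elements of $X$ fall below it, which gives the $\mathcal{O}(n^{\epsilon}|X|^{2+\epsilon})$ bound; duplicate handling in $X$ requires pairwise comparisons, which accounts for the $|X|^2$ factor.
\end{itemize}

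Initialization allocates the empty tree with zero counts. Constant-time initialization with $\mathcal{O}(n^{\epsilon})$ processors requires lazy initialization, using the standard timestamped-memory trick for uninitialized arrays. The space is $\mathcal{O}(n)$ because each node's arrays have size linear in its number of children.
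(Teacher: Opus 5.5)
\textbf{There is a genuine gap: worst-case constant-time \textsc{Insert} is asserted, not proved.} This is the step you flag as the main obstacle, and it is the step that carries the lemma.

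In your final design the tree shape is fixed over $2n$ slots and logical order equals slot order. A node therefore cannot ``reserve'' slack on its own. An adversary can send every insertion to one position, and once the node containing it is full, rebuilding that node into a fresh subtree of the same shape creates no room. Room can only come from pushing elements into neighbouring subtrees, i.e.\ redistributing the parent's range. That is a packed-memory-array style density scheme: per-level thresholds $\tau_k$, with an overfull level-$k$ window triggering redistribution of its level-$(k+1)$ window.

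Your accounting compares a node's slack with the time to rebuild \emph{that node}, which is the wrong operation. The relevant comparison is the child's buffer $(\tau_k-\tau_{k+1})\, n^{k\epsilon'}$ against the $\Theta(n^{(k+1)\epsilon'})$ elements the parent redistribution must move. Beyond that, you would still have to handle:
\begin{itemize}
\item updates landing in a window while it is half-copied;
\item a higher-level window triggering while a lower-level redistribution is still running;
\item rank, \textsc{Succ} and \textsc{SubStr} queries against half-moved data.
\end{itemize}
A constant-height analogue of worst-case dense-file maintenance may well exist, but as written none of this is carried out.

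\textbf{The missing idea is to let the tree shape change.} The paper reuses the \textsc{DynamicString} tree. This is a B-tree of constant height $2\lceil 1/\epsilon\rceil$. Inner nodes hold at most $2n^{\epsilon}$ children in compact arrays with prefix sums of leaf counts, and any two adjacent siblings have at least $n^{\epsilon}$ children together. An insertion adds a leaf to one node and shifts $\mathcal{O}(n^{\epsilon})$ entries. An overflowing node is split in half at cost $\mathcal{O}(n^{\epsilon})$, which adds one child to its parent, and this cascades over only constantly many levels. Deletions merge underfull neighbours symmetrically. Because the fan-out is polynomial, each split or merge is itself a constant-time $\mathcal{O}(n^{\epsilon})$-work operation. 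So there is no slack, no shadow copy and no deamortization. The empty structure is a constant-length path, so initialization needs no lazy-memory trick.

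Marks are kept as the minimum and maximum marked position per node, as in \textsc{DynamicNeighbor}, with Lemma~\ref{lem:epsSearch} handling batches. Your \textsc{Succ}, \textsc{Pred} and \textsc{SubStr} procedures carry over to this tree essentially unchanged.

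\textbf{A smaller issue with mark counts.} Maintaining exact marked-element \emph{counts} requires exactly counting how many elements of $X$ fall into each child. That is fine for $|X|=\mathcal{O}(\log^* n)$, the only case the paper uses. It is not possible in constant time with polynomially many processors once $|X|$ is super-polylogarithmic, whereas min/max needs only comparisons and maximum-finding.
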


\subparagraph{Names}

An important task within the algorithm is assigning short names to bigger objects.
These objects are substrings from $\Sigma^m$ that represent factors/occurrences.
The names are from a domain $\Delta$.
The primary purpose of this auxiliary algorithm is to maintain a function $g$ that maps objects $x$ to names $g(x) \in \Delta$, without naming-conflicts and with $|\Delta| < |\Sigma|^m$.

This auxiliary algorithm avoids naming-conflicts so long as there are never more than $|\Delta|/k$ distinct objects that need names at the same time, called \emph{tracked objects}.
The number $k > 1$ describes how many objects an update can modify at once.
The tracked objects are stored in a set $Z$.
The auxiliary algorithm maintains a function $f$ that maps objects $x$ to $f(x) \in [1,n]$ such that, if $f(x) \leq 0$, it can \emph{forget} $x$, removing it from the tracked objects.

\AP Updates receive a multi-set $X$ of objects.
Let $\intro *\setcount {X}{x}$ be the number of occurrences of $x$ in $X$.
Updates $\textsc{Add}(X)$ first expand the tracked objects and assign new names for previously untracked objects in $X$.
Then they increase $f(x)$ for each unique $x \in X$ by $\setcount{X}{x}$.
Updates $\textsc{Sub}(X)$ decrease $f(x)$ for each unique $x \in X$ by $\setcount{X}{x}$, then forget every $x$ with $f(x) \leq 0$.

Queries retrieve the name $g(x)$ of a given object $x$, if it is a tracked object.

\begin{itemize}
	\item \AP \emph{Problem:} \intro *\dynamicNames
	\item \emph{Initialization:} Alphabet $\Sigma$, domain $\Delta$, set $Z = \emptyset$, function $f: Z \to [1,n]$, injective function $g: Z \to \Delta$, maximum size $n$ of $Z$, substring size $m$.
	\item \emph{Updates:}
	\begin{itemize}
		\item $\textsc{Add}(X):\begin{cases}
			\text{For each unique $x \in X$, if $x \in Z$, set $f(x)$ to $f(x)+\setcount{X}{x}$.}\\
			\text{If $x \notin Z$ add $x$ to $Z$ with $f(x) = \setcount{X}{x}, g(x) \in \Delta$ so that $g$ is injective.}
		\end{cases}$
		\item $\textsc{Sub(X)}:\begin{cases}
			\text{For each unique $x \in X$, if $x \in Z$, set $f(x)$ to $f(x) - \setcount{X}{x}$.}\\
			\text{Then, for each $x \in X$, if $f(x) \leq 0$, remove $x$ from $Z$.}
		\end{cases}$
	\end{itemize}
	\item \emph{Queries:}
	\begin{itemize}
		\item $\textsc{Name}(x)$: Return $g(x)$, or $\bot$ if $x \notin Z$.
	\end{itemize}
\end{itemize}

In our usage, the batches are very small with $|X| \leq k = \mathcal{O}(\log^* n)$.
The names are from the domain $\Delta = [1,nk]$, which is dependent on the maximum batch size.

\begin{lemma}\label{lem:names}
	For any $\epsilon > 0$, there is a dynamic algorithm for \dynamicNames~on a "common CRCW PRAM", where
	\begin{itemize}
		\item the size of batches is $|X| \leq k$ and the domain is $\Delta = [1,nk]$,
		\item the size of the data structure is in $\mathcal{O}(n)$,
		\item the initialization in parallel constant time on $\mathcal{O}(nk)$ processors,
		\item updates run in parallel constant time on $\mathcal{O}(2^kmn^{\epsilon})$ processors, and
		\item queries run in parallel constant time on $\mathcal{O}(mn^{\epsilon})$ processors.
	\end{itemize}
\end{lemma}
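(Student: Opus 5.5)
The plan is to store the tracked objects in a hash-free dictionary built from a trie of bounded branching depth. Names are handed out from a pool of free names in $\Delta=[1,nk]$. Throughout, every object $x\in\Sigma^m$ is treated as a string of $m$ integers from $[1,n]$.

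\textbf{Dictionary.} We use a $\lceil 1/\epsilon\rceil$-level lookup structure over each character. Each character in $[1,n]$ is split into $c=\lceil1/\epsilon\rceil$ digits of $\epsilon\log n$ bits. Nodes of the trie are arrays of size $n^{\epsilon}$, and the trie is traversed over all $cm$ digits of $x$. This descent is inherently sequential in $m$. To stay in constant time we instead identify an object by its path, and resolve the path in parallel. We assign to every trie node an identifier, and precompute for each prefix length whether the node exists. With $m n^{\epsilon}$ processors, processor $(p,d)$ checks whether the child pointer at depth $p$ for digit $d$ matches. This is the standard constant-time trick from the appendix-style algorithms used for \Cref{lem:shiftingSuccessor}.

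If a genuinely sequential descent is unavoidable, I would fall back to the following. Store the objects sorted in a \dynamicMarkedString-like structure. Find $x$ by comparing it in parallel with $n^{\epsilon}$ pivots per round, for $1/\epsilon$ rounds. Each comparison of two $m$-length strings costs $m$ processors, which yields the $\mathcal{O}(mn^{\epsilon})$ query bound. I expect this lookup to be the main obstacle: getting equality tests of $m$-length keys into constant time with only $mn^{\epsilon}$ processors. My first attempt would be the $n^{\epsilon}$-ary search tree over the sorted tracked objects, of constant height $1/\epsilon$. Its rebalancing is amortized-free, since at most $k$ keys change per update and rebuilding a node costs $n^{\epsilon}$.

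\textbf{Free names.} Initialization writes the array of free names $[1,nk]$ and an indicator for each, which uses $\mathcal{O}(nk)$ processors. On $\textsc{Add}(X)$, first find the unique untracked objects of $X$. There are at most $k$ of them, and we compare them pairwise in parallel, costing $k^2m$ processors. The $2^k$ factor in the bound is there to cover enumerating all subsets of $X$ to compute the multiplicities $\setcount{X}{x}$ and the set of distinct elements. Each distinct new object needs a distinct free name. We keep the free names in a \dynamicMarkedString{} instance (\Cref{lem:shiftingSuccessor}) over $[1,nk]$ with free names marked. The first $k$ free names are obtained by $k$ chained \textsc{Succ} calls. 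To keep this parallel, we instead maintain a precomputed buffer of the $k$ smallest free names, refreshed each update via $\textsc{Succ}$ from the last buffer entry. Then $\textsc{Unmark}$ removes the assigned names, at a cost of $n^{\epsilon}k^{2+\epsilon}$ processors.

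\textbf{Counters, invariant and costs.} The counter $f$ is updated in the dictionary entry, and on $\textsc{Sub}$ any object with $f(x)\le 0$ is deleted and its name re-marked as free. The invariant that at most $n$ objects are tracked, together with $|\Delta|=nk$, guarantees that the buffer never runs dry. The size bound is $\mathcal{O}(n)$ because the dictionary stores at most $n$ objects. Each update costs $\mathcal{O}(2^kmn^{\epsilon})$ processors and each query costs $\mathcal{O}(mn^{\epsilon})$ processors, since all steps run in constant time.
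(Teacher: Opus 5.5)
There is a genuine gap in how you hand out fresh names. An $\textsc{Add}(X)$ can introduce up to $k$ new objects, and they need pairwise distinct free names. You rightly note that $k$ chained \textsc{Succ} calls are sequential, but the buffer only moves this chain elsewhere. After an update consumes $j\le k$ buffered names, one \textsc{Succ} from the last buffer entry yields a single replacement. So you must either chain up to $k$ dependent \textsc{Succ} queries, or let the buffer drain after a few updates. Chaining is not constant time, because $k$ is a parameter of the lemma and not a constant (in the application it grows like $\log^* n$).

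The missing idea is exactly what the domain size $|\Delta|=nk$ is for:
\begin{itemize}
\item Split $\Delta$ into the $k$ blocks $[(i-1)n+1,in]$.
\item Number the distinct new objects of $X$ injectively by indices in $[1,k]$, e.g.\ by the position of their first occurrence in $X$.
\item Give the $i$-th new object $\textsc{Succ}((i-1)n)$ in a fixed-domain \textsc{DynamicNeighbor} structure over $[1,nk]$ whose marked elements are the free names (\Cref{lem:fixedSuccessor}).
\end{itemize}
If the update creates at least one new object, at most $n-1$ names are in use beforehand, so every block contains a free name. The queries are independent, run simultaneously and land in different blocks, so they return distinct names. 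One batch \textsc{Unmark} then removes them. Your argument uses $|\Delta|=nk$ only to show that enough free names exist, for which $n+k$ would already suffice; that is a sign the parallel extraction was never resolved. A fixed-domain structure also makes an $\mathcal{O}(nk)$ initialization with all names marked immediate, whereas \Cref{lem:shiftingSuccessor} starts from the empty string.

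The dictionary part also needs cleaning up. The trie paragraph does not work. The trie has depth $cm$, which is not constant. A processor cannot test ``the child pointer at depth $p$'' without knowing which node the search occupies at depth $p-1$. Naming that node by its prefix would require a dictionary over prefixes, which is the very problem being solved.

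Your fallback is essentially the paper's route through \textsc{DynamicFunction} (\Cref{lem:searchTree}). It is a sorted tree of fan-out roughly $n^{\epsilon}$ and height $\mathcal{O}(1/\epsilon)$, in which $x$ is compared against the children's boundary keys with $\mathcal{O}(m)$ processors per lexicographic comparison (first mismatch via \Cref{lem:linSearch}). However, the sentence ``rebuilding a node costs $n^{\epsilon}$'' hides its technical core. You must insert up to $k$ new keys into a sorted child array, and up to $k$ split-off siblings into the parent's array, in constant time, without exact prefix sums.

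The paper uses padded arrays of size $2n^{\epsilon}\log n$ and consistent approximate prefix sums (\Cref{lem:approxPrefixSum}). Since your batches are tiny, you could instead sort $X$ first. Each old child then computes its exact shift as its rank among the at most $k$ sorted new keys, at $\mathcal{O}(km)$ processors per child. Either way, this step has to be argued rather than asserted, together with batch deletions and the redistribution that follows.
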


This concludes the preliminary definitions.
Next we describe our "LCE" algorithm.
We first define the data structure in \cref{sec:data_structure}.
Then we describe how to query it in \cref{sec:queries}.
Finally we explain how to update the data structure in \cref{sec:updates}.

\section{Data Structure}
\label{sec:data_structure}
We describe how the algorithm stores its information.
It uses several instances of the auxiliary problems described in \Cref{subsec:auxiliary}.
\Cref{tab:data} gives a brief overview of the data structures we will define in this section for easy reference.

\begin{table}[ht]
	\begin{center}
		\begin{tabular}{|c|c|l|}
			\hline
			$\calS $&\dynamicMarkedString&current string, marks unused\\
			$\calR $&compact sorted array&record of recent changes\\ \hline
			
			\multicolumn{3}{l}{for $\tau = 2^z, z \in \mathbb{N}, \tau \in [1,n]$:}\\ \hline
			$\calBtau $&\dynamicMarkedString&delayed string, marks occurrences in $(B^{\tau},f^{\tau})$\\
			$\calNBtau $&\dynamicNames&names used by $(B^{\tau},f^{\tau})$\\ \hline
			
			\multicolumn{3}{l}{for $\ell \in \{1,\log n \}$:}\\ \hline
			$\calSl $&\dynamicMarkedString&delayed string, marks factors in $\Dell $\\
			$\calDl $&\dynamicMarkedString&sequence of factors names in $\Dell $\\
			$\calNDl $&\dynamicNames&names used by $\Dell $\\ \hline
		\end{tabular}
	\end{center}
	\caption{Data Structures used by the algorithm. Only $\calS $ and $\calR $ are kept fully up-to-date.}
	\label{tab:data}
\end{table}

\subparagraph{Up-to-date Information}

The first data structure stores the current string without delay in an instance $\intro *\calS $ of \dynamicMarkedString.
It does not use the marks.
Most of the other data structures represent the string as it was some number of updates ago.

The algorithm uses a record of the most recent updates to resolve outdated information.
The record $\intro *\calR $ is a compact sorted array that stores the most recent $\log n \log^* n$ updates.
The entries in $\calR $ store either that a character was inserted at some index or that at least one character was deleted between two indices.
The entries are sorted by index.
Each entry also stores the age of that change.

\subparagraph{String Synchronizing Sets}
The algorithms uses a ""synchronizing sets hierarchy"" where the length of the substrings doubles with each level.

For each $\tau = 2^z, z \in \mathbb{N}, \tau \in [1,n]$, it stores a "synchronizing set" $(B^{\tau}, f^{\tau})$ in an instance $\intro *\calBtau $ of \dynamicMarkedString, with some delay.
The underlying string in $\calBtau $ is the delayed string at that level.
It marks every index $i \in B^{\tau}$ and annotates them with the name $f^{\tau}(i)$.
It maintains forward- and backward-links for every index in $\calBtau $ to the corresponding index in the current string $\calS $.
It supplements the "synchronizing sets" with an instance $\intro * \calNBtau $ of \dynamicNames~to maintains the "names" used by $B^{\tau}$.

\subparagraph{Decompositions}
The algorithm computes the "synchronizing sets" via a ""decompositions hierarchy"" of decompositions $\Dell $ for each $\ell \in \{1,\ldots, \log n\}$.

The algorithm maintains several properties for the decompositions $\intro *\Dell $ that mirror the properties of "string synchronizing sets".
The decompositions are consistent in the sense that, if a big enough substring around a factor also occurs somewhere else, there is a copy of that factor at that second occurrence as well.
It furthermore maintains upper and lower bounds for the size of the factors.
We prove these properties at a later point.

\begin{lemma}\label{lem:factorSize}
	Let $u$ be a factor in the decomposition $\Dell $ at level $\ell$.
	\begin{enumerate}
		\item If there are no string $w$ and $z \geq 2$ with $u = w^z$, then $|u| \leq 2^{\ell+1}$.
		\item If $u$ and $v$ are adjacent factors in $\Dell $, then $|u| + |v| \geq 2^{\ell+1}$.
	\end{enumerate}
\end{lemma}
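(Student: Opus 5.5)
Both claims are proved together by induction on the level $\ell$, following the construction of $\Dell$ from $D^{\ell-1}$ that the paper uses (after Kociumaka et al.\ and Kempa and Kociumaka). A stage first collapses every maximal run of identical factors $v^z$ with $z\ge 2$ into one factor. It then merges each remaining factor with a constant number of neighbours according to the local, consistent rule. We make this rule concrete enough for the induction: it only groups \emph{short} factors, namely those of length below $2^{\ell}$; every short factor is merged with at least one adjacent factor; and each merge group contains at most a constant number of factors. The base case $\ell=0$ (the singleton decomposition) gives $|u|=1\le 2$. For adjacent factors it gives $|u|+|v|=2\ge 2^{1}$. So both items hold trivially there, and we index the hierarchy so that level $\ell$ is produced from level $\ell-1$.

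\textbf{Upper bound (item 1).} Let $u\in\Dell$ be a factor that is not a proper power. Then $u$ is not the collapse of a run, since such factors are exactly of the form $w^z$ with $z\ge 2$. Hence $u$ arises from the merge step, as the concatenation of a group of factors of $D^{\ell-1}$, or possibly of run-collapsed factors. The plan is to show that only factors shorter than $2^{\ell}$ take part in a merge, or that a factor that is long (or a power) is never glued to anything. Each such piece then has length $<2^{\ell}$. The merge rule must also be designed so that the total length of a group stays below $2^{\ell+1}$. The cleanest way is to require that a group consist of one short factor plus neighbours whose combined length is small, for instance that only pairs or triples of short factors are merged and that a merged group is cut as soon as it reaches length $2^{\ell}$. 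I expect this to be the main obstacle. A constant number of factors each of length $<2^{\ell}$ only gives a bound of $c\cdot 2^{\ell}$. To obtain exactly $2^{\ell+1}$, I would either invoke the specific rule of Kempa and Kociumaka, which stops merging once a factor reaches length $\ge 2^{\ell}$ and bounds any merged block by two such halves, or argue that a group consists of one "anchor" of length $<2^{\ell}$ together with pieces whose total length is $<2^{\ell}$.

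\textbf{Lower bound (item 2).} Suppose $u,v$ are adjacent in $\Dell$ with $|u|+|v|<2^{\ell+1}$. Then at least one of them, say $u$, has length $<2^{\ell}$. If $u$ were a run collapse or came from a merge, we would have to show its length is already $\ge 2^{\ell}$, or else that it would have been merged further. This follows from the invariant that every short factor is merged with at least one neighbour. The induction hypothesis $|u'|+|v'|\ge 2^{\ell}$ for adjacent factors of $D^{\ell-1}$ then shows that any merged group of at least two factors has length at least $2^{\ell}$. The remaining case is a factor of $\Dell$ that was not merged. By the invariant such a factor is long, i.e.\ already of length $\ge 2^{\ell}$, or it is a run collapse $w^z$. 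A run collapse is either long, or it is short and then is merged by the rule, so it is not the surviving short factor. In all cases $|u|\ge 2^{\ell}$. Applying the same argument to $v$, or using that $u$ was absorbed into its neighbour, gives $|u|+|v|\ge 2^{\ell+1}$. The invariant "no two adjacent short factors survive a stage" is established in the same way: the rule merges every short factor with a neighbour, so any two adjacent factors cannot both be unmerged short pieces.
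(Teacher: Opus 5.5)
Your overall frame matches the paper's: an induction on the level, in which each stage collapses runs and then merges locally. But the proposal has a genuine gap, and it comes from the merge rule you postulate. For item 1 you never reach $2^{\ell+1}$. As you note, a constant number of pieces of length below $2^{\ell}$ only gives $c\cdot 2^{\ell}$, and neither repair you suggest is carried out. Your stopping rule resembles the paper's deactivation of factors longer than $2^{\ell}$, but deactivation only freezes factors that are already long; it does not cap the length of a group. Your worry is well founded. The paper's own sketch only says that active factors, which have length at most $2^{\ell}$ by \Cref{lem:prep}, are grouped two to four at a time by \Cref{lem:coinFlipping}. By itself that gives $4\cdot 2^{\ell}$ rather than $2^{\ell+1}$. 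More seriously, your item 2 rests on the invariant that every short factor is merged with a neighbour, while your item 1 rests on long factors and powers never being glued to anything. These clash when a short factor lies between two long factors, e.g.\ two collapsed runs $w^z$ of unbounded length. Leaving it alone violates the invariant. Gluing it on violates your premise for item 1 and can create a non-power of unbounded length. So your conclusion that every factor of $D^{\ell}$ has length at least $2^{\ell}$ is unsupported in exactly this case.

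The missing idea is the paper's preparation step, which handles short factors with an explicit length budget and then freezes long factors.
\begin{itemize}
\item There a factor is short if it is shorter than $2^{\ell-1}$, so by item 2 at level $\ell-1$ no two short factors are adjacent.
\item A short factor $u$ is merged into its left, otherwise its right, neighbour (which may be long) only if the result has length at most $2^{\ell+1}$. This gives item 1 for these merges outright.
\item If both tests fail, the failures themselves say $|u|+|v|>2^{\ell+1}$ for both neighbours $v$, which is item 2 for those pairs. These neighbours then have length greater than $2^{\ell}$, so they are deactivated and never merged again, and $u$ survives alone (\Cref{lem:prep}, item 3).
\item Deactivating every factor longer than $2^{\ell}$ also keeps long powers intact, which matches the exemption in item 1.
\item By \Cref{lem:coinFlipping}, every active factor not isolated between deactivated factors is grouped with a neighbour. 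Its group therefore contains an adjacent pair of $D^{\ell-1}$ and has length at least $2^{\ell}$ by the induction hypothesis.
\end{itemize}
Rather than insisting that every short factor be merged, the argument lets a short factor stay alone precisely when its neighbours are already long enough. That escape clause is what your rule lacks.
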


The instance $\intro *\calSl $ of \dynamicMarkedString~stores the delayed string.
It marks every first index of a factor in $\Dell $ and annotates them with the name of the factor.
It maintains links between each index in $\calSl $ and the corresponding index in the current string $\calS $.
	
The instance $\intro *\calDl $ of \dynamicMarkedString~also stores the decomposition.
Unlike $\calSl $, the underlying string is \emph{not} the string for \dynamicLCE~but the sequence of factor names in the decomposition $\Dell $.
It marks every factor that has a different name than the factor before it.
It maintains links between each factor name in $\calDl $ and its starting index in $\calSl $.

It uses the instance $\intro *\calNDl$ of \dynamicNames~to maintain consistent factor names for $\Dell $.

\section{Queries}
\label{sec:queries}
We first show how to answer substring equality queries using a logarithmic "synchronizing sets hierarchy".
Then we explain how to handle delayed "synchronizing sets".
We conclude by showing that "LCE" queries can be answered with substring equality queries.

\subparagraph{Substring Equality}
\label{subsec:queries:substring}
We first show that we can answer substring equality queries if fully up-to-date "string synchronizing sets" $(B^{\tau},f^{\tau})$ for each $\tau = 2^z, z \in \mathbb{N}, \tau \in [1,n]$ are available.
For now, we assume that we also have, for each $(B^{\tau},f^{\tau})$, successor- and predecessor-links which point, for each index of $S$, to the closest occurrence in $B^{\tau}$.
\begin{lemma}\label{lem:queryNoMuddling}
	There is a parallel constant-time algorithm that, given a string $S$ of size $n$ and integers $i,j,m$, decides if $S[i,i+m) = S[j,j+m)$ holds on a "common CRCW PRAM" with $\mathcal{O}(\log n)$ processors.
	For each $\tau = 2^z, z \in \mathbb{N}, \tau \in [1,n]$, the algorithm has access to a "string synchronizing set" $(B^{\tau},f^{\tau})$ and successor- and predecessor-links for $B^{\tau}$.
\end{lemma}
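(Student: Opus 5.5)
The plan is to compare canonical coverings of the two segments, using one processor per level $\tau = 2^z \le m$. That gives $\mathcal{O}(\log n)$ processors, each doing constant work, and the final result is combined with a single concurrent-write AND (every processor that sees a mismatch writes ``false'' to a common cell, which the common CRCW model allows).

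\textbf{Front part of the covering.} For the segment $S[i,i+m)$ and a level $\tau$, the processor follows the successor-link from $i$ (inclusive) to get $s_\tau = \min(B^\tau \cap [i,n])$. By "density", if $i$ is not inside a periodic segment of size $\frac32\tau$ with period at most $\frac12\tau$, then $s_\tau \in [i, i+\frac12\tau)$. Hence the occurrence $[s_\tau, s_\tau+\tau)$ lies inside the segment as long as $\frac32\tau \le m$. Since $s_{2\tau} < i+\tau \le s_\tau + \tau$ and $s_1 = i$, consecutive levels overlap, so $\bigcup_{\tau \le \tau^*}[s_\tau,s_\tau+\tau)$ covers $[i, s_{\tau^*}+\tau^*)$, where $\tau^*$ is the largest admissible level. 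Here $\tau^* > m/3$.

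\textbf{Back part of the covering.} Symmetrically, predecessor-links from $i+m-\tau$ give the last occurrences $e_\tau$ with $e_\tau + \tau \le i+m$. These cover the suffix down to $e_{\tau^*}$. Because $\tau^* > m/3$ and the front and back windows at level $\tau^*$ each reach past the middle region, the two halves together cover the whole segment. If needed, I will also use the next smaller level to close the middle gap; checking this arithmetic carefully is routine.

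\textbf{Comparison and correctness.} The same coverings $t_\tau$ and $e'_\tau$ are computed for $S[j,j+m)$. Processor $\tau$ then checks $s_\tau - i = t_\tau - j$, $e_\tau - i = e'_\tau - j$, and $f^\tau(s_\tau) = f^\tau(t_\tau)$, $f^\tau(e_\tau) = f^\tau(e'_\tau)$.

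\emph{Soundness.} By "names", equal names mean equal substrings of length $\tau$. Matching offsets then place the covering pieces in the same relative positions, so every position of the segment is certified equal.

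\emph{Completeness.} Suppose $S[i,i+m) = S[j,j+m)$. For each $k$ with $[k,k+\tau) \subseteq [i,i+m)$, we have $S[k,k+\tau) = S[k-i+j, k-i+j+\tau)$. By "consistency", $k \in B^\tau$ iff $k-i+j \in B^\tau$. So the first (respectively last) fully contained occurrences correspond at the same offsets and carry equal names. To make this work I restrict $s_\tau$ to occurrences fully contained in the segment, i.e.\ I compare the fully contained ones rather than the raw successors. Then the correspondence is exact, and an occurrence present for one segment but absent for the other already signals inequality.

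\textbf{Main obstacle: the periodic exception to density.} At some level, the front search may find no occurrence in $[i,i+\frac12\tau)$. That detection is itself cheap: it is just a successor landing too far. In that case $i$ lies in a run with period $p \le \frac12\tau$. I would first try to extract $p$ from the already-checked lower levels. Within the run, the occurrences at the largest level $\tau' \ge 2p$ that still has elements there must repeat with shift $p$, so $p$ is the difference between two consecutive equally-named occurrences. I would then reduce the comparison on that stretch to three checks:
\begin{enumerate}
	\item equality of a length-$2p$ prefix, which is already covered by the lower levels;
	\item a check that both segments have the same extent of $p$-periodicity within the window;
	\item resuming the covering after the run ends, where density applies again.
\end{enumerate}
Point 2 is the part I am least sure can be done in constant time with only links. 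If it fails, I would argue instead that whenever the run covers the gap, the lower-level occurrences already cover the gap region, because they are spaced by $p$ and have length at least $2p$, so the level-$\tau$ piece can simply be skipped.
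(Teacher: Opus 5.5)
Your non-periodic case is the paper's argument: one processor per level, the first and last fully contained occurrence via the links, comparison of offsets and names, and consistency for completeness. The gap is soundness when the density exception fires, which is exactly where the paper's proof has its content.

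Your soundness step (``every position of the segment is certified equal'') requires the selected occurrences to cover the segment. In the periodic case they do not. Point~2 of your plan is left open, and the fallback is false: the algorithm keeps only the first and last occurrence per level, never a $p$-spaced chain of occurrences across the run. Concretely, let $S[i,i+m)=a^m$. For every $\tau$ above a constant multiple of $\log^* m$, consistency forces $B^{\tau}$ to contain all or none of the $\Theta(m)$ positions $k$ with $S[k,k+\tau)=a^{\tau}$, and sparseness rules out ``all''. So the selected occurrences cover only $\mathcal{O}(\log^* m)$ characters at each end, and your coverage argument says nothing about the interior.

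The idea you are missing is that neither the period nor the extent of periodicity has to be computed. In the paper, when the first occurrence at level $\tau$ lands beyond $i+\frac{\tau}{2}$, that level only checks that $S[j,j+m)$ likewise has no occurrence in the corresponding window; your offset comparison already does this. By density, this absence itself forces a periodic segment on the $j$-side. The period is certified by a lower level $\tau'<\tau$ whose first occurrence lies inside the periodic prefix and is at least as long as the period. Such a level exists because for $\tau'$ comparable to $p$ the exception cannot fire. The length of the periodic prefix is certified by the higher level whose first occurrence straddles the end of the run. Soundness must be argued from these three roles jointly, not from coverage.

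Separately, your middle-gap claim is wrong as stated. You only know that the front reaches $i+\tau^*$ and the back starts at $i+m-\tau^*$. This leaves a gap whenever $\tau^*<\frac{m}{2}$ (e.g.\ $m=11$, $\tau^*=4$). Smaller levels cannot close it, because their selected occurrences sit closer to the ends. The paper adds one successor query from $i+\frac{\tau^*}{2}$ (and the mirrored predecessor query) at the top level. These reach $i+\frac{3}{2}\tau^*$ and $i+m-\frac{3}{2}\tau^*$ respectively, which suffices because $\tau^*>\frac{m}{3}$.
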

\emph{Sketch.}
The algorithm first covers $S[i,i+m)$ with logarithmically many occurrences from the "synchronizing sets hierarchy".
It uses the successor- and predecessor-links to find, for each level $\tau$, the first and last occurrence at that level that lies entirely in $S[i,i+m)$.
This produces a covering of $S[i,i+m)$, unless there are periodic prefixes or suffixes, which require some special handling.
Then the algorithm looks at $S[j,j+m)$ and, if it can find the same covering there, concludes $S[i,i+m) = S[j,j+m)$.
See the Appendix, \Cref{sec:queryNoMuddling} for a proof.

\subparagraph{Accounting for Delay}
\label{subsec:queries:muddling}

As mentioned above, the algorithm has to deal with the complication that it does not maintain fully up-to-date "string synchronizing sets".
Indeed there may be $\log n \log^* n$ positions where characters were recently inserted or deleted for which its information is outdated.
Our solution is to cut the query at these positions into sub-queries which can be answered with the available information.

\begin{lemma}\label{lem:queryMuddling}
	For any $\epsilon > 0$, there is a parallel constant-time algorithm that, given the information described in \Cref{sec:data_structure} and integers $i,j,m$, decides if $S[i,i+m) = S[j,j+m)$ holds on a "common CRCW PRAM" with $\mathcal{O}(n^{\epsilon})$ processors.
\end{lemma}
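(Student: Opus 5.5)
The plan is to reduce the query to polylogarithmically many instances of \Cref{lem:queryNoMuddling}, each answered on a single, suitably delayed version of the string.

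\textbf{Step 1: locate the outdated positions.} The record $\calR$ holds at most $\log n \log^* n$ recent changes, sorted by index and annotated with their age. Using $\calR$, I determine for each level $\tau$ the set $P^{\tau}$ of positions in the current string $\calS$ whose information in $\calBtau$ is outdated. These are the positions touched by the updates not yet propagated to that level. They are obtained from the entries of $\calR$ whose age exceeds the delay of level $\tau$, namely roughly $\log\tau \cdot \log^* n$ updates.

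Since every $|P^{\tau}| \le \log n\log^* n$, I take the union $P = \bigcup_\tau P^{\tau}$. This gives $\mathcal{O}(\log n \log^* n)$ cut points relative to the current string, because all levels share the same recent updates. I then cut $[i,i+m)$ at every $p \in P$ and at every $p-i+j$ with $p \in P$. Symmetrically, I cut $[j,j+m)$ at every $p$ and at $p-j+i$. This splits both segments into aligned pieces that correspond one-to-one.

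Each piece has one of two types. A piece is either a single cut-out position, which I compare na\"ively in $\calS$ with constant work. Otherwise it is a maximal pair of segments containing no outdated position at all. There are $\mathcal{O}(\log n\log^* n)$ pieces, and the query answer is the conjunction of all piece answers. A common CRCW PRAM computes this AND in constant time.

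\textbf{Step 2: answer one clean pair on delayed data.} For a pair of clean segments, I use the forward and backward links of $\calBtau$ to map both segments into the delayed string of level $\tau$. Because no outdated position lies inside, each mapped segment is still contiguous and carries the same characters there.

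I then run the algorithm of \Cref{lem:queryNoMuddling}. It needs successor- and predecessor-links, and these I implement through \textsc{Succ} and \textsc{Pred} queries on $\calBtau$. By \Cref{lem:shiftingSuccessor}, each such query costs $\mathcal{O}(n^{\epsilon})$ processors. Names are read from the annotations on the marks.

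The total cost is polylog pieces $\times$ $\mathcal{O}(\log n)$ levels $\times$ $\mathcal{O}(n^{\epsilon})$, which is $\mathcal{O}(n^{\epsilon'})$ processors after rescaling $\epsilon$. Computing the cut points from $\calR$ is also cheap, since the record has polylogarithmic size.

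\textbf{Main obstacle.} The difficulty is justifying that a clean segment can be handled with the delayed synchronizing set. \Cref{lem:queryNoMuddling} uses, at level $\tau$, occurrences lying entirely inside the segment, and their consistency refers to the \emph{delayed} string. So I must argue that the canonical covering of a clean segment, computed in the delayed string, depends only on characters that are current, that is, identical in both versions.

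My first approach is to restrict each level to occurrences fully contained in the mapped segment. Their content equals the current content, so consistency and names transfer. For density, I would additionally cut a margin of $\mathcal{O}(\tau)$ around each outdated point at level $\tau$. Alternatively, I would observe that \Cref{lem:queryNoMuddling} only needs coverings, so the segment can be handled by the lower, less-delayed levels near the cut points.

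If this margin argument fails, the fallback is to compare, per level, the pieces using only levels whose delay is zero relative to that piece. This works since lower levels are fresher, and it costs another polylogarithmic factor.
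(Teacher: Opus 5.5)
Your proposal is correct and follows the paper's proof. Both cut the two segments, aligned, at the entries of $\calR$ into $\mathcal{O}(\log n\log^* n)$ clean sub-queries, answer each with \Cref{lem:queryNoMuddling} via the links into $\calBtau$ and \textsc{Succ}/\textsc{Pred} queries costing $\mathcal{O}(n^{\theta})$ processors each, and compare the cut-out positions na\"ively. The paper uses no margins: \Cref{lem:queryNoMuddling} only uses occurrences lying entirely inside the segment and already tolerates arbitrary context outside it. Two harmless slips to fix: the not-yet-propagated updates are those whose age is \emph{below} the level's delay (moot once you take the union, which is all of $\calR$), and a cut at $p\in[j,j+m)$ corresponds to $p-j+i$ in $[i,i+m)$, not $p-i+j$.
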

\begin{proof}
	The algorithm first splits the query into sub-queries so that no sub-query contains recently changed indices.
	It creates a sub-query for the prefixes of $S[i,i+m)$, $S[j,j+m)$ up to but not including the first recorded change in either segment.
	Then, for every entry in the record $\calR $, if the recorded change lies within $S[i,i+m)$, it creates a further sub-query that starts at the first unaffected index after that change and ends at the next change in either $S[i,i+m)$ or $S[j,j+m)$.
	It similarly creates sub-queries for changes in $S[j,j+m)$.
	
	This results in $\mathcal{O}(\calR ) = \mathcal{O}(\log n \log^* n)$ sub-queries, which are answered in parallel using \Cref{lem:queryNoMuddling}.
	Since the structures $\calBtau $ are delayed, it is possible that the same character occupies different indices in $\calS $ and $\calBtau $, even if this position itself was not changed recently.
	The algorithm uses the links between each position in the current string $\calS $ and their corresponding positions in the delayed structures $\calBtau $ to match shifted indices.
	
	The complexity of each sub-query is given in \Cref{lem:queryNoMuddling}, but there is one caveat.
	In \Cref{lem:queryNoMuddling}, we assumed that successor- and predecessor-links are precomputed and accessible with constant work, which is not true here.
	Instead the algorithm retrieves them from $\calBtau $, which requires $\mathcal{O}(n^{\theta})$ processors per query for any $\theta > 0$ (\Cref{lem:shiftingSuccessor}).
	For any given $\epsilon > 0$, we can choose $\theta$ such that the total number of processors overall is $\mathcal{O}(n^{\epsilon})$.
	
	Finally, alongside the sub-queries for unchanged segments, the algorithm checks single indices which were recently inserted and lie within $S[i,i+m)$ or $S[j,j+m)$ and compares them na\"ively.
	There are at most $\log n \log^* n$ recently inserted indices.
\end{proof}

\subparagraph{LCE Queries}
\label{subsec:queries:LCE}
The final step reduces "LCE" queries to substring equality queries.

\begin{lemma}\label{lem:LCE}
	For any $\epsilon > 0$, there is a parallel constant-time algorithm that, given the information described in \Cref{sec:data_structure}, answers "LCE" queries on a "common CRCW PRAM" with $\mathcal{O}(n^{\epsilon})$ processors.
\end{lemma}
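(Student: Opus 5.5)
We reduce an "LCE" query to substring equality queries answered by \Cref{lem:queryMuddling}, using a constant-round, $n^{\delta}$-ary search in place of binary search. We treat $\textsc{Lcp}(i,j)$; $\textsc{Lcs}(i,j)$ is symmetric, comparing suffixes $S(i-m,i]$ and $S(j-m,j]$, which \Cref{lem:queryMuddling} handles after shifting indices. The answer lies in $[0,M]$ with $M = |S| - \max(i,j) + 1 \le n$, and the predicate $P(m) :\Leftrightarrow S[i,i+m) = S[j,j+m)$ is monotone: if $P(m)$ holds then $P(m')$ holds for all $m' \le m$. So $\textsc{Lcp}(i,j)$ is the largest $m$ with $P(m)$.

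Fix $\delta>0$ and set $c = \lceil 1/\delta\rceil$, a constant. We run $c$ rounds, maintaining an interval $[lo,hi]$ that contains the answer, with $P(lo)$ true. Initially $lo=0$ and $hi=M$. In each round we split $[lo,hi]$ into $n^{\delta}$ equally spaced candidates $lo = m_0 < m_1 < \dots < m_{n^\delta} = hi$. In parallel, we evaluate every $P(m_t)$ with \Cref{lem:queryMuddling}, using $\mathcal{O}(n^{\theta})$ processors per evaluation. We then find the largest $t$ such that $P(m_t)$ holds. This is done in constant time on a common CRCW PRAM: processor $t$ writes a flag exactly when $P(m_t)$ is true and $P(m_{t+1})$ is false, and by monotonicity only one such $t$ exists. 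If $P(hi)$ itself is true, the answer is $hi$ and we stop. Otherwise we set $lo=m_t$ and $hi=m_{t+1}$.

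Each round shrinks the interval length by a factor of about $n^{\delta}$. After $c$ rounds the length is therefore $\mathcal{O}(1)$, and a final round that tests every remaining value finishes the search. The total time is constant. The number of processors is $\mathcal{O}(n^{\delta} \cdot n^{\theta})$, so choosing $\delta=\theta=\epsilon/2$ gives $\mathcal{O}(n^{\epsilon})$.

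The main point needing care is not the search itself but its interaction with the delayed data structure. Every probe $P(m_t)$ must use the same snapshot of $\calR$ and the links to $\calS$. This is automatic, because queries do not modify the data structure. \Cref{lem:queryMuddling} already handles segments that cross recently changed positions, so no extra case analysis is needed. The remaining details are routine: rounding the spacing of the candidates (taking $m_t = lo + \lceil t(hi-lo)/n^{\delta}\rceil$) and handling boundary indices so that each queried segment stays inside $S$.
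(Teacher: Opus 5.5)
Your proposal is correct and follows the paper's own approach: an $n^{\delta}$-ary search over candidate lengths with a constant number of rounds, where each round answers all probes in parallel as substring-equality queries via \Cref{lem:queryMuddling}. Your explicit split of the processor budget ($\delta=\theta=\epsilon/2$) and the constant-time selection of the unique true/false transition fill in details that the paper's short proof leaves implicit.
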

\begin{proof}
	The algorithm finds the length of the "LCE" via several substring equality queries with a binary-search-like approach, with arity $n^{\epsilon}$ instead of two.
	Starting with the search range of all lengths $[1,n]$, it splits the search range into $n^{\epsilon}$ same-sized segments at each step.
	It formulates one substring equality query per segment to determine which segment contains the "LCE" and becomes the next search range.
	This procedure terminates after $\frac{1}{\epsilon}$ rounds.
\end{proof}
The information described in \Cref{sec:data_structure} suffices to answer "LCE"-queries.
It remains to show that the algorithm can maintain that information in parallel constant time on $\mathcal{O}(n^{\epsilon})$ processors.

\section{Updates}
\label{sec:updates}
We first explain how the algorithm spreads the processing of one change across the following updates.
Then we take a look at the actions necessary to process an update.

\subparagraph{Delayed Processing}
The decompositions $\Dell $ are arranged bottom-up along levels $\ell \in \{1,\ldots,\log n\}$.
Updates traverse these levels in ascending order.
The processing of an update is split into $\log n\log^* n + 1$ actions $A(k,\ell,i)$, where $k$ counts the updates, $\ell  \leq \lceil \log n \rceil$ the levels, and $i \leq \lceil \log^* n \rceil$ the actions on a level.
The actions of the $k$-th update are $\{ A(k,0,0) \} \cup \{ A(k,\ell, i) \mid \ell \in [1,\log n], i \in [i, \log^* n] \}$.
Actions are ordered lexicographically, i.e.~$A(k,\ell,i) < A(k,\ell',i')$ if $\ell < \ell'$ or $(\ell = \ell') \land (i < i')$.
All of these actions processed sequentially in ascending order compute the changes that result from the $k$-th update.

The first action $A(k,0,0)$ updates the current string $\calS $ and the record $\calR $.
The actions $A(k,\ell,1),\ldots,A(k,\ell,\log^* n)$ together calculate the changes to level $\ell$ of the decompositions hierarchy. They also update the "synchronizing sets" that belong to that level.

The algorithm only processes the very first action $A(k,0,0)$ immediately.
Then, for each subsequent update, it advances the processing of this update by one further action.
When update $k$ arrives, the algorithm thus processes the following actions in parallel (if they exist):
\[
\begin{array}{rlclcll}
	A(&k&,&0&,&0&),\\
	A(&k-1&,&1&,&1&),\\
	\vdots&&&&&&\\
	A(&k-\log^* n&,&1&,&\log^* n&),\\
	\vdots&&&&&&\\
	A(&k - \log n\log^* n&,& \log n&,& \log^* n&).
\end{array}
\]

\subparagraph{Affected and Relevant Segment}
For a decomposition $\Dell $ and an update that affects index $i$, the "affected segment" is the sequence of factors $v_j,\ldots,v_{j+k}$ in $\Dell $ that might be affected by that update.
It is the part of the $\Dell $ that has to be recalculate for an update at index $i$.
An important prerequisite of our algorithm is that this "affected segment" is small.

The "relevant segment" is the union of all factors that influence a factor in $v_j,\ldots,v_{j+k}$.
They are the factors the algorithms has to know to recalculate the "affected segment".
\begin{lemma}\label{lem:segments}
	Let $\Dell  = v_1,\ldots,v_m$ be the decomposition at level $\ell$, let $i$ be some index and let $v_j$ be the factor that contains the $i$-th character.
	The ""affected segment"" for $i$ in $\Dell $ is $v_{j-\log^* n-9},\ldots,v_{j+5}$.
	The ""relevant segment"" is $v_{j-\log^* n - 14},\ldots,v_{j+\log^* n + 14}$.
\end{lemma}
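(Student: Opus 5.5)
The bounds come from the locality of the level-by-level merging rule (the Kempa--Kociumaka rule adapted here), so I would prove the lemma by induction over the levels $\ell$. The construction of $\Dell$ from $D^{\ell-1}$ runs in two phases. First, runs of identical factors are collapsed. Second, a local consistent rule decides merges; I take it to be a deterministic coin-tossing style rule, which is why $\log^* n$ appears. I would first fix, for each phase, its \emph{dependency radius}: the number of factors of $D^{\ell-1}$ on each side of a factor $u$ that determine whether and how $u$ is merged. The key quantities are the following.
\begin{itemize}
\item The run-collapsing phase depends on a constant number of neighbours on each side, since to decide whether $u$ begins, ends, or lies inside a run, it compares $u$ with its immediate neighbours.
\item The coin-tossing phase needs about $\log^* n$ iterations to reduce labels to a constant alphabet. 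Each iteration looks one further factor to the left, giving a left radius of $\log^* n + O(1)$. The final local-minimum/maximum selection and the grouping of at most a constant number of factors add a constant radius on both sides.
\end{itemize}
Adding these radii should reproduce the constants: $\log^* n + 9$ to the left and $5$ to the right for the affected segment.

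The core step is to argue which factors of $\Dell$ can change. Take a factor $v$ of $\Dell$ whose entire dependency window in $D^{\ell-1}$ avoids the factors of $D^{\ell-1}$ that were themselves altered by the update at $i$. Then $v$ is produced identically before and after the update. This is exactly the consistency property used for synchronizing sets, applied to the local rule. The asymmetry of the affected segment (more factors to the left) follows because the coin-tossing labels of a factor depend on its left context. A change at position $i$ therefore propagates rightwards only a constant distance, but influences decisions up to $\log^* n + O(1)$ factors to its left. One caveat: a naive induction would let the affected segment grow by the radius at every level. I would avoid this with the size bound of \Cref{lem:factorSize}, part 2. Adjacent factors at level $\ell$ have combined length at least $2^{\ell+1}$, so a window of $O(\log^* n)$ level-$(\ell-1)$ factors is absorbed into $O(\log^* n)$ level-$\ell$ factors. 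This counting is carried out in terms of $\Dell$ indices around the factor $v_j$ containing $i$, rather than by composing segments across levels.

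The relevant segment then follows by adding the dependency radius once more to both ends of the affected segment. To recompute $v_{j-\log^* n-9},\ldots,v_{j+5}$, the algorithm must see each of their windows, namely about $5$ further factors to the left and $\log^* n + 9$ further factors to the right. This yields $v_{j-\log^* n-14},\ldots,v_{j+\log^* n+14}$.

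The main obstacle I expect is the run-collapsing phase. A single insertion can lengthen or split a long run $w^z$, which by \Cref{lem:factorSize}, part 1, is not length-bounded. I would handle this by noting that a run is one factor, so changing it still touches only $O(1)$ factors in terms of factor indices. The concrete constants ($9$, $5$, $14$) require carefully tracing the exact rule, which I would state explicitly before counting.
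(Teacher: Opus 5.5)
Your overall scheme is the paper's: trace the window on which one merge decision depends backwards through the steps of a level, mirror it to get the affected segment, and widen by that window once more to get the relevant segment. Your final paragraph on the relevant segment matches the paper's computation exactly.

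The middle step, adding per-phase radii, does not work as you describe it, and the idea that makes it work is missing. The coin-flipping window of \Cref{lem:coinFlipping} is measured in the string of names produced by the preparation, while the lemma counts factors of the lower decomposition. Collapsing a run turns arbitrarily many lower-level factors into a single name, and so does absorbing a short factor (a preparation step your two-phase description omits). So $\log^* N+5$ consecutive names can span an unbounded number of factors, and the radii do not add. The paper resolves this through deactivation. By \Cref{lem:factorSize}, every factor created by a preparation merge exceeds the deactivation threshold, so its name becomes $-1$. By Properties~\ref{enum:coinFlip:localforce} and~\ref{enum:coinFlip:lowerforce} of \Cref{lem:coinFlipping}, a $-1$ shields everything behind it and has its own bits fixed. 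Hence only the first preparation-merged factor on each side matters. Up to it, names and factors correspond one-to-one, and recognizing that merge costs one extra neighbor on each side. Your main-obstacle paragraph addresses the wrong quantity: the unbounded length of a run $w^z$ is irrelevant for a count in factor indices. Observing that a run is one factor does not address the compression of indices.

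Second, your explanation of the asymmetry is inverted. If a factor's decision depended on $\log^* n+O(1)$ factors of \emph{left} context, a change at $i$ would influence factors to the \emph{right} of $i$, giving the mirror image of the statement. In the paper, the color reduction compares each color with the next one. So bit $i$ of $f(S)$ depends on $S[i-4,i+\log^* N+5]$ (Property~\ref{enum:coinFlip:localdist}), which is mostly right context. The constants then come out as follows:
\begin{enumerate}
\item The next-level factor starting at $v_i$ consists of at most four lower-level factors, since $f(S)$ contains no $0000$ (Property~\ref{enum:coinFlip:upper}). So it depends on the bits of $v_i,\ldots,v_{i+3}$.
\item Hence it depends on the names of $v_{i-4},\ldots,v_{i+\log^* N+8}$.
\item The preparation adds one factor on each side, giving $v_{i-5},\ldots,v_{i+\log^* n+9}$.
\item Mirroring gives $v_{j-\log^* n-9},\ldots,v_{j+5}$, and widening gives $\log^* n+14$ on both sides.
\end{enumerate}
Your relevant-segment paragraph silently uses this right-heavy window, which contradicts your earlier left-context claim.

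Finally, no induction over levels is needed, since the lemma is a one-level locality statement. The absorption argument via \Cref{lem:factorSize}, part 2, would not control growth anyway. All it yields is that a coarsening does not increase the number of factors, not that an affected window shrinks from one level to the next.
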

This lemma establishes the context that affects each factor of the decomposition.
It informs us that changes to one factor affect at most $\log^* n +9$ factors before and $5$ factors after it.
To recalculate these affected factors, the algorithm only needs $\log^* n + 14$ factors on both sides.
We do not prove \Cref{lem:segments} yet because we have to present the algorithm first.

\subsection*{Processing a Level}
We describe how the actions $A(k,\ell,1),\ldots,A(k,\ell,\log^*n)$ compute the changes to one level $\ell$ in the form of one parallel algorithm with runtime $\mathcal{O}(\log^* n)$.
The processing of this algorithm is then split across $\log^* n$ actions where the final action $A(k,\ell, \log^* n)$ applies changes to $\Dell $.

The changes the final action $A(k,\ell,\log^*n)$ applies to the decomposition $\Dell $ affect later actions for both this update $A(k,\ell+1,1)$ and later updates $A(k',\ell+1,1), k' > k$.

\subparagraph{Preparation}
The algorithm modifies the "affected segment" of the decomposition $D^{\ell-1}$ to ensure two properties.
The preparation enforces that factors shorter than $2^{\ell-1}$ are surrounded by factors longer than $2^{\ell}$, which is required for the lower bound of the size-constraints.
It also enforces that neighboring factors are distinct, which is required by the next step.
It does this by merging factors, using the marks in $\mathcal{D}^{\ell-1}$ to identify runs of identical factors.

The result of the preparation is a string $d_1d_2\ldots$ of the names of the factors in the "relevant segment" of $D^{\ell-1}$.
To achieve the upper bound of the size-constraints, it "deactivates factors" longer than $2^{\ell}$.
A ""deactivated factor"" will not be merged in the next step.
It "deactivates a factor" by replacing the name of that factor with $-1$.
The final string satisfies the following.
\begin{lemma}\label{lem:prep}
	Let $d_1d_2\ldots$ be the string of factor names produced by level $\ell$ preparation and let $v_1,v_2,\ldots$ be the corresponding sequence of factors.
	The following hold for every $i, j$:
	\begin{enumerate}
		\item $d_i = -1$ if and only if $|v_i| > 2^{\ell}$.
		\item If $d_i \neq -1$, then $v_i = v_j$ if and only if $d_i = d_j$.
		\item If $|v_i| < 2^{\ell-1}$, then $d_{i-1} = d_{i+1} = -1$.
		\item If $d_i \neq -1$, then $d_i \neq d_{i+1}$.
	\end{enumerate}
\end{lemma}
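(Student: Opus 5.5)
The four properties follow from how the preparation step is built, so the plan is to fix a precise version of it and then verify each item by construction. I would describe the preparation as three sub-steps on the relevant segment of $D^{\ell-1}$.
\begin{enumerate}
\item \emph{Run merging.} Using the marks in $\mathcal{D}^{\ell-1}$, every maximal run of identical consecutive factors of length at least two is merged into a single factor.
\item \emph{Short-factor merging.} Every factor of length less than $2^{\ell-1}$ is merged into a neighbouring factor, following a fixed local rule (for example, merge to the left, otherwise to the right). The rule is chosen so that short factors end up flanked by long factors.
\item \emph{Naming and deactivation.} Names are assigned through $\calNDl$, and every factor of length greater than $2^{\ell}$ is then renamed to $-1$.
\end{enumerate}

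Items 1 and 2 are then immediate from step 3. The value $-1$ is written exactly on the factors longer than $2^{\ell}$, and \dynamicNames\ never outputs $-1$, since $\Delta=[1,nk]$. All remaining names come from the injective map $g$ of \Cref{lem:names}, with the object being the factor string itself. Hence two non-deactivated names coincide exactly when the factors are equal. For item 2 one must also handle a non-deactivated $v_i$ and a deactivated $v_j$. In that case $d_i\ne d_j$, and also $v_i\neq v_j$ because their lengths differ across the threshold $2^{\ell}$, so the equivalence still holds.

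For item 4, take adjacent factors with $d_i=d_{i+1}\ne -1$. By item 2 they are equal factors. I would show that the merging steps cannot produce two adjacent equal factors of length at most $2^{\ell}$. After step 1, adjacent factors are distinct by maximality of runs. Step 2 could in principle create new equal neighbours. To close this gap I would either re-apply run merging once more after step 2 (which is still a local, constant-round operation), or show directly that equal neighbours after step 2 must be long enough to be deactivated.

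Item 3 is the main obstacle. After step 2 no factor should be shorter than $2^{\ell-1}$, unless the factor is periodic or lies at the string boundary. A factor still shorter than $2^{\ell-1}$ after preparation can therefore only remain because both of its neighbours are long. I would argue via \Cref{lem:factorSize}(2) applied at level $\ell-1$: adjacent factors of $D^{\ell-1}$ have total length at least $2^{\ell}$, so any short factor $v_i$ with $|v_i|<2^{\ell-1}$ has neighbours of length greater than $2^{\ell-1}$. The delicate point is upgrading this to strictly greater than $2^{\ell}$, which is what $d_{i\pm1}=-1$ requires. For that I would design step 2 so that a short factor is kept separate only when both neighbours already exceed $2^{\ell}$, and is merged into a neighbour otherwise. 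I then need to check that this merging neither cascades nor violates items 1 and 4, using the fact that merged results of length at most $2^{\ell}$ are named normally and longer ones are deactivated. Making this case analysis consistent with the locality demanded by \Cref{lem:segments} is where I expect the real work to lie.
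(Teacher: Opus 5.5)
Your arguments for items 1 and 2 agree with the paper's: deactivation is the last step, no genuine name equals $-1$, and non-deactivated names come from the injective naming structure. Your handling of the mixed case $d_i\neq -1 = d_j$ via the length threshold is a correct detail the paper leaves implicit.

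The gap is that items 3 and 4 are not proved. You leave them as alternatives (``either \ldots\ or \ldots'', ``I then need to check \ldots''), and they stay open because of the order and the merge rule you chose. The paper merges short factors \emph{first}, with an explicit combined-length threshold. Each factor shorter than $2^{\ell-1}$ is merged into its left neighbour if their total length is at most $2^{\ell+1}$. Any that remain are then merged into their right neighbour under the same condition. Short factors are never adjacent, by \Cref{lem:factorSize}(2) at level $\ell-1$; as you correctly noted, that lemma by itself only gives neighbours longer than $2^{\ell-1}$. Since the two passes are sequential and every merge involves a single short factor, there are no conflicts and no cascading. The threshold supplies the upgrade you were missing. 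If $v_i$ with $|v_i|<2^{\ell-1}$ survives both passes, then each neighbour satisfies $|v_{i\pm1}|>2^{\ell+1}-|v_i|>2^{\ell+1}-2^{\ell-1}>2^{\ell}$, so both neighbours are deactivated.

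Only after this does the paper merge runs of identical factors, and then deactivate. With that order, item 4 is immediate: run merging is the last step that changes factors, and deactivation introduces nothing but $-1$. Run merging also cannot break item 3. It only lengthens factors, and a surviving short factor cannot lie in a run, because its neighbours have different lengths.

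In your order (runs, then short merges), item 4 is not guaranteed. Suppose a short $s$ is merged into a left neighbour $A$ with $|As|=2^{\ell}$, next to a factor $B$ of length $2^{\ell}$ spelling the same string. None of the available invariants excludes this, and neither factor is deactivated. So your second alternative, showing that equal neighbours after short merging are long enough to be deactivated, cannot be carried out. Your first alternative, rerunning run merging afterwards, is in effect the paper's ordering. With the order fixed and the $2^{\ell+1}$ threshold in place, items 3 and 4 become one-line arguments, and the case analysis you anticipated is not needed.
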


See the Appendix, \Cref{sec:preparation} for a more detailed description and a proof of \Cref{lem:prep}.

\subparagraph{Choosing Merges}
This step starts with the string of factor names from the preparation.
Recall that the algorithm maintains names according to \Cref{lem:names}, which uses integers in  $[1,N]$ with $N = \mathcal{O}(n\log^* n)$.
It maps these names to $[0,N-1]$ which, combined with $-1$ for "deactivated factors", gives a string over $[-1,N-1]$.
The algorithm applies a function to this string that yields a bit-string where the beginning of each merged factor is $1$.

There are several properties this bit string has to have.
The bit string may not contain $0000$, because this would result in the merging of five or more factors, which could violate the upper bound for the size of merged factors.
Likewise the bit string may not contain consecutive $1$'s, because this would result in a factor that is not merged, which could violate the lower bound.
The only exception to this rule is that "deactivated factors" should not be merged, meaning that the bit string must have consecutive $1$'s there.
To achieve a small context, each bit may depend on only a small sub-sequence of factor names.
And finally, it must be possible to compute the function that produces this bit string efficiently.

We use deterministic coin flipping from \cite[Lemma 1]{mehlhornMaintainingDynamicSequences1997}, with some modifications, to construct this function.
This approach was taken from Kempa and Kociumaka, but with different modifications and therefore different properties compared to \cite[Theorem 7.3]{kempaDynamicSuffixArray2022}.

\begin{lemma}\label{lem:coinFlipping}\cite[Theorem 7.3]{kempaDynamicSuffixArray2022} \cite[Lemma 1]{mehlhornMaintainingDynamicSequences1997}
	For every $N, n$ there is a function $f:[-1,N-1]^n \to [0,1]^n$ which satisfies the following for all strings $S \in [-1,N-1]^n$ and indices $i,j \in [1,n]$:
	\begin{enumerate}
		\item\label{enum:coinFlip:localdist} If $j \notin [i-4,i+\log^* N + 5]$, then changing $S[j]$ does not affect $f(S)[i]$.
		\item\label{enum:coinFlip:localforce} If there is a $k$ with $S[k] = -1$, $\min\{i,j\} \lneq k \lneq \max\{i,j\}$, then changing $S[j]$ does not affect $f(S)[i]$.
		\item\label{enum:coinFlip:lowerforce} If $S[i] = -1$, then $f(S)[i,i+1] = 11$.
		\item\label{enum:coinFlip:lower} If $f(S)[i] = 1$ and $S[i] \neq -1$, then $f(S)[i+1] = 0$.
		\item\label{enum:coinFlip:upper} $f(S)$ does not contain $0000$.
		\item\label{enum:coinFlip:work} There is a parallel algorithm that calculates $f(S)$ in $\mathcal{O}(\log^* N)$ time on a "common CRCW PRAM" with $n$ processors.
	\end{enumerate}
\end{lemma}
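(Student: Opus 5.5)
The function $f$ is built from deterministic coin tossing (Cole--Vishkin style label reduction, as in \cite[Lemma 1]{mehlhornMaintainingDynamicSequences1997}), with the $-1$ entries acting as hard separators.

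\textbf{Blocks.} View $S$ as maximal blocks of non-$-1$ entries, separated by $-1$'s. Inside a block, adjacent entries are distinct; this is the input guarantee coming from \Cref{lem:prep}(4). Define $f(S)[i]=1$ whenever $S[i]=-1$, and also whenever $S[i-1]=-1$ (including $i=1$). This gives item~\ref{enum:coinFlip:lowerforce} directly. It also means every block starts with a~$1$.

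\textbf{Label reduction.} Inside a block, iterate Cole--Vishkin: the new label of position $p$ is $2k+b$, where $k$ is the lowest bit index at which the labels of $p$ and $p-1$ differ, and $b$ is that bit of $p$'s label. The first element of a block has no predecessor, so compare it with a fixed dummy value. After $\log^* N + O(1)$ rounds, labels lie in $\{0,\dots,5\}$ and adjacent labels stay distinct. Each round reads one position to the left, so the final label depends on about $\log^* N+2$ positions to the left and none to the right.

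\textbf{From labels to bits.} Mark $p$ with a $1$ if its label is a local minimum, i.e.\ smaller than both neighbours (a missing neighbour counts as $+\infty$). Then:
\begin{itemize}
\item Two adjacent positions cannot both be local minima, which gives item~\ref{enum:coinFlip:lower}.
\item Labels come from $\{0,\dots,5\}$ with adjacent values distinct, so a strictly monotone run has length at most $6$. Hence local minima are spaced at most about $2\cdot 5$ apart.
\item The forced $1$ at a block start may sit next to a local minimum; in that case unmark the minimum.
\end{itemize}

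\textbf{The main obstacle: excluding $0000$.} Getting item~\ref{enum:coinFlip:upper} exactly requires a refinement. Add a $1$ at the middle of any long gap between consecutive $1$'s, choosing the added positions by a fixed rule that depends only on labels within distance about $5$. With a 6-label alphabet the raw gap between minima can reach $9$, so the standard refinement is needed here: reduce the labels further to $3$ colours by three extra rounds that recolour colours $5,4,3$. With labels in $\{0,1,2\}$, monotone runs have length at most $3$, so consecutive minima are at most $4$ apart. That gives at most three $0$'s between $1$'s, so $0000$ cannot occur. I would verify this as a finite case analysis over $3$-colour patterns together with the block-start boundary cases.

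\textbf{Locality.} A bit at $i$ depends on the labels at $i-1,i,i+1$ (plus a few more for the boundary fix-ups). Each label depends on roughly $\log^* N+O(1)$ positions to the left. The asymmetric window in item~\ref{enum:coinFlip:localdist} then follows by careful bookkeeping: the right dependence is $O(1)$, and the stated $i+\log^*N+5$ reflects the paper's orientation convention. The windows never cross a $-1$, because reduction restarts at each block. So a position beyond a $-1$ cannot influence $f(S)[i]$, which is item~\ref{enum:coinFlip:localforce}.

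\textbf{Work.} Each round is a constant-time computation per position: the lowest differing bit is obtainable in $O(1)$ on a CRCW PRAM by precomputed table lookup or word operations. So $\log^* N+O(1)$ rounds on $n$ processors give item~\ref{enum:coinFlip:work}.
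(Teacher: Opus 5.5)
Your construction is essentially the paper's: Cole--Vishkin reduction to six and then three colours, local minima, forced 1s at and after each $-1$, and the $-1$ entries acting as barriers. (Your explicit appeal to \Cref{lem:prep} for distinct neighbours is fine; the paper's proof tacitly needs the same.) However, two steps fail as written.

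First, you compare each label with its \emph{left} neighbour. So the bit at $i$ can depend on entries roughly $\log^* N$ positions to its left, which item~1 forbids, since it allows only $4$. The window $[i-4,i+\log^*N+5]$ is in absolute indices. Its asymmetry is exactly what gives the affected segment of \Cref{lem:segments} $\log^* n+9$ factors on the left and only $5$ on the right. So there is no orientation convention to appeal to. You must compare each colour with the \emph{next} one, as the paper does. Then the $\log^* N$ reach points rightwards, and only the constant contributions of the recolouring rounds, the local-minimum test and the clean-up point leftwards.

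Second, your boundary handling does not deliver items~4 and~5.
\begin{itemize}
\item Unmarking the minimum next to a block start can create $0000$. If the block starts at $a$ and the minimum at $a+1$ is removed, your own spacing bound still allows the next minimum at $a+5$ (labels $0,1,2,1,0$ on $a+1,\dots,a+5$). That leaves $f(S)[a,a+5]=100001$. The paper instead \emph{shifts} the offending 1, rewriting $f(S)[a+1,a+3]$ from $100$ to $010$. It deletes the 1 ($101\to001$) only when another 1 follows two positions later.
\item You never treat the right end of a block. If its last position $b$ is a local minimum (your $+\infty$ convention for the missing neighbour makes this easy), then $f(S)[b]=f(S)[b+1]=1$ with $S[b]\neq-1$, violating item~4. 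The paper applies the mirrored clean-up there.
\item For a block of length one, $S[a-1,a+1]=(-1,x,-1)$, items~3 and~4 contradict each other as stated. Item~3 forces $f(S)[a]=f(S)[a+1]=1$, and item~4 then forces $f(S)[a+1]=0$. \Cref{lem:prep}(3) produces exactly this pattern. No clean-up, yours or the paper's, can fix that, so item~4 has to exempt positions with $S[i-1]=S[i+1]=-1$. This matches the paper's later remark that factors surrounded by deactivated factors stay unmerged.
\end{itemize}
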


We give the proof of \Cref{lem:coinFlipping} in the Appendix, \Cref{sec:coinFlipping}.
It follows from \Cref{lem:coinFlipping} that all factors are merged into groups of two or four unless they are "deactivated" or surrounded by "deactivated factors".
\Cref{lem:factorSize} follows by induction using this fact and the preparation.

\Cref{lem:segments} follows from the combination of property \ref{enum:coinFlip:localdist} of \Cref{lem:coinFlipping} and the preparation.
See the Appendix, \Cref{sec:proof:lem:segments} for a proof.

\subparagraph{Finalization}
The finalization has two tasks. It updates the decomposition $\Dell $, then updates the "synchronizing sets" that are derived from $\Dell $.

For the first task, it updates $\calSl$ and $\calNDl$ according to how the merges were chosen.

The instance $\calNDl$ uses as its keys the covering of a factor with the "synchronizing sets hierarchy", as would be done to answer a query (see \Cref{lem:queryNoMuddling}).
This might produce a faulty covering because the queries in \Cref{lem:queryNoMuddling} do not account for delayed information.
We are safe to ignore this situation because it only occurs if there is a younger update with overlapping "affected segment".
When that younger update reaches this level, it will recalculate the factor with the faulty covering, thus cleaning up any potential problems.

After updating the decomposition in $\calSl $ and assigning names as needed, the algorithm updates $\calDl $.
It can insert the new merged factor names into $\calDl $ in a compact manner since this involves only $\mathcal{O}(\log^* n)$ factor names in total.

Both the finalization of level $\ell$ and the preparation of level $\ell+1$ modify the decomposition $\Dell $ and access $\calSl $, $\calDl $ and $\calNDl $.
To avoid conflicts, the finalization of $\ell$ waits until the preparation of $\ell+1$ has finished before it accesses $\calSl $, $\calDl $ and $\calNDl $.

The second task is deriving the "synchronizing sets" from the decompositions.
The set $B^{\tau}$ contains an element $i$ for each factor $v$ in decomposition $\Dell $.
To ensure "consistency" the occurrence $S[i,i+\tau)$ has to contains the entire context of $v$.
We first show how to match "synchronizing sets" to decompositions to guarantee that this is the case.
We combine \Cref{lem:factorSize} and \Cref{lem:segments} to derive an upper bound for the context.
It states that, if a large enough context around two factors is identical, then the factors are identical too.

\begin{corollary}\label{lem:contextSize}
	For $\ell > 0$, let $\alpha_{\ell} = 5(2^{\ell+1}-1)$ and $\beta_{\ell} = (\log^*N+9)(2^{\ell+1}-1)$.\\
	Let $D_{\ell},D'_{\ell}$ be the level-$\ell$ decompositions of strings $S,S'$ respectively.
	If $x = S[i,i+m)$ is a factor in $D_{\ell}$ and there is a $j$ with $S[i-\alpha_{\ell},i+\beta_{\ell}) = S'[j-\alpha_{\ell},j+\beta_{\ell}]$, then $x = S'[j,j+m)$ is a factor in $D'_{\ell}$.
\end{corollary}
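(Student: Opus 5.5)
We prove \Cref{lem:contextSize} by induction on $\ell$, with \Cref{lem:segments} supplying the locality of each level and \Cref{lem:factorSize} converting factor counts into character counts. The invariant is slightly stronger than the statement. Suppose $S[i-\alpha_\ell,i+\beta_\ell)=S'[j-\alpha_\ell,j+\beta_\ell)$, with the window read in the current string, not in the delayed one. We show that the level-$\ell$ factors of $S$ and $S'$ coincide, up to the shift $j-i$, on a window around $x$. That window contains $x$ itself and enough neighbours on each side to reconstruct $x$'s context at level $\ell+1$.

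\emph{Base case.} For $\ell=0$ all factors are singletons, so equality of the characters gives equality of the factors.

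\emph{Inductive step, part 1: locality.} We assume the claim for $\ell-1$. The level-$\ell$ processing first applies the preparation (\Cref{lem:prep}) and then the coin-flipping function (\Cref{lem:coinFlipping}). Together these decide the factor $x$ from the names of a bounded window of level-$(\ell-1)$ factors around it. By \Cref{lem:segments}, read in the converse direction, the level-$\ell$ factor containing a level-$(\ell-1)$ factor $v_k$ is determined by
\[
v_{k-5},\ldots,v_{k+\log^* N+9}.
\]
The asymmetry matches that of the affected segment, since there a change at $v_j$ affects $v_{j-\log^*N-9},\ldots,v_{j+5}$. Equivalently, $x$ depends on at most $5$ level-$(\ell-1)$ factors before it and $\log^*N+9$ after it. The names involved are consistent by \Cref{lem:names} and \Cref{def:synchronizingSet}. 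Equal factor strings therefore receive equal names, so equal factor sequences produce equal merge bits.

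\emph{Inductive step, part 2: converting to characters.} We bound the extent of those neighbouring factors in characters. Here we use that a non-deactivated level-$(\ell-1)$ factor has length at most $2^{\ell}$, by \Cref{lem:factorSize}(1) and the deactivation rule in \Cref{lem:prep}. Deactivated factors cut off dependence by \Cref{lem:coinFlipping}(\ref{enum:coinFlip:localforce}), so longer factors never need to be traversed. The character contexts then satisfy
\[
\alpha_\ell = 5\cdot 2^{\ell} + \alpha_{\ell-1}, \qquad \beta_\ell = (\log^*N+9)\cdot 2^{\ell} + \beta_{\ell-1},
\]
because the outermost relevant level-$(\ell-1)$ factor itself needs its own $\alpha_{\ell-1}$ or $\beta_{\ell-1}$ context. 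Unrolling $\sum_{t\le \ell}2^t = 2^{\ell+1}-1$ gives exactly the stated $\alpha_\ell=5(2^{\ell+1}-1)$ and $\beta_\ell=(\log^*N+9)(2^{\ell+1}-1)$.

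\emph{Inductive step, part 3: applying the hypothesis.} Under the assumed window equality, every level-$(\ell-1)$ factor inside the relevant window lies within its own $(\alpha_{\ell-1},\beta_{\ell-1})$-context inside the equal region. The induction hypothesis then makes these factors identical in $S'$ at the shifted positions. Hence the preparation outputs identical name strings on those windows, the coin flips agree, and $x=S'[j,j+m)$ is a factor of $D'_\ell$.

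\emph{Main obstacle.} The delicate step is the preparation, which merges runs of identical factors. A run can extend arbitrarily far, so a factor's membership in it is not obviously local. I would handle this by noting that a merged run longer than $2^\ell$ is deactivated (\Cref{lem:prep}(1)), and deactivated factors shield dependence by \Cref{lem:coinFlipping}(\ref{enum:coinFlip:localforce}). So only the boundary of $x$ matters, and that boundary is witnessed by distinct neighbouring names inside the window. I would also check that periodic factors $u=w^z$, which may be longer than $2^{\ell+1}$, are whole runs whose endpoints lie inside the window because $x$ itself is assumed to lie there.
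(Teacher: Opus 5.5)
Your route is the one the paper intends. Its only justification is to ``combine \Cref{lem:factorSize} and \Cref{lem:segments}'', and your recurrence $\alpha_\ell=5\cdot 2^\ell+\alpha_{\ell-1}$, $\beta_\ell=(\log^*N+9)2^\ell+\beta_{\ell-1}$ is exactly that computation. There is, however, a genuine gap in the case you flag yourself, and the paper's one-line justification has the same blind spot. Your last sentence says the endpoints of a long periodic factor lie in the window ``because $x$ itself is assumed to lie there''. Nothing of the kind is assumed: the hypothesis only fixes $S[i-\alpha_\ell,i+\beta_\ell)$, and \Cref{lem:factorSize} gives $m\le 2^{\ell+1}\le\beta_\ell$ only when $x$ is not a proper power.

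For long powers the conclusion is actually false. Take $\ell=1$ and a string $u$ with $|u|\ge\alpha_1$ whose last letter is not $a$. Let $S=u\,a^M$ with $M\ge\beta_1+2$, and $S'=u\,a^{\beta_1+1}b$. The level-1 preparation merges the run $a^M$ into one factor of length greater than $2$. That factor is deactivated, so by property~\ref{enum:coinFlip:lowerforce} of \Cref{lem:coinFlipping} it stays a factor on its own, and $x=a^M$ is a factor of $D_1$ starting at $i=|u|+1$. With $j=i$ the two windows coincide, yet $S'[j,j+M)\neq a^M$. Note that $j$ still starts a factor of $D'_1$, namely $a^{\beta_1+1}$. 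So no argument can yield $S'[j,j+m)=x$ here.

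The same phenomenon breaks your strengthened invariant (factors coincide on a window containing $x$) and your part~3. A level-$(\ell-1)$ factor in the relevant region may be such a power running past the edge of the window. The induction hypothesis, in its ``whole factor'' form, cannot be applied to it.

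What is true, and all that \Cref{lem:syncSet} needs for consistency, is a boundary statement. It says that $j$ starts a factor of $D'_\ell$, and that this factor equals $x$ whenever $m\le\beta_\ell$, in particular whenever $x$ is not a proper power. Run the induction on that weaker statement. Carry the factor \emph{starting positions} in the window, plus the names of factors ending inside it, rather than the factors themselves.

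The crossing factors are then harmless. A level-$(\ell-1)$ factor with an endpoint in the relevant region that crosses the window edge is longer than $\alpha_{\ell-1}$ or $\beta_{\ell-1}$, and both are at least $2^{\ell+1}$. Hence at level $\ell$ it is deactivated, and by property~\ref{enum:coinFlip:localforce} it shields everything beyond it. You then only need its visible endpoint, plus a check that it cannot take part in a short-factor merge or a run merge.

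A smaller point for your part~1: the coin flipping reads the integer names themselves, not just their equality pattern. So you need both strings to be named by one injective naming map. This is automatic in the application $S=S'$.
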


Each factor in $\Dell $ becomes an occurrence by shifting the first index $\alpha_{\ell}$ positions to the left.
It follows from \Cref{lem:contextSize} that this produces a consistent "string synchronizing set" if $\tau$ is at least as big as the context $\alpha_{\ell} + 1 + \beta_{\ell}$.

\begin{lemma}\label{lem:syncSet}
	Let $D_{\ell}$ be the level-$\ell$ decomposition of a string $S$, let $z$ be some integer and $\tau = 2^z$.
	If $\alpha_{\ell}+1+\beta_{\ell} \leq \tau < \alpha_{\ell+1}+1+\beta_{\ell+1}$, then $B_{\tau} = \{ i - \alpha_{\ell} \mid \exists j. S[i,j) \text{ is a factor in } D_{\ell} \}$, alongside an appropriate name-function $f_{\tau}$, is a "string synchronizing set" for $S$.
\end{lemma}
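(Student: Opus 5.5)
We verify the four properties of \Cref{def:synchronizingSet} one by one for $B_\tau$, with names $f_\tau(i-\alpha_\ell)$ chosen as the name of the string $S[i-\alpha_\ell, i-\alpha_\ell+\tau)$, obtained from \dynamicNames.

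\emph{Consistency.} This is the main use of \Cref{lem:contextSize}. Suppose $S[p,p+\tau)=S[q,q+\tau)$ and $p\in B_\tau$. Then $i=p+\alpha_\ell$ starts a factor of $D_\ell$. Since $\tau\ge \alpha_\ell+1+\beta_\ell$, the window $S[i-\alpha_\ell,i+\beta_\ell)$ lies inside $S[p,p+\tau)$, so it equals $S[j-\alpha_\ell,j+\beta_\ell)$ for $j=q+\alpha_\ell$. Applying \Cref{lem:contextSize} with $S'=S$ shows that a factor starts at $j$, hence $q\in B_\tau$. One point needs care: the corollary is stated for a factor $x=S[i,i+m)$ and yields a copy at $j$. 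I only need the starting position, so I would check that the factor length $m$ does not exceed the window; if it does, the start alone still follows from the corollary as stated.

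\emph{Names.} Two equal windows trivially get the same name. Distinct windows get distinct names because \dynamicNames\ keeps $g$ injective on the tracked objects.

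\emph{Density.} Take $i$ and the interval $[i,i+\tau/2)$, which corresponds to $[i+\alpha_\ell, i+\alpha_\ell+\tau/2)$ in factor starts. By \Cref{lem:factorSize}(1), every factor that is not a proper power $w^z$ has length at most $2^{\ell+1}$. Since $\tau/2$ is at least about $(\log^*N+14)2^{\ell}\gg 2^{\ell+1}$, any interval of length $\tau/2$ not covered by a factor start must lie inside a long factor $u=w^z$. Such a factor arises from merging a run of identical lower-level factors during preparation, so $|w|\le 2^{\ell}\le \tau/2$ by induction. This gives a periodic segment of length $\ge\tau/2$ containing the point. I then need to extend it to size $\frac32\tau$ around $i$. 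I would argue that the gap between consecutive starts is a single power factor whose length exceeds $\tau/2$, and then add the neighbouring $\alpha_\ell$ shift. \textbf{This is the step I expect to be the main obstacle}: the exact constants ($\frac32\tau$, and the offset by $\alpha_\ell$) must match, and it may be necessary to appeal to how runs are maximally merged in the preparation (\Cref{lem:prep}) so that the periodicity extends far enough.

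\emph{Sparseness.} By \Cref{lem:factorSize}(2), any two adjacent factors have total length $\ge 2^{\ell+1}$. Hence an interval of length $m$ contains $\mathcal{O}(m/2^{\ell}+1)$ factor starts. Since $\tau<\alpha_{\ell+1}+1+\beta_{\ell+1}=\mathcal{O}(2^{\ell}\log^*N)$, we get $1/2^\ell=\mathcal{O}(\log^* N/\tau)$, and the count is $\mathcal{O}(\frac m\tau\log^* N)$. Matching the $\log^*\frac m\tau$ form in \Cref{def:synchronizingSet} would need an extra argument or a reading of the constant as $\log^* n$, as in \cite{kempaDynamicSuffixArray2022}. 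I would follow their local-sparseness accounting here.
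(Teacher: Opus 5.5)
Your outline is the paper's own argument. The paper's proof consists of three assertions: consistency holds because $S[p,p+\tau)$ contains the context of \Cref{lem:contextSize}, density follows from \Cref{lem:factorSize}(1), and sparseness follows from \Cref{lem:factorSize}(2). Your consistency and names parts are fine. The caveat about the factor length $m$ is unnecessary, because the corollary directly yields a factor starting at $j$.

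The two steps you flag, however, are genuine gaps, and they cannot be closed as you plan. The properties of \Cref{def:synchronizingSet}, with their literal constants, fail for this $B_\tau$.

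\emph{Density.} Let $S=X(ab)^kY$, where $X$ and $Y$ are square-free over two alphabets that are disjoint from each other and from $\{a,b\}$. Choose $2k$ somewhat larger than $\frac12\tau$ but well below $\frac32\tau$. Then:
\begin{itemize}
\item Level~1 pairs the interior of $(ab)^k$ consistently.
\item The level-2 preparation merges the resulting run of equal pairs into one factor.
\item This factor is longer than $2^{\ell'+1}$ for every $\ell'\le\ell$. So at every later level it is deactivated, never absorbs a short neighbour, and is never merged.
\end{itemize}
Hence it is a factor of $D_\ell$ starting at some $s$ and longer than $\frac12\tau$. For $i=s+1-\alpha_\ell$ you get $B_\tau\cap[i,i+\frac12\tau)=\emptyset$. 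Yet no segment of length $\frac32\tau$ with period at most $\frac12\tau$ exists anywhere in $S$: it cannot fit inside $(ab)^k$, and otherwise it would force a square inside $X$ or $Y$. Moreover $i<s$, so $i$ is not even inside the run.

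Maximal run merging (\Cref{lem:prep}) does not help: it makes the factor coincide with the run, but it does not make the run long. What your argument actually proves is a weaker statement. If $[i,i+\frac12\tau)$ misses $B_\tau$, then the shifted window $[i+\alpha_\ell,i+\alpha_\ell+\frac12\tau)$ lies inside one factor $w^z$. That is, it lies inside a periodic segment of length more than $\frac12\tau$, whose period $|w|$ your induction bounds by $\mathcal{O}(2^{\ell})<\frac12\tau$.

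\emph{Sparseness.} Likewise, no extra argument can give the $\log^*\frac m\tau$ form. On a square-free string no factor is a power, so by \Cref{lem:factorSize}(1) every factor of $D_\ell$ has length at most $2^{\ell+1}$. At the same time $\tau\ge(\log^*N+14)(2^{\ell+1}-1)+1$. Hence a window of length $m=2\tau$ contains $\Omega(\log^*N)$ elements of $B_\tau$, whereas $\frac m\tau\log^*\frac m\tau$ is a constant. The bound you actually derived, $\mathcal{O}(\frac m\tau\log^*N+1)$, is the correct one; it is the local sparseness of Kempa and Kociumaka.

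So you are not missing an idea that the paper supplies. The paper's one-line justification also yields only these weakened density and sparseness properties. The lemma has to be read with them, and \Cref{lem:queryNoMuddling} would then need to be rechecked against the weaker density.
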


Level $\ell$ updates "synchronizing sets" according to \Cref{lem:syncSet}.
"Consistency" follows from the fact that the occurrence contains the context.
"Density" follows from the upper bound of \Cref{lem:factorSize} and "sparseness" follows from the lower bound.

\section{Applications}
\label{sec:applications}
We show two applications of our algorithm, one from the area of formal languages and another from string algorithms.

We note that these algorithms support only character substitutions, so inserting or deleting characters is not possible.
The first algorithms also allow "void characters", which can be substituted like any other character, but are ignored for the purpose of the problem.

\subsection{Dyck Languages}
The ""Dyck language"" $D_k$ is the language of well-balanced parentheses expressions with $k$ types of parentheses.
We first note that we can improve the bound for the number of processors of a dynamic constant-time algorithm for $D_1$ from $\mathcal{O}((\log n)^3)$ in Schmidt et al.\ \cite[Theorem 6.5]{schmidtWorksensitiveDynamicComplexity2021} to $\mathcal{O}(\log n \log \log n)$.
We prove this in the Appendix, \Cref{sec:dyck1}.
\begin{theorem}\label{the:dyck1}
	There is a dynamic parallel constant-time algorithm that maintains membership in $D_1$ on a "common CRCW PRAM" with $\mathcal{O}(\log n \log \log n)$ processors.
	Updates allow only character substitutions and "void characters".
\end{theorem}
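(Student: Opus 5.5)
We maintain $D_1$ membership under substitutions (with "void characters") by the classical characterization: a word over $\{(,)\}$, ignoring void characters, is in $D_1$ iff its height profile $h(i) = \#\text{opening} - \#\text{closing}$ in the prefix $[1,i]$ satisfies $h(n)=0$ and $\min_i h(i) \geq 0$. Void characters contribute $0$. The plan is a balanced tree of fan-out about $\log n$ (equivalently, a tree of constant depth over blocks) in which every node stores the pair (total height change, minimum prefix height) of its segment. The root then answers the query: the word is in $D_1$ iff total $=0$ and minimum $\geq 0$.

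\textbf{Choice of tree.} A binary tree would have depth $\log n$, and updating all ancestors of a changed leaf in constant time is exactly the difficulty. So we instead store, for every node $v$ and every dyadic range of its children, the pair for that range. Concretely, we take a binary tree of depth $\log n$ and, for each node and each ancestor at distance at most $\log\log n$, the aggregated pair. An update at a leaf changes the height change of each ancestor by the same additive $\delta\in\{-2,\dots,2\}$. The difficulty is the minimum.

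\textbf{Key step.} Since $\delta$ is constant, the new minimum of an ancestor $u$ is the minimum of two quantities: the minimum over the part of $u$'s segment before the changed position, which is unchanged, and the old minimum after it shifted by $\delta$. Thus each ancestor's new minimum depends on the pairs of $O(\log n)$ ``sibling'' segments along the path, i.e. prefix- and suffix-segments relative to the position. The prefix minima along the path can be written as $\min_k (H_k + m_k)$, where $H_k$ are partial sums of sibling height changes. Computing such a prefix-min of a length-$\log n$ sequence for every prefix in constant time is the bottleneck. Doing it naively over all pairs costs $(\log n)^2$, and the Schmidt et al.\ bound of $(\log n)^3$ arises similarly.

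\textbf{Improvement to $\log n\log\log n$.} We partition the path into blocks of $\log\log n$ consecutive levels and precompute, for each node, aggregated pairs over the up-to-$\log\log n$ levels above it. This gives a two-level scheme. First, within each block, every prefix pair is computed with $(\log\log n)^2$ work; over $\log n/\log\log n$ blocks this totals $\log n\log\log n$. Second, across the $\log n/\log\log n$ block summaries, prefix sums are obtained in constant time by maintained auxiliary values, and prefix minima by comparing with an $O(\log n)$-processor constant-time minimum over small integers (values bounded by $n$, so we use a common-CRCW min over $\log n/\log\log n$ items with $(\log n/\log\log n)^2\le \log n\log\log n$ processors).

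I expect the main obstacle to be the second step: keeping per-block summaries consistent under updates so that each update touches only $O(\log n\log\log n)$ stored cells, which requires that the stored dyadic aggregates be exactly those recomputed within blocks.
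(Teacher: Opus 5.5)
\textbf{There is a genuine gap.} Your reduction is mathematically correct: each ancestor's new minimum is an aggregate of old sibling pairs along the update path. But you never show how to evaluate these $\log n$ path aggregates in constant time with $O(\log n\log\log n)$ processors, and that is the entire content of the theorem. The blocking scheme only moves the difficulty to the cross-block step, where both ingredients are missing.

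\emph{Prefix sums.} The cross-block prefix sums of block totals are supposed to come from ``maintained auxiliary values'' that you never specify. Exact prefix sums are exactly what a common CRCW PRAM cannot compute in constant time in general, so the maintained values would have to carry the whole load. The natural candidates are too expensive to keep up to date. For instance, suppose you store the pair of $[\mathrm{start}_u,\mathrm{start}_v)$ for every node $v$ and every ancestor $u$ at distance at most $\log\log n$. Then an update at the left end of the string invalidates $\Theta(\log^2 n)$ of these pairs. This is the obstacle you flag yourself, and it stays unresolved.

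\emph{Prefix minima.} The cross-block minima are mis-costed. The inequality $(\log n/\log\log n)^2\le\log n\log\log n$ holds only when $\log n\le(\log\log n)^3$, so it fails for large $n$. Moreover, you need a prefix minimum for each of the $\log n/\log\log n$ blocks (in fact for each ancestor), not a single minimum.

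\emph{Minor point.} A tree of fan-out $\log n$ has depth $\log n/\log\log n$, not constant depth.

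\textbf{The missing idea.} You never need to recompute aggregates from sibling data. Since \textsc{Set} and \textsc{Reset} only switch between a void character and a parenthesis, we have $\delta=\pm1$. The change $\Delta_u$ of the minimum at each ancestor $u$ then lies in $\{0,\delta\}$. Moreover, $\Delta_u$ is either fixed by a local comparison of the \emph{old} values at $u$'s two children, or equal to the change at $u$'s child on the path. Concretely, for $\delta=+1$ and children $L,R$:
\begin{itemize}
\item if the update lies in $L$, then $\Delta_u=\Delta_L$ when $\min_L\le\mathrm{tot}_L+\min_R$, and $\Delta_u=1$ otherwise;
\item if the update lies in $R$, then $\Delta_u=0$ when $\min_L\le\mathrm{tot}_L+\min_R$, and $\Delta_u=\Delta_R$ otherwise.
\end{itemize}
Hence every node on the path is either \emph{effect-inducing}, decided with $O(1)$ work, or \emph{effect-preserving}, copying the change of its nearest effect-inducing descendant. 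The only nonlocal task left is a successor search in a bit array of length $\log n$.

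This is exactly the paper's proof, phrased with the unmatched-parenthesis counts $l,r$ instead of your total and minimum, which carry the same information. The paper builds a static binary tree over the bit array and computes the first $1$ in each node's interval with the linear-work first-occurrence algorithm of Fich et al. It then answers all $\log n$ successor queries with $O(\log\log n)$ processors each, for $O(\log n\log\log n)$ processors in total.
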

Schmidt et al.\ also consider "Dyck languages" $D_k$ for $k > 1$.
They prove a link between the complexity for "Dyck languages" and the problem \stringEquality~\cite[Lemma 6.9]{schmidtWorksensitiveDynamicComplexity2021}.
The problem \stringEquality, which maintains if two dynamic strings are equal, is reducible to \dynamicLCE.
They show that, given an efficient algorithm for \stringEquality, it is possible to maintain membership in $D_k$ while increasing the work by only a logarithmic factor.
Applying \Cref{the:LCE} to \cite[Lemma 6.9]{schmidtWorksensitiveDynamicComplexity2021} yields the following bound.
\begin{corollary}\label{cor:dyck}
	For any $\epsilon > 0, k \in \mathbb{N}$, there is a dynamic parallel constant-time algorithm that maintains membership in $D_k$ on a "common CRCW PRAM" with $\mathcal{O}(n^{\epsilon})$ processors.
	Updates allow only character substitutions and "void characters".
\end{corollary}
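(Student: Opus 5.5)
The plan is to obtain \Cref{cor:dyck} by plugging \Cref{the:LCE} into the reduction of Schmidt et al.\ \cite[Lemma 6.9]{schmidtWorksensitiveDynamicComplexity2021}. That lemma says that if \stringEquality{} (with substitutions and "void characters") can be maintained in parallel constant time with work $W(n)$, then membership in $D_k$ can be maintained in parallel constant time with work $\mathcal{O}(W(n)\log n)$, or at worst a polylogarithmic factor more. So there are two tasks: build an efficient \stringEquality{} algorithm from \Cref{the:LCE}, and then absorb the extra factor into $n^{\epsilon}$.

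\textbf{Step 1: reduce \stringEquality{} to \dynamicLCE.} Let the two strings be $U,V$ of length at most $n$. Store them in one \dynamicLCE{} instance as $S = U \circ \# \circ V$, where $\#$ is a fresh separator. The maximum size is $2n+1$, and we can enlarge the alphabet by a constant without harm.

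\textbf{Step 2: simulate substitutions and void characters.} A substitution at position $p$ becomes a \textsc{Delete} followed by an \textsc{Insert}. This is two updates, each run in constant time by \Cref{the:LCE}.

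For "void characters", the cleanest option is never to store them in $S$. We keep, for each of $U$ and $V$, a \dynamicMarkedString{} instance (\Cref{lem:shiftingSuccessor}) marking the non-void positions. The rank of a position among marked positions is then needed to translate indices, so that a substitution to or from void becomes a \textsc{Delete} or an \textsc{Insert} at the translated index.

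Rank is not directly among the queries of \Cref{lem:shiftingSuccessor}. I would therefore keep a separate prefix-count structure, for example an $n^{\epsilon}$-ary tree of counts. Updating one leaf changes $\mathcal{O}(1/\epsilon)$ ancestors, and a rank query sums $\mathcal{O}(1/\epsilon)$ levels, where each level needs a prefix sum over $n^{\epsilon}$ children. Such a prefix sum can be computed in constant time with $n^{2\epsilon}$ processors by brute force over pairs. This stays polynomial in $\epsilon$, so it is absorbed after rescaling $\epsilon$.

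\textbf{Step 3: answer equality queries.} To test $U = V$, compare the non-void lengths and ask one \textsc{Lcp} query at the starts of $U$ and $V$ inside $S$. Check whether the answer is at least the common length. The separator $\#$ stops the extension from running past the end of $U$.

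Each operation therefore costs constant time on $\mathcal{O}(n^{\epsilon})$ processors. The initialization of \Cref{the:LCE} costs $\mathcal{O}(n\log n\log^* n)$ processors, but that is a one-time preprocessing cost and is permitted in this framework.

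\textbf{Step 4: combine.} Feeding $W(n)=\mathcal{O}(n^{\epsilon'})$ into \cite[Lemma 6.9]{schmidtWorksensitiveDynamicComplexity2021} gives $\mathcal{O}(n^{\epsilon'}\operatorname{polylog} n)$ work. Choosing $\epsilon' < \epsilon$ bounds this by $\mathcal{O}(n^{\epsilon})$.

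\textbf{Main obstacle.} The step I expect to need the most care is checking that the reduction's interface matches. Lemma 6.9 may require equality tests on \emph{substrings} of the word, possibly one reversed and with bracket types mirrored, rather than of two whole strings. To handle this, I would also store the reversed, bracket-mirrored copy of the word in $S$ and answer substring equality via \Cref{lem:queryMuddling} or a single \textsc{Lcp}/\textsc{Lcs} query. Every such query still costs $\mathcal{O}(n^{\epsilon})$ processors, so the bound is unaffected.
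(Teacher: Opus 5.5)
\textbf{Gap: the rank structure in Step~2.} Your overall route is the paper's. You use \Cref{lem:StringEqualityToDyck} as a black box. You keep the erased concatenation of the two strings in one \dynamicLCE{} instance and translate positions by counting non-void characters. You answer \textsc{Equal} with one \textsc{Lcp} query. Your separator $\#$ is harmless; the paper omits it because, with equal lengths, the second suffix is exactly $e_{\bot}(S_2)$. The problem is how you compute ranks. As you describe it, nodes store only counts, and an update touches only the $\mathcal{O}(1/\epsilon)$ ancestors. A rank query must then sum, at each level, the counts of up to $n^{\epsilon}$ siblings on the fly. That cannot be done in constant time on a common CRCW PRAM with polynomially many processors. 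At the lowest level this means summing $n^{\epsilon}$ bits exactly, and the sum's lowest bit is their parity. Parity needs superpolynomially many processors \cite{furstParityCircuitsPolynomialtime1984,stockmeyerSimulationParallelRandom1984}; the paper notes exactly this obstruction before \Cref{lem:approxPrefixSum}. No ``brute force over pairs'' of children can produce an exact count at query time.

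\textbf{Fix: do the work during updates.} One option is to store at each node the prefix sums over its children, as in the tree of \Cref{lem:dynString}. On each update, add $\pm 1$ to the stored prefix sums of all right siblings of the nodes on the leaf-to-root path. This costs $\mathcal{O}(n^{\epsilon})$ processors per level over $\mathcal{O}(1/\epsilon)$ levels, and a rank query then adds only $\mathcal{O}(1/\epsilon)$ stored numbers. Another option is to store the sum for every pair $(i,j)$ of children and adjust the $\mathcal{O}(n^{2\epsilon})$ affected pairs by $\pm 1$; you then rescale $\epsilon$. The paper gets the same effect by invoking RangeEval over $([0,n],+_{n+1})$ (\Cref{lem:RangeEval}), which has $\mathcal{O}(n^{\epsilon})$-processor updates and single-processor queries. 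With this repair the rest of your argument goes through as written: two \dynamicLCE{} updates per change, one \textsc{Lcp} per query, and $W(n)\log n + (\log n)^3$ absorbed into $n^{\epsilon}$ by shrinking $\epsilon$. Your hedge about reversed or mirrored substrings is unnecessary, because the paper uses the reduction exactly in its \stringEquality{} form.
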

See the Appendix, \Cref{sec:proof:dyck}, for a proof.

\subsection{Squares}

A ""square"" is a substring $S[\alpha,\beta] = vv$ where $v$ is not empty.
A string is ""square-free"" if it contains no "squares".
Amir et al.\ show dynamic sequential algorithms that run in $n^{o(1)}$ time for "square-freeness" \cite[Theorem 21]{amirRepetitionDetectionDynamic2019} and maintaining all "squares" \cite[Theorem 26]{amirRepetitionDetectionDynamic2019}.
With some modifications and using our "LCE" algorithm, we can construct similar efficient algorithms that run in parallel constant time.

\begin{corollary}\label{cor:squares}
	For any $\epsilon > 0$, there is a dynamic parallel constant-time algorithm that answers if a dynamic string is "square-free" and maintains all "squares" on a "common CRCW PRAM" with $\mathcal{O}(n^{\epsilon})$ processors.
	Updates to the string allow only character substitutions.
\end{corollary}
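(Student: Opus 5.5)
We adapt the sequential approach of Amir et al.\ to parallel constant time, with \Cref{the:LCE} replacing their sequential LCE structure. Since only substitutions are allowed, we maintain an instance of \dynamicLCE{} for $S$, and also one for the reversal of $S$ so that both \textsc{Lcp} and \textsc{Lcs} queries are available. A substitution at position $p$ is simulated as \textsc{Delete}$(p)$ followed by \textsc{Insert}$(p,\sigma)$. Each such step costs constant time on $\mathcal{O}(n^{\epsilon})$ processors.

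The squares are maintained through the standard characterization via runs. Every square $vv$ with $|v|=q$ is contained in a maximal repetition (run) of period $q$ and length at least $2q$. Conversely, every run $S[a,b]$ with minimal period $q$ contains exactly the squares of length $2kq$ that fit inside it. Hence we maintain the set of runs of $S$, stored in a compact array indexed by period and start. $S$ is square-free if and only if this set is empty, and all squares can be reported from it.

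For an update at position $p$, we split the work into two parts: deciding which runs disappear and which appear. A run disappears only if it contains $p$, or if it is extended through $p$ by the change. Every newly created square must contain $p$. For each candidate period $q\in[1,n]$ and each of its two halves, we test whether a square of period $q$ through $p$ exists. With $r$ the number of positions of the first half after $p$, this holds when $\textsc{Lcp}(p, p+q)+\textsc{Lcs}(p,p+q) > q$ (or the symmetric test with $p-q$). The run itself is then recovered as the interval $[p-\textsc{Lcs}+1, p+q+\textsc{Lcp}-1]$.

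The main obstacle is work. Testing all $n$ periods costs $n\cdot n^{\epsilon}$, so we must restrict to few candidate periods, as Amir et al.\ do. Our plan is to group periods into dyadic ranges $[2^t,2^{t+1})$. By the periodicity (Fine--Wilf) lemma, runs through $p$ with periods in one range whose length is at least $3\cdot 2^t$ have a single minimal period. This leaves $\mathcal{O}(\log n)$ ``long'' runs through $p$, which can be recovered by LCE queries at $p$ and guessed offsets. Periods whose squares are not long enough to force uniqueness have to be located differently. For these we would apply the same guessing with a coarser grid: use $n^{\epsilon}$ equally spaced anchor positions in the range, and locate the period via one further LCE round, in the arity-$n^{\epsilon}$ style of \Cref{lem:LCE}.

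Removing the runs invalidated at $p$ uses the same stored index: we look up the runs containing $p$. Only $\mathcal{O}(\log n)$ runs per dyadic period range can contain $p$ while overlapping heavily, but in general many short runs can pass through one position. We therefore bound the runs through a position by $\mathcal{O}(\log n)$ per range only for long runs, and maintain short runs (period at most $n^{\epsilon}$) by brute force, testing each small period with one LCE query. Setting the internal parameter of \Cref{the:LCE} to $\epsilon/c$ for a suitable constant $c$ keeps the total within $\mathcal{O}(n^{\epsilon})$ processors. I expect that making this counting rigorous, so that the set of squares (runs) stays exactly correct with polylog$\cdot n^{\epsilon}$ work, is the hardest step.
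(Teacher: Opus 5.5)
Your plan has a genuine gap, and it sits exactly where you flag it. It is the core of the problem, not bookkeeping.

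The test $\textsc{Lcp}(p,p+q)+\textsc{Lcs}(p,p+q)>q$ only \emph{verifies} one given period. Everything therefore hinges on \emph{finding} the few relevant periods among up to $n$ candidates above $n^{\epsilon}$ without trying them all, and nothing in the proposal does that. The ``guessed offsets'' for long runs are never specified. The grid of $n^{\epsilon}$ anchors followed by ``one further LCE round'' cannot work the way \Cref{lem:LCE} does. That multi-way search is valid only because $S[i,i+m)=S[j,j+m)$ is monotone in $m$. The predicate ``a square of period $q$ passes through $p$'' has no monotone or otherwise searchable structure in $q$, so its values at grid points say nothing about the periods in between. Determining $q$ means finding where a block next to $p$ reoccurs at an unknown distance. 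That is a pattern-matching problem, and LCE queries at a bounded number of fixed alignments cannot solve it.

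The uniqueness claim that leaves one long run per dyadic range is also false as stated. Two runs with distinct periods $q_1,q_2\in[2^t,2^{t+1})$, both of length at least $3\cdot 2^t$, can both contain $p$ if one ends at $p$ and the other starts at $p$. Their overlap is then shorter than $q_1+q_2-\gcd(q_1,q_2)$, so Fine--Wilf gives nothing. You also never bound how many runs a substitution creates or destroys. Finally, a compact array of runs cannot absorb insertions in constant time on $n^{\epsilon}$ processors, since an insertion may shift $\Theta(n)$ entries. So even maintaining an explicit set of runs is not established.

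The paper supplies the missing pattern-matching component, and it avoids the periodic cases by targeting square-freeness. Following Apostolico and Breslauer, it uses range instances on $\log_{3/2} n$ levels. With block length $\ell=(3/2)^k$, an instance looks for squares of length in $[4\ell,6\ell)$ starting in a fixed window of length $\ell$. A substitution touches only $O(1)$ instances per level.

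Every such square has the block $\mathcal{B}_{\ell,i}$ in its left half and a copy of it inside $\mathcal{B}_{\ell,i+2}\mathcal{B}_{\ell,i+3}$. This copy encodes the period, and it is found by a dynamic 2-1 string matching instance built from two dynamic prefix-suffix instances. The prefix-suffix problem is maintained recursively with fan-out $n^{\epsilon}$ and depth $1/\epsilon$. Its answers are one arithmetic progression per dyadic length range.

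Whenever a progression has more than one element, the string has a periodic factor and hence a square, so the instance simply reports a square. Only a single candidate ever needs checking, with one LCE query from \Cref{the:LCE}. One bit per range instance is kept in a \textsc{DynamicNeighbor} structure to answer the global query.

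This shortcut is exactly what an exact runs-based representation like yours cannot use. The paper itself proves only square-freeness in detail and defers reporting all squares to the extensions of Amir et al.
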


We describe the modified algorithm for "square-freeness" in the Appendix, \Cref{sec:proof:squares}.
Roughly speaking, we start with the algorithm by Amir et al., then carefully go through it to identify processes that are independent of each other and can run in parallel.
This leaves only two parts that require additional insight.
The first straightforward part is that the algorithm uses \dynamicLCE~to compare long substrings, which we solve with \Cref{the:LCE}.
The second part is that the algorithm uses a problem called \dynamicPrefixSuffix, which it solves recursively with logarithmic depth.
We modify the recursion so that, instead of splitting the current string into a constant number of fragments at each step, it splits it into $n^{\epsilon}$ fragments, thus reducing the depth of the recursion to constant.

\section{Conclusion}
\label{sec:conclusion}
We gave a dynamic parallel constant-time algorithm for "LCE" queries on $\mathcal{O}(n^{\epsilon})$ processors, for any $\epsilon > 0$.
It combines "string synchronizing sets", a technique from string algorithms, with a technique related to the Muddling-Lemma, from dynamic constant-time algorithms.
We also looked at applications of our algorithm.
Using our algorithm, we can maintain "Dyck languages" and "squares" in parallel constant time on $\mathcal{O}(n^{\epsilon})$ processors.

We list some possible avenues of further research.

"LCE" queries have many applications and we expect there to be other applications of our algorithm.
"String synchronizing sets" likewise have been used for many different purposes, like internal pattern matching or suffix arrays.
This raises the question if it is possible to modify such algorithms to run in parallel constant-time, even with delayed information.

Our algorithm has two limitations.
A maximum string size has to be set during the initialization and the algorithm supports only integer alphabets.
While it is possible to overcome the size limitation, as we show in the Appendix, \Cref{sec:SizeLimit}, finding algorithms for other alphabets remains an open question.

And finally, we would ideally like to have a \emph{provably} work-optimal constant-time algorithm for dynamic "LCE" queries.
This requires an algorithm with matching lower bound.
Unfortunately non-trivial lower bounds that apply specifically to dynamic parallel constant-time algorithms have so far been elusive, even for other problems.




\bibliography{bib}


\appendix

\section{Auxiliary Problems}\label{sec:app:aux}

We describe algorithms for the auxiliary problems mentioned in \Cref{subsec:auxiliary}.
We build these algorithms up from other auxiliary problems.

\subsection{Static Algorithms}
We first recite three known static algorithms which we will employ in our dynamic algorithms.

\subparagraph{First Occurrence}
The first algorithm finds the first occurrence of a given element in an array with linear work.
This problem is trivial for sequential algorithms.
In fact, the na\"ive algorithm that visits all indices in sequence is clearly optimal.
It is however more challenging to solve in parallel constant time.
The na\"ive constant-time algorithm checks, for all indices in parallel, if that index is the first occurrence.
An index is the first occurrence if it is an occurrence and no previous index is an occurrence.
This takes quadratic work.
Fich et al.\ improve this with an algorithm that takes only linear work.

\begin{lemma}\label{lem:linSearch}\cite[Theorem 1]{fichRelationsConcurrentwriteModels1984}
	There is a parallel constant-time algorithm that, given an array $X$ of size $n$ and element $e$, finds the minimum $i$ with $X[i] = e$ on a "common CRCW PRAM" with $\mathcal{O}(n)$ processors.
\end{lemma}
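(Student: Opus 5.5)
The plan follows the two-level blocking scheme of Fich, Ragde and Wigderson: first reduce the search space with a coarse pass, then finish with the quadratic-work brute force on a small block.

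\textbf{Step 1 (local test and brute force).} In parallel, for every index $i$ write a bit $b_i = [X[i]=e]$; this needs $n$ processors. The basic primitive is a brute-force first-one finder on an array of size $s$ using $s^2$ processors. For every pair $i'<i$ with $b_{i'}=1$, a processor writes a common ``disqualified'' flag for $i$. The unique $i$ with $b_i=1$ that is not disqualified writes its index. This is valid in the common model, because all writes to a given cell carry the same value.

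\textbf{Step 2 (coarse pass).} Split $X$ into $\sqrt n$ blocks of size $\sqrt n$. For each block, compute in constant time whether it contains a $1$: every processor with $b_i=1$ writes $1$ into its block's OR-cell, using $n$ processors in total. This yields an array $c$ of length $\sqrt n$.

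\textbf{Step 3 (locate the block).} Apply the brute-force finder to $c$ to get the first nonempty block. This costs $(\sqrt n)^2 = n$ processors.

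\textbf{Step 4 (locate the index).} Apply the brute-force finder to the $\sqrt n$ bits inside the selected block. This again costs $n$ processors. If no block is nonempty, report that $e$ does not occur.

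The total is $O(n)$ processors and $O(1)$ steps. The main obstacle is keeping the brute force at linear work. The choice of $\sqrt n$ blocks is exactly what makes both quadratic stages cost $n$. Processor allocation to pairs is a fixed arithmetic indexing, so it is routine.
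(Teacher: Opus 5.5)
Your proposal is correct and follows the same approach as the paper: split $X$ into $\sqrt{n}$ blocks, compute per block whether it contains $e$, locate the first nonempty block by quadratic brute force on $\sqrt{n}$ entries, then brute-force within that block. Each stage uses $\mathcal{O}(n)$ processors. Your write-up also makes explicit the common-CRCW details (consistent concurrent writes, flag initialization, the empty case) that the paper's sketch leaves implicit.
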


\begin{proof}
	The algorithm divides $X$ into $\sqrt{n}$ equal-sized segments and checks for each segment if it contains $e$ anywhere within the segment.
	Then it finds the first segment that contains $e$ na\"ively.
	Finally it searches that segment for the first $e$ na\"ively.
	Each na\"ive search runs on $\mathcal{O}(\sqrt{n}^2) = \mathcal{O}(n)$ processors, because both times, the input has size $\mathcal{O}(\sqrt{n})$.
\end{proof}

\subparagraph{Maximum}
The second search algorithm finds the maximum with almost-linear work.
For this problem, it is of note that we use very \emph{general} elements, where we can only determine if one element is bigger than another.
We do not restrict ourselves to e.g.~integers.

Like with the first occurrence, this is trivially easy for sequential algorithms.
However the na\"ive constant-time algorithm, which checks for each index if there is no other index with a bigger entry, takes quadratic work.
Valiant improves this with an algorithm that takes almost-linear work.

\begin{lemma}\label{lem:epsSearch}\cite[Corollary 2]{valiantParallelismComparisonProblems1975}
	For any $\epsilon > 0$, there is a parallel constant-time algorithm that, given an array $X$ of size $n$, finds the maximum on a "common CRCW PRAM" with $\mathcal{O}(n^{1+\epsilon})$ processors.
\end{lemma}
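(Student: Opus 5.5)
We prove the statement by induction on a parameter $c$. For every $c \geq 1$ we give a constant-time algorithm that uses $\mathcal{O}(n^{1+1/c})$ processors, or more precisely $\mathcal{O}(n^{1+1/2^{c}})$ with a doubling schedule. Choosing $c$ with $1/c \leq \epsilon$ then gives the claim. The base case is the na\"ive algorithm. We compare all $n^2$ pairs $(X[a],X[b])$ in parallel; a processor that finds $X[b] > X[a]$ writes $0$ into a flag cell for $a$, where all flags start at $1$. An index $a$ whose flag is still $1$ is a maximum. All writes to one cell carry the same value, so this is legal on a "common CRCW PRAM". This uses $\mathcal{O}(n^2)$ processors and constant time, and it only needs comparisons between elements.

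For the inductive step, suppose maxima of arrays of size $m$ can be found in constant time $t_c$ with $\mathcal{O}(m^{1+\delta})$ processors. We split $X$ into $n^{1-\gamma}$ blocks of size $n^{\gamma}$. First, we find the maximum of each block na\"ively, in parallel over all blocks. This costs $n^{1-\gamma}\cdot n^{2\gamma} = n^{1+\gamma}$ processors. Second, we apply the induction hypothesis to the array of $n^{1-\gamma}$ block maxima. This costs $(n^{1-\gamma})^{1+\delta}$ processors. The resulting element is the overall maximum, since every element is dominated by its block maximum. We balance the two costs by choosing $\gamma$ with $(1-\gamma)(1+\delta) \leq 1+\gamma$, that is $\gamma = \delta/(2+\delta)$. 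Then both terms are $\mathcal{O}(n^{1+\gamma})$ with $\gamma < \delta/2$. The running time grows only by a constant, $t_{c+1} = t_c + \mathcal{O}(1)$.

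Starting from $\delta_0 = 1$ and iterating, the exponent $\delta_c$ drops below $\epsilon$ after $\mathcal{O}(\log(1/\epsilon))$ rounds. Since $\epsilon$ is a fixed constant, the number of rounds is constant, so the total time is constant. The block partition and the processor indices are plain index arithmetic, so they cost nothing extra.

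The main obstacle is ensuring the recursion depth, and hence the running time, does not depend on $n$. A doubly-logarithmic scheme in the style of Valiant would give $\mathcal{O}(\log\log n)$ rounds with linear work. We avoid that by fixing the depth in advance as a function of $\epsilon$ alone and accepting the $n^{\epsilon}$ overhead in work. The other point to check is that only the common-write flag trick is used, so that arbitrary comparable elements suffice and no integer tricks are needed.
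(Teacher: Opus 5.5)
Your proposal is correct and uses the same idea as the paper. Split the array into blocks, compute each block's maximum na\"ively with $\mathcal{O}(b^2)$ processors per block of size $b$, collapse each block to its maximum, and repeat a number of times that depends only on $\epsilon$. The paper uses a fixed block size $\lceil n^{\epsilon}\rceil$ for $1/\epsilon$ rounds, while your balanced choice $\gamma = \delta/(2+\delta)$ more than halves the exponent at each level and so needs only $\mathcal{O}(\log(1/\epsilon))$ rounds. That is closer to Valiant's original scheme, but it makes no difference for constant time.
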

\begin{proof}
	The algorithm runs over $\frac{1}{\epsilon}$ iterations.
	It uses a series of increasingly smaller arrays $X_i$ with $|X_i| = n^{1 - i\epsilon}$.
	The first iteration $i = 1$ begins with $X_0 = X$.
	In iteration $i$, the algorithm splits $X_{i-1}$ into segments of length $\lceil n^{\epsilon} \rceil$.
	Then, for each segment, it finds the maximum na\"ively.
	It collapses each segment into its maximum, thus creating the array $X_i$.
	
	It is easy to see that every $X_i$ contains the maximum of $X$.
	Since $|X_{\frac{1}{\epsilon}}| = 1$ the algorithm is left with only that maximum after $\frac{1}{\epsilon}$ iterations.
	Each iteration has at most $\frac{n}{n^{\epsilon}} = n^{1-\epsilon}$ segments and the na\"ive search for each segment runs on $\mathcal{O}(n^{2\epsilon})$ processors.
	Therefore the entire algorithm runs on $\mathcal{O}(n^{1+\epsilon})$ processors.
\end{proof}

\subparagraph{Prefix Sums}
The final static problem is the computation of prefix sums.
While a "common CRCW PRAM" can not calculate exact prefix sums in constant time with a polynomial number of processors (follows from \cite[Theorem 3.3]{furstParityCircuitsPolynomialtime1984} and \cite[Theorem 1]{stockmeyerSimulationParallelRandom1984}), it can calculate useful approximations.
We first define these approximations.

A $\epsilon$-approximate sum is within one and $(1+\epsilon)$-times of the true sum.
This notion carries over to $\epsilon$-approximate prefix sums.
Approximate prefix sums are consistent if the difference of each step from one prefix sum to the next is at least as big as the element at that position.

\begin{definition}\label{def:approxPrefixSum}
	Let $X$ be an array of $n$ integers and $\epsilon > 0$.
	An array $Y$ of $n$ integers contains consistent $\epsilon$-approximate prefix sums of $X$ if, for every $1 \leq i \leq n$, the following hold:
	\begin{enumerate}
		\item \textbf{Approximation:} Let $z_i = \sum_{j = 1}^{i} X[i]$. Then $z_i \leq Y[i] \leq (1+\epsilon)z_i$
		\item \textbf{Consistency:} $Y[i] - Y[i-1] \geq X[i]$.
	\end{enumerate}
\end{definition}

Goldberg and Zwick show that a "common CRCW PRAM" can calculate consistent approximate prefix sums efficiently in parallel constant time \cite[Theorem 4.2]{goldbergOptimalDeterministicApproximate1995}.

\begin{lemma}\label{lem:approxPrefixSum}
	\cite[Theorem 4.2]{goldbergOptimalDeterministicApproximate1995}
	For any $\epsilon > 0$ and $a  > 0$, there is a parallel constant-time algorithm that calculates consistent $(\log n)^{-a}$-approximate prefix-sums on a "common CRCW PRAM" with $\mathcal{O}(n^{1+\epsilon})$ processors.
\end{lemma}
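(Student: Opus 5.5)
This is a cited result (Goldberg and Zwick), so the plan is to reconstruct their argument. The idea is a constant-depth tree of fan-in $n^{\delta}$, built from two ingredients.

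\textbf{Ingredient 1: approximate counting at a node.} At each node we need an approximate sum of $n^{\delta}$ numbers, accurate to within a factor $1+(\log n)^{-a'}$. I would obtain this from the known constant-time approximate counting on a common CRCW PRAM, in the style of Ajtai and Ben-Or, which counts $m$ bits to within $1+(\log m)^{-a}$ using polynomially many processors. The reduction from integers to bits goes as follows:
\begin{itemize}
\item round each $X[i]$ down to its leading $O(a\log\log n)$ bits, i.e.\ write it as $c_i 2^{e_i}$;
\item group the terms by exponent;
\item for each exponent, approximately count the mantissa contributions with the bit counter;
\item combine the at most $O(\log n)$ exponent classes. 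Since the combination is over so few classes, it can be done exactly in constant time by table lookup or brute force.
\end{itemize}
Because the node's input has size only $n^{\delta}$, the polynomial processor cost of this procedure is $n^{c\delta}$. Choosing $\delta$ small keeps this within the budget.

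\textbf{Ingredient 2: the tree.} Split $X$ into blocks of size $n^{\delta}$ and recurse for $1/\delta$ levels, as in the proof of \Cref{lem:epsSearch}. On the way up, compute approximate block totals. On the way down, the prefix sum at position $i$ is the sum of the block-prefix value of its parent block and the approximate in-block prefix. That in-block prefix is itself obtained by approximately summing all $n^{\delta}$ preceding entries in the block, separately for each of the $n^{\delta}$ positions, which gives $n^{2\delta}$ subproblems per block. Errors multiply over the constant number $1/\delta$ of levels, so each level is run with accuracy $(\log n)^{-a-1}$. The total error is then $(1+(\log n)^{-a-1})^{O(1)}\le 1+(\log n)^{-a}$.

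\textbf{Making the output consistent and one-sided.} To ensure $Y[i]\ge z_i$, always round upward: scale each approximate value by $(1+\eta)$, where $\eta$ bounds the relative error. To ensure $Y[i]-Y[i-1]\ge X[i]$, I would give each level's approximation a monotone, additive slack. Concretely:
\begin{itemize}
\item define the in-block prefix as (exact value of the current element) plus the rounded-up approximation of the strictly preceding sum;
\item force monotonicity within each block by taking a running maximum, which is computable in constant time with $n^{2\delta}$ processors per block via pairwise comparisons.
\end{itemize}

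\textbf{Main obstacle.} I expect consistency to be the hardest step, since independent approximations of overlapping prefixes need not differ by at least $X[i]$. My first attempt would be to build the consistent sequence as
\[
Y[i]=\max_{j\le i}\bigl(\tilde z_{j-1}+X[j]+\dots\bigr).
\]
Failing that, I would cite Goldberg and Zwick's Theorem 4.2 directly for this property.
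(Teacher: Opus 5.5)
The paper gives no proof of this lemma and only cites Goldberg and Zwick, so your fallback of citing their Theorem 4.2 is exactly what the paper does. Your reconstruction, however, does not prove the lemma. It breaks down at consistency, which is the one property that separates this lemma from plain approximate prefix sums.

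Write $\tilde p_i$ for your over-approximation of the strictly preceding in-block sum. The running maximum of $W[i]=\tilde p_i+X[i]$ only gives $Y[i]\ge Y[i-1]$. To get $Y[i]-Y[i-1]\ge X[i]$ you would need $\tilde p_i\ge \tilde p_{i-1}+X[i-1]$, and independent approximate counts cannot guarantee this. At accuracy $(\log n)^{-a}$ a counter cannot tell $N$ from $N+1$ once $N\gg(\log n)^{a}$. So with $X[i-1]=X[i]=1$ you can get $\tilde p_{i-1}=\tilde p_i$, hence $W[i-1]=W[i]$ and $Y[i]=Y[i-1]$.

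The same failure occurs at block boundaries. Adding an independently approximated block offset to an in-block prefix does not ensure that consecutive block offsets differ by at least the final in-block value of the earlier block. For that, the upper level must take the lower level's final in-block values (the space actually used) as its input weights, and it must itself be consistent. With that convention, consistency does compose over the $1/\delta$ levels.

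What is missing is a constant-time procedure that produces \emph{consistent} approximate prefix sums of a block of $m=n^{\delta}$ numbers with $m^{O(1)}$ processors; your tree then brings the total down to $n^{1+\epsilon}$. One way to get it:
\begin{enumerate}
\item Build a binary dyadic tree over the block. Compute one-sided approximations $S(v)\le\tilde S(v)\le(1+\eta)S(v)$ of all dyadic sums by approximate counting, exact at the leaves.
\item Inflate them to $A(v)=(1+\eta)^{h(v)}\tilde S(v)$, where $h(v)$ is the height of $v$. Then $A(v_{\mathrm{left}})+A(v_{\mathrm{right}})\le(1+\eta)^{h(v)-1}(1+\eta)S(v)\le A(v)$, and $S(v)\le A(v)\le(1+\eta)^{\log m+1}S(v)$.
\item Let $Y[i]$ be the exact sum of $A$ over the at most $\log m+1$ canonical dyadic pieces of $[1,i]$. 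This is iterated addition of $O(\log n)$ numbers of $O(\log n)$ bits, which takes constant time with polynomially many processors via exact counting of polylogarithmically many bits. It cannot be done by plain table lookup, since the input has $\Theta(\log^2 n)$ bits; the same remark applies to your combination of exponent classes.
\item Passing from $i-1$ to $i$ replaces the left children hanging off the right spine of one dyadic node $N$, whose spine ends at leaf $i$, by $N$ itself. Telescoping superadditivity along that spine gives $Y[i]-Y[i-1]\ge X[i]$.
\item Fixed-point rounding is absorbed by a slightly larger base such as $1+2\eta$. Rounding the final values up preserves consistency for integer inputs. Choosing $\eta=\Theta((\log n)^{-a-1})$ gives the required accuracy after the $1/\delta$ levels.
\end{enumerate}
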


We do not repeat the proof of \Cref{lem:approxPrefixSum} by Goldberg and Zwick \cite[Theorem 4.2]{goldbergOptimalDeterministicApproximate1995} here because it is quite complicated.

We continue with dynamic problems, starting with some further auxiliary algorithms before we tackle the problems described in \Cref{subsec:auxiliary}.

\subsection{String}

The first problem we look at is \dynamicString, which maintains a string with character insertion and deletion.
It forms one pillar of our solution for \dynamicMarkedString.

\begin{itemize}
	\item \AP \emph{Problem:} \intro *\dynamicString
	\item \emph{Initialization:} String $S = \epsilon$, alphabet $\Sigma = [1,n]$, maximum size $n$
	\item \emph{Updates:}
	\begin{itemize}
		\item $\textsc{Insert}(i,\sigma)$: Set $S$ to $S[1,i-1] \circ \sigma \circ S[i,|S|]$.
		\item $\textsc{Delete}(i)$: Set $S$ to $S[1,i-1] \circ S[i+1,|S|]$.
	\end{itemize}
	\item \emph{Queries:}
	\begin{itemize}
		\item $\textsc{Char}(i)$: Return $S[i]$.
		\item $\textsc{SubStr}(i,m)$: Return $S[i,i+m)$.
	\end{itemize}
\end{itemize}

A dynamic parallel constant-time algorithm can maintain \dynamicString~efficiently.

\begin{lemma}\label{lem:dynString}
	For any $\epsilon > 0$, there is a dynamic algorithm for \dynamicString~on a "common CRCW PRAM", where
	\begin{itemize}
		\item the size of the data structure is in $\mathcal{O}(n)$,
		\item the initialization runs in parallel constant time on $\mathcal{O}(n^{\epsilon})$ processors,
		\item updates run in parallel constant time on $\mathcal{O}(n^{\epsilon})$ processors, and
		\item queries run in parallel constant time on $\mathcal{O}(mn^{\epsilon})$ processors.
	\end{itemize}
\end{lemma}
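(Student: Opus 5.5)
We use a tree of constant depth and fan-out $n^{\epsilon}$ over the positions of $S$, with blocks kept in approximately correct positional order. Fix $\delta>0$ small with $1/\delta$ an integer. The string is stored as a balanced $B$-ary tree with $B=n^{\delta}$ and depth $h=1/\delta$. Each leaf is a memory cell from a pool of $n$ cells, each holding one character. Each internal node stores the array of its at most $2B$ children, in order, together with the number of leaves below each child, that is, its weight.

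Because the depth is constant, both operations reduce to constant-time work at each level.

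\textbf{Navigation.} $\textsc{Char}(i)$ descends from the root. At each node it must find the child containing the $i'$-th leaf. It computes exact prefix sums of the children's weights. With $2B$ entries of size at most $n$ this is not directly constant time. However, the node can store its prefix sums explicitly and keep them exact on updates, because an insertion or deletion changes the weight of only one child per level. That shifts a suffix of the prefix-sum array by $\pm1$, which $\mathcal{O}(B)$ processors do in one step. Given the stored prefix sums, the correct child is the first index whose prefix sum is $\ge i'$, found by \Cref{lem:linSearch} with $\mathcal{O}(B)$ processors. Over $h$ levels this costs $\mathcal{O}(n^{\delta})$ processors. $\textsc{SubStr}(i,m)$ runs $m$ independent $\textsc{Char}$ queries in parallel, for $\mathcal{O}(mn^{\epsilon})$ processors.

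\textbf{Updates.} $\textsc{Insert}(i,\sigma)$ takes a fresh leaf cell from a free stack, whose pointer is maintained in $\mathcal{O}(1)$. It locates the insertion path by the same navigation, inserts the leaf into the bottom node's child array by shifting $\mathcal{O}(B)$ entries, and updates the stored prefix sums along the whole path. All $h$ levels are handled in parallel.

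\textbf{Main obstacle: balance.} A node may overflow past $2B$ children, and splitting can cascade up to $h$ levels. Since $h$ is constant, splitting all overflowing ancestors at once is still constant time. For each such node we allocate a new sibling, copy half of the children, and recompute prefix sums. Recomputing exact prefix sums over $B$ entries is the delicate point. It is resolved by storing, besides the prefix sums, the split position. The right half's prefix sums equal the old sums minus the sum at the split point, so no fresh summation is needed.

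Deletions cause underflow. Instead of merging, we allow nodes to shrink arbitrarily and rely on the fixed maximum size $n$. The tree is built over a fixed set of $n$ leaf slots, in which deleted leaves become weight-$0$ placeholders. Its depth then never exceeds $\log_B(n)+1$. Splitting is still needed when insertions concentrate in one place. As an alternative, we could periodically rebuild lazily in the background, in the spirit of the delayed processing of \Cref{sec:updates}. I would try the placeholder-plus-splitting variant first.

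\textbf{Initialization and bounds.} Initialization creates an empty root and the free stack, which is lazily initialized in $\mathcal{O}(1)$ with $\mathcal{O}(n^{\epsilon})$ processors. The total number of nodes is $\mathcal{O}(n)$, giving space $\mathcal{O}(n)$.
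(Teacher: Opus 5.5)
\textbf{Gap: deletions are not handled.} Your navigation and insertion machinery is essentially the paper's. That covers exact prefix sums stored per node, a $\pm1$ shift of a suffix of the prefix-sum array at each node on the path, splitting an overflowing node and getting the right half's prefix sums by subtracting the sum at the split point, and \textsc{SubStr} as $m$ parallel \textsc{Char} queries. The problem is what happens on deletion.

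In your preferred variant nothing is ever removed. A deleted leaf stays behind as a weight-$0$ slot, so the number of slots is the total number of insertions performed so far, not the current length $|S|\le n$. The problem bounds only the length of the string, not the number of updates. This has three consequences:
\begin{itemize}
\item After $n$ insertions your pool of $n$ leaf cells is exhausted.
\item The space is no longer $\mathcal{O}(n)$.
\item The depth bound $\log_B(n)+1$ does not follow. The B-tree argument only gives $\log_B(\text{number of slots})+1$. A long enough insert/delete sequence eventually forces the root to split, so depth and running time grow without bound.
\end{itemize}
Letting nodes shrink arbitrarily has the same defect one level up. Every split creates an internal node that is never reclaimed, so nodes accumulate until the root overflows. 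Lazy background rebuilding might repair this, but as written it is a remark, not an argument.

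\textbf{What is missing: an occupancy lower bound maintained under deletions.} The paper keeps the invariant that any two adjacent siblings together have at least $n^{\epsilon}$ children. When a deletion violates it, the two siblings are merged. The merged node has fewer than $n^{\epsilon}$ children, so it cannot overflow. The absorbed node's prefix sums are just offset by the left node's total, which costs $\mathcal{O}(n^{\epsilon})$ processors. The merge can cascade only through the constantly many levels.

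Because this invariant only guarantees nodes that are roughly half full, the paper also fixes the root-to-leaf distance at $2\lceil 1/\epsilon\rceil$ instead of $1/\epsilon$. With that depth, a root with more than $2n^{\epsilon}$ children would force more than $n$ leaves (for sufficiently large $n$). So the root never overflows while $|S|\le n$, the depth is constant, and the size stays linear. Adding this merge rule and the doubled depth to your construction closes the gap.
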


\subparagraph{Data Structure}

The algorithm uses a tree where each inner node has up to $2n^{\epsilon}$ children, stored in a compact array.
Each node represents some substring.
Inner nodes represent the concatenation of the substrings of their children.
Leaves store single characters.

The distance between the root and a leaf is always $2\lceil \frac{1}{\epsilon} \rceil$.
The tree for the empty string is a path of length $2\lceil \frac{1}{\epsilon} \rceil-1$ from the root to an `inner' node with no leaves.

The number of children of inner nodes is adaptive.
This allows the algorithm to insert or delete characters anywhere without having to shift more than $\mathcal{O}(n^{\epsilon})$ children.

Inner nodes maintain the invariant that two adjacent siblings have in total at least $n^{\epsilon}$ children.
Every inner node maintains prefix sums over its children, counting the number of leaves in their subtrees.
The algorithm uses these prefix sums to resolve the index of each leaf.
The index of the character in a leaf is equal to the sum over all prefix sums along the path from that leaf to the root plus one.

See \Cref{fig:dyn_string} for an example.

\begin{figure}[ht]
	\centering \includegraphics{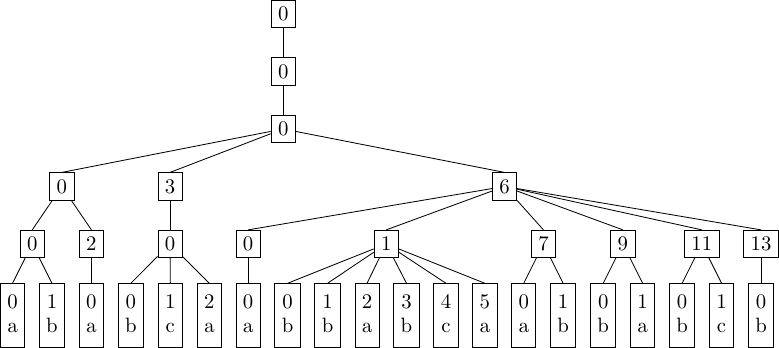}
	\caption{Possible tree for string $S = \text{ababcaabbabcaabbabcb}$ with $\epsilon = \frac{1}{3}, n = 27, n^{\epsilon} = 3$. (Note that $n$ is the \emph{maximum} length of $S$, not necessarily the \emph{actual} length.) Each node is labeled with its prefix sum, i.e.\ the total number of leaves in all its left siblings. Each leaf is also labeled with its character.}
	\label{fig:dyn_string}
\end{figure}

\subparagraph{Queries}
A query $\textsc{Char}(i)$ starts at the root.
It uses the prefix sums stored at the root to determine the child whose subtree contains the $i$-th character, then continues at that child until it reaches a leaf.

A query $\textsc{SubStr}(i,m)$ calls $\textsc{Char}$ a total of $m$ times for every index in $[i,i+m)$ in parallel and collects all results in an array.

\subparagraph{Updates}
To insert a character, the algorithm uses the procedure from \textsc{Char} to find the right spot and inserts a new leaf for this character.
It updates the prefix sums at every node along the path from the new leaf to the root, adding $1$ to every prefix sum that belongs to a node on the path or a right sibling of a node on the path.
If the parent of the new leaf exceeds its size constraint of at most $2n^{\epsilon}$ children, the algorithm splits it in half.
This might in turn cause its parent to grow to big, in which case this procedure travels upwards, splitting nodes until it reaches a node where the size constraint is met.
The root always meets this size constraint so long as the length of the string does not exceed $n$.
See \Cref{fig:dyn_string_insert} for an example.

\begin{figure}[ht]
	\centering \includegraphics{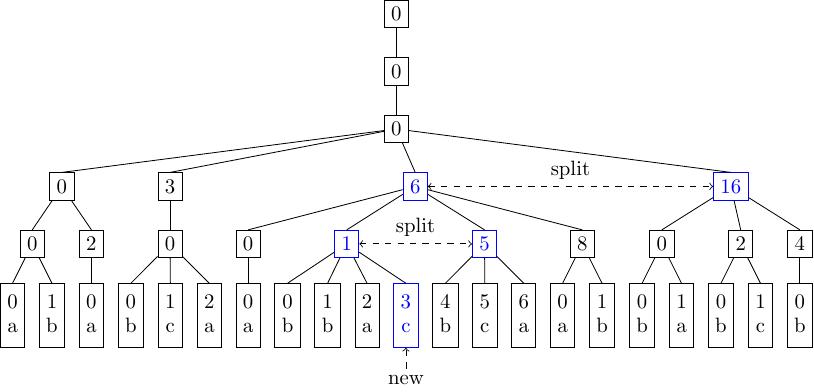}
	\caption{Tree from \Cref{fig:dyn_string} after update $\textsc{Insert}(11,\text{c})$. The new leaf and split inner nodes are marked.}
	\label{fig:dyn_string_insert}
\end{figure}

Deleting characters works similarly.
The only difference is that now the algorithm has to check the other size constraint, which demands that neighboring siblings have at least $n^{\epsilon}$ children in total.
If it encounters a pair of children that does not fit this size constraint, it merges these children.
See \Cref{fig:dyn_string_delete} for an example.

\begin{figure}[ht]
	\centering \includegraphics{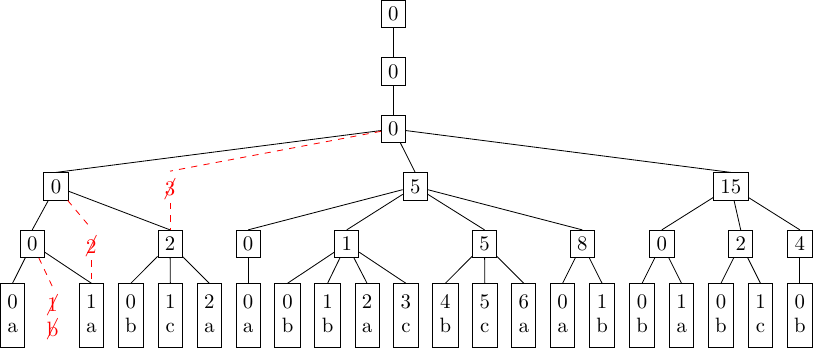}
	\caption{Tree from \Cref{fig:dyn_string_insert} after update $\textsc{Delete}(2)$. Deleted nodes are crossed out.}
	\label{fig:dyn_string_delete}
\end{figure}

\subsection{Nearest Neighbor}

The next algorithm solves the problem \dynamicNeighbor, which maintains successors and predecessors over marked elements of some fixed domain.
This algorithm was taken from \cite[Lemma 4.1]{schmidtWorksensitiveDynamicComplexity2021} and then extended to support batch updates.
The algorithm for \dynamicNeighbor~serves as the other pillar of our algorithm for \dynamicMarkedString.

\begin{itemize}
	\item \AP \emph{Problem:} \intro *\dynamicNeighbor
	\item \emph{Initialization:} Domain $\Delta = [1,n]$, subset $Z \subseteq \Delta$ with $Z = \emptyset$ initially
	\item \emph{Updates:}
	\begin{itemize}
		\item $\textsc{Mark}(X)$: Set $Z$ to $Z \cup X$.
		\item $\textsc{Unmark}(X)$: Set $Z$ to $Z \setminus X$.
	\end{itemize}
	\item \emph{Queries:}
	\begin{itemize}
		\item $\textsc{Succ}(i)$: Return $\min\{ j \in Z \mid j > i \}$, or $\bot$ if none exists.
		\item $\textsc{Pred}(i)$: Return $\max\{ j \in Z \mid j < i \}$, or $\bot$ if none exists.
	\end{itemize}
\end{itemize}

The size of the batches in our use-cases is very small with $|X| = \mathcal{O}(\log^* n)$.

\begin{lemma}\label{lem:fixedSuccessor}\cite[Lemma 4.1]{schmidtWorksensitiveDynamicComplexity2021}
	For any $\epsilon > 0$, there is a dynamic algorithm for \dynamicNeighbor~on a "common CRCW PRAM", where
	\begin{itemize}
		\item the size of the data structure is in $\mathcal{O}(n)$,
		\item the initialization runs in parallel constant time on $\mathcal{O}(n)$ processors,
		\item updates run in parallel constant time on $\mathcal{O}(\log n |X|^{2+\epsilon})$ processors, and
		\item queries run in parallel constant time on $\mathcal{O}(\log n)$ processors.
	\end{itemize}
\end{lemma}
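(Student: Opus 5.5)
The statement to prove is \Cref{lem:fixedSuccessor}: \dynamicNeighbor{} over the fixed domain $[1,n]$ with batch updates. I would extend the construction of \cite[Lemma 4.1]{schmidtWorksensitiveDynamicComplexity2021}. The underlying structure is a complete binary tree over $[1,n]$. Each node $v$ stores a bit $b_v$, which records whether its subtree contains a marked element. Each node also stores $\min_v$ and $\max_v$, the smallest and largest marked leaf in its subtree. This takes $\mathcal{O}(n)$ space. Initialization sets all bits to $0$ and all extrema to $\bot$, using one processor per node, so $\mathcal{O}(n)$ processors.

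\textbf{Queries.} For $\textsc{Succ}(i)$, consider the $\log n$ ancestors $u$ of leaf $i$. Assign one processor to each ancestor. That processor checks whether the walk from $i$ to the root enters $u$ from its left child and whether the right child $r_u$ has $b_{r_u}=1$. Among all ancestors passing this test, the successor is $\min_{r_u}$ of the lowest one. The lowest one is a first occurrence in an array of length $\log n$. By \Cref{lem:linSearch} it can be found with $\mathcal{O}(\log n)$ processors. $\textsc{Pred}$ is symmetric.

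\textbf{Updates.} A batch $X$ of size $k=|X|$ changes only the $\mathcal{O}(k\log n)$ ancestors of the leaves in $X$. For each pair (element $x\in X$, ancestor $u$ of $x$), I assign processors to recompute $b_u$, $\min_u$ and $\max_u$ in constant time.

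\begin{itemize}
\item For $\textsc{Mark}(X)$, the new $\min_u$ is the minimum of the old $\min_u$ and the elements of $X$ lying under $u$. This is a minimum over at most $k+1$ integers. By \Cref{lem:epsSearch} it needs $\mathcal{O}(k^{1+\epsilon})$ processors. Over $k\log n$ pairs the total is $\mathcal{O}(\log n\, k^{2+\epsilon})$ processors. Concurrent writes to a shared ancestor are consistent, since every processor working on $u$ computes the same value.
\item For $\textsc{Unmark}(X)$ this is the \textbf{main obstacle}. If the old $\min_u$ is removed, the new minimum is not available locally. I would use the characterization that the new $\min_u$ is the leftmost surviving marked leaf in the subtree. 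Descending from $u$, it equals $\min_w$ of the leftmost child $w$ of a node on the path whose own subtree still contains a marked element not in $X$.
\end{itemize}

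Concretely, I would first recompute all affected bits $b$ bottom-up in a non-sequential way. Node $u$ becomes empty exactly when every marked leaf under it lies in $X$. This can be tested by checking $\min_u,\max_u$ against $X$ together with a count. The alternative I would try is to process the $\le k$ removed elements in sorted order. The removed leaves under $u$ split its subtree into at most $k+1$ gaps. For each gap, the first marked leaf is obtained by a successor query on the old structure, which is still valid inside the gap. The new $\min_u$ is the first gap result that lies under $u$ and is not in $X$. This costs $\mathcal{O}(\log n)$ processors per query. To stay within the budget I would restrict to the $\mathcal{O}(k)$ queries started just after each removed element: $\textsc{Succ}$ on the old structure, skipping elements of $X$, computed as a minimum over $k$ candidates. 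These can be shared across all ancestors. Each ancestor then selects among $\mathcal{O}(k)$ candidates, using \Cref{lem:epsSearch} with cost $k^{1+\epsilon}$ per ancestor. This keeps the total at $\mathcal{O}(\log n\, k^{2+\epsilon})$.

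The step I must check carefully is that a successor chain of the form $x\mapsto\textsc{Succ}(x)$, which may itself land in $X$, can be resolved without iteration. I would handle it as follows. Sort $X$ using $k^2$ comparisons. Then, for each $x$, take the first old successor that is not in $X$: it is the old successor of the last element of the maximal run of $X$ elements that are consecutive in the old marked order. That run can be identified by pairwise tests in $\mathcal{O}(k^2)$ per element.
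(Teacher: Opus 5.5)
Your proposal is essentially the paper's proof. It uses the same binary tree storing per-node minima and maxima, and answers \textsc{Succ}/\textsc{Pred} by finding, with \Cref{lem:linSearch}, the lowest right (resp.\ left) sibling on the leaf-to-root path whose extremum is defined. \textsc{Mark} uses \Cref{lem:epsSearch} at every affected ancestor, and \textsc{Unmark} issues old successor/predecessor queries for the elements of $X$ and then selects, per ancestor, among $\mathcal{O}(|X|)$ candidates with \Cref{lem:epsSearch}.

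The chain resolution you flag as the delicate step is unnecessary. The paper takes, at each ancestor $u$ whose old minimum lies in $X$, the smallest $x\in X$ whose \emph{direct} old successor lies under $u$ and is not in $X$. This suffices because the old predecessor of the new minimum is itself a removed element under $u$. If you keep your version, compute each run end as a minimum over $|X|$ candidates with \Cref{lem:epsSearch}, not by $\mathcal{O}(|X|^2)$ pairwise tests per element: $\mathcal{O}(|X|^3)$ overshoots the $\mathcal{O}(\log n\,|X|^{2+\epsilon})$ bound for large batches.
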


The following dynamic algorithm is taken from Schmidt et al.\ \cite[Lemma 4.1]{schmidtWorksensitiveDynamicComplexity2021}, with some modifications for batch updates.

\subparagraph{Data Structure}
The algorithm uses a binary tree with one leaf for every element in the domain $\Delta = [1,n]$.
Every inner node represents some subset $[\alpha,\beta]$ of $\Delta$.

The inner node for subset $[\alpha,\beta]$ stores  $\min(Z \cap [\alpha,\beta])$ and $\max(Z \cap [\alpha,\beta])$.
The algorithm uses these min- and max-values to derive the successor and predecessor of a given $i$.

See \Cref{fig:dyn_neighbor} for an example.

\begin{figure}[ht]
	\centering \includegraphics[scale=0.95]{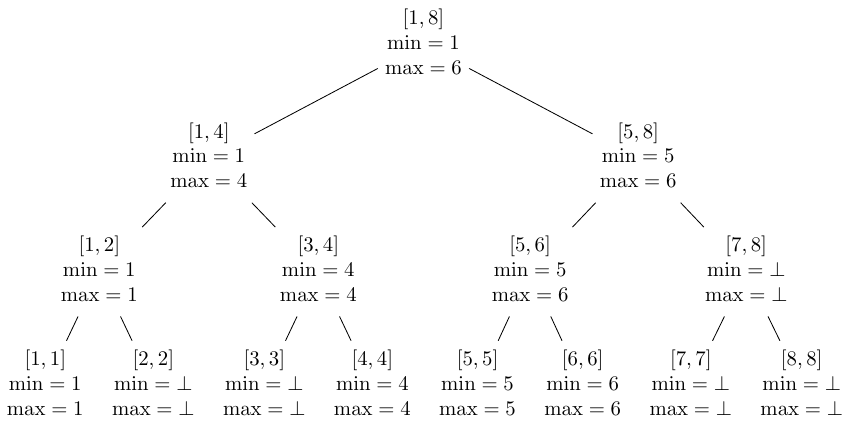}
	\caption{Tree for domain $\Delta = [1,8]$ and subset $Z = \{1,4,5,6\}$}
	\label{fig:dyn_neighbor}
\end{figure}

\subparagraph{Query}
The successor of an index $i$ is stored in a leaf somewhere to the right of the $i$-th leaf.
The algorithm can cover this entire search range by concatenating the intervals of just logarithmically many nodes.
The successor is the min-value of this concatenated interval.

To answer $\textsc{Succ}(i)$, the algorithm looks in parallel at every right sibling of a node along the path from the $i$-th leaf to the root.
The intervals of the right siblings form the concatenation mentioned above.
It uses \Cref{lem:linSearch} to identify the lowest sibling where the min-value is defined.
The lowest min-value is the min-value of the concatenation.
The successor is the min-value of this sibling.

Answering $\textsc{Prec}(i)$ works similarly.
The only difference is that the algorithm looks at the max-values of left siblings.

\subparagraph{Updates}
Marking or unmarking some $i$ affects all nodes on the path from the $i$-th leaf to the root.
For marking, the algorithm might have to replace the min-values or max-values stored at a node with a new, better value from $X$.
This new value is always the smallest/largest value in $X$ that belongs to this node.
For unmarking, it has to replace all min-values and max-values of nodes that store a value from $X$ with the next-best value from $Z \setminus X$.
The next-best value is always a successor/predecessor of a removed value.

To process $\textsc{Mark}(X)$, the algorithm looks in parallel at every node that lies on a path from a leaf for a value in $X$ to the root.
For every such node, it uses \Cref{lem:epsSearch} to determine the minimum value in $X$ that belongs to this node.
It overwrites the min-value stored at the node with the this new minimum if the new minimum is better.
It updates the max-value similarly.

For updates $\textsc{Unmark}(X)$, the algorithm first queries the successor and predecessor of every value in $X$.
For every value in $x \in X$, it looks at every node that lies on a path from the $x$-th leaf to the root, all in parallel.
At every node it visits, it checks if the min-value is in $X$.
If it is, it uses \Cref{lem:epsSearch} to find the smallest value in $X$ whose successor belongs to this node and is not in $X$.
This successor becomes the new min-value.
It updates max-values similarly.

\subsection{Marked String}

\Cref{lem:shiftingSuccessor} states that there is a constant-time algorithm for \dynamicMarkedString~on a "common CRCW PRAM" where updates run on $\mathcal{O}(n^{\epsilon} |X|^{2+\epsilon})$ processors and queries run on $\mathcal{O}(mn^{\epsilon})$ processors.
The problem \dynamicMarkedString~is defined as follows.

\begin{itemize}
	\item \emph{Problem:} \dynamicMarkedString
	\item \emph{Initialization:} Alphabet $\Sigma = [1,n]$, string $S = \epsilon$, marked indices $Z = \emptyset$, maximum size $n$
	\item \emph{Updates:}
	\begin{itemize}
		\item $\textsc{Insert}(i,\sigma): \begin{cases}
			\text{Set $S$ to $S[1,i-1] \circ \sigma \circ S[i,|S|]$.}\\
			\text{Set $Z$ to $\{ j \mid j \in Z, j < i \} \cup \{ j+1 \mid j \in Z, j \geq i \}$.}
		\end{cases}$
		\item $\textsc{Delete}(i): \begin{cases}
			\text{Set $S$ to $S[1,i-1] \circ S[i+1,|S|]$.}\\
			\text{Set $Z$ to $\{ j \mid j \in Z, j < i \} \cup \{ j-1 \mid j \in Z, j > i \}$.}
		\end{cases}$
		\item $\textsc{Mark}(X)$: Set $Z$ to $Z \cup X$.
		\item $\textsc{Unmark}(X)$: Set $Z$ to $Z \setminus X$.
	\end{itemize}
	\item \emph{Queries:}
	\begin{itemize}
		\item $\textsc{SubStr}(i,m)$: Return $S[i,i+m)$.
		\item $\textsc{Succ}(i)$: Return $\min\{ j \in Z \mid j > i \}$, or $\bot$ if none exists.
		\item $\textsc{Pred}(i)$: Return $\max\{ j \in Z \mid j < i \}$, or $\bot$ if none exists.
	\end{itemize}
\end{itemize}

Generally speaking, our algorithm for \dynamicMarkedString~combines \dynamicString~with \dynamicNeighbor.
It uses the tree from \dynamicMarkedString~and adds to each node the min- and max-values from \dynamicNeighbor.

\subparagraph{Data Structure}
The algorithm uses the tree from \dynamicString.
Inner nodes represent some substring of $S$ and have up to $2n^{\epsilon}$ children stored in a compact array.
There is a leaf for every character in $S$.

The distance between the root and a leaf is always $2\lceil \frac{1}{\epsilon} \rceil$ and the tree for $S = \epsilon$ is a length $2\lceil \frac{1}{\epsilon} \rceil - 1$ path from the root to an `inner' node with no leaves.

The total number of children of two neighboring inner nodes is always at least $n^{\epsilon}$.
Inner nodes maintain prefix sums over the sizes of the sub-intervals of their children.

A leaf for some character $S[i]$ stores if $i \in Z$.
An inner node for the substring $S[\alpha,\beta] \subseteq \Delta$ stores $\min(Z \cap [\alpha,\beta])$ and $\max(Z \cap [\alpha, \beta])$.

\subparagraph{Queries}
The query \textsc{SubStr} is identical to its counterpart in \dynamicString.

The queries \textsc{Succ} and \textsc{Pred} are very similar to \dynamicNeighbor.
The only difference is that, due to the adaptive nature of the tree, they travel through the levels in sequence via the same procedure as for \dynamicString.

To answer $\textsc{Succ}(i)$, the algorithm first travels to the leaf that stores $S[i]$.
It finds this leaf with the same procedure as was used for \dynamicString~to answer $\textsc{Char}(i)$.

It looks at all siblings to the right of this leaf.
If one of them stores a marked character, it return the index of the leftmost sibling with a marked character.

Otherwise, the algorithm travels bottom-up through the tree, stopping when it finds a level where the current node has a sibling to the right of it where the min-value is defined.
It returns the min-value of the leftmost sibling where it is defined.

It answers $\textsc{Pred}(i)$ similarly.

\subparagraph{Updates to the Domain}
The updates $\textsc{Insert}(i,\sigma)$ and $\textsc{Delete}(i)$ that modify the underlying string are almost identical to their counterparts in the algorithm for \dynamicString.
The only difference is that, when an update splits or merges inner nodes, it also has to update their min- and max-values.
This is trivial for merging nodes.
When it however splits one node into a left and right node, the algorithm has to calculate a new max-value for the left node and min-value for the right node.
The max-value of the left node is the max-value from the rightmost child of this node where the max-value is defined, and vice versa for the min-value of the right node.

\subparagraph{Updates to the Subset}
Updates $\textsc{Mark}(X)$ and $\textsc{Unmark}(X)$ that modify the subset $Z$ are very similar to their counterparts in the algorithm for \dynamicNeighbor.
The only difference is that, due to the adaptive nature of the tree, updates travel through the levels in sequence.

\subsection{Search Tree}

We describe a dynamic search tree which we will ultimately use in our algorithm for \dynamicNames.
The tree itself solves the problem \dynamicFunction.
The goal in \dynamicFunction~is to store a function that maps keys to values.
Each key is a length $m$ string over the alphabet $\Sigma = [1,n]$.
The value is from some domain $\Delta$ and we assume for simplicity's sake that each value fits in constant memory and can be copied in constant time.
Updates can add new keys, delete existing keys, and change the value of a key.
The value of a new key is undefined ($\bot$) initially.
\begin{itemize}
	\item \AP \emph{Problem:} \intro *\dynamicFunction
	\item \emph{Initialization:} Alphabet $\Sigma = [1,n]$, set $Z \subseteq \Sigma^m$ with $Z = \emptyset$ initially, function $f:Z \to \Delta$, string length $m$, maximum size $n$ of $Z$
	\item \emph{Updates:}
	\begin{itemize}
		\item $\textsc{SetValue}(x,y)$: If $x \in Z$, set $f(x)$ to $y$.
		\item $\textsc{Add}(X)$: Set $Z$ to $Z \cup X$. For each new $x \in X$, set $f(x) = \bot$.
		\item $\textsc{Remove}(X)$: Set $Z$ to $Z \setminus X$.
	\end{itemize}
	\item \emph{Queries:}
	\begin{itemize}
		\item $\textsc{GetValue}(x)$: Return $f(x)$, or $\bot$ if $x \notin Z$.
	\end{itemize}
\end{itemize}

Note that the domain $Z$ of function $f$ is not fixed.
It grows with $\textsc{Add}(X)$ and shrinks with $\textsc{Remove}(X)$.

In our use-cases, the batch size is $|X| = \mathcal{O}(\log^* n)$ and the string-length is $m = \mathcal{O}(n^{\epsilon})$.

\begin{lemma}\label{lem:searchTree}
	For any $\epsilon > 0$, there is a dynamic algorithm for \dynamicFunction~on a "common CRCW PRAM", where
	\begin{itemize}
		\item the size of the data structure is in $\mathcal{O}(n^{1+\epsilon})$,
		\item the initialization runs with constant work,
		\item updates run in parallel constant time on $\mathcal{O}(2^{|X|}mn^{\epsilon})$ processors,
		\item multiple updates $\bettersc{SetValue}(x,y)$ for different $x$ can run concurrently, and
		\item queries run in parallel constant time on $\mathcal{O}(mn^{\epsilon})$ processors.
	\end{itemize}
\end{lemma}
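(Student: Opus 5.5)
We propose to realize \dynamicFunction~with a trie of high arity over the key strings.
Each key $x \in \Sigma^m$ is split into $m$ characters from $\Sigma = [1,n]$, and each character is in turn written in base $n^{\epsilon}$, giving a string of $m\lceil 1/\epsilon\rceil$ digits from $[0,n^{\epsilon})$.
The trie has one level per digit, and each trie node stores a child array of size $n^{\epsilon}$ indexed directly by the digit.
Leaves store the value $f(x)$ together with a counter of how many keys pass through each node; the counters are needed so that removals can prune nodes.

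Because the set $Z$ has at most $n$ keys, at most $n$ root-to-leaf paths exist at any time, though each path has length $\mathcal{O}(m/\epsilon)$.
To obtain the claimed space $\mathcal{O}(n^{1+\epsilon})$, independent of $m$, we compress unary paths as in a compacted trie.
A compressed edge stores a pointer to the key (a leaf) that realizes it, together with a digit range.
The compacted trie then has $\mathcal{O}(n)$ branching nodes, each with an array of size $n^{\epsilon}$.
For constant-work initialization, arrays are allocated lazily: we keep a bump pointer into a memory pool and use the standard ``initialization-free array'' trick of cross-validated back-pointers, so that uninitialized cells are never trusted.

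A query $\textsc{GetValue}(x)$ must find the leaf for $x$ in constant time, even though the path has $\Theta(m)$ nodes.
We therefore do not walk the trie sequentially.
Instead, for every prefix length $p \le m\lceil1/\epsilon\rceil$ in parallel, we test whether the length-$p$ prefix of $x$ is a node of the trie.
To make this test constant time, each trie node is additionally stored in a table keyed by prefix.
Since keys are strings and not hashable in constant time deterministically, we instead use the following test.
Descend the compressed trie, charging $n^{\epsilon}$ processors per node for comparing an edge label against $x$, which costs $\mathcal{O}(m)$ character comparisons done in parallel per edge with \Cref{lem:linSearch} to find the first mismatch.
The difficulty is that the depth of the compacted trie can still be $\Theta(n)$ in the worst case, so a sequential descent is not constant time.

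This is the main obstacle.
We would resolve it by replacing the plain trie with a $B$-tree of arity $n^{\epsilon}$ ordered lexicographically on keys, of depth $\lceil 1/\epsilon\rceil$.
This is the same adaptive tree as in \Cref{lem:dynString}, with keys in place of leaves.
Each internal node stores its separating keys, in the form of pointers to leaf keys.
A query descends $\mathcal{O}(1/\epsilon)$ levels.
At each level it compares $x$ against all $\le 2n^{\epsilon}$ separators in parallel; each comparison is a lexicographic comparison of length-$m$ strings, done in constant time with $\mathcal{O}(m)$ processors by \Cref{lem:linSearch} (first mismatch).
The first separator exceeding $x$ is found with \Cref{lem:linSearch} as well, since the comparison outcomes are monotone.
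This gives $\mathcal{O}(mn^{\epsilon})$ processors per query, as required.
$\textsc{SetValue}$ is a query followed by a write at the leaf; different keys touch different leaves, so these updates can run concurrently.

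For $\textsc{Add}(X)$ and $\textsc{Remove}(X)$, we first locate all keys of $X$ in parallel, and we sort $X$ by pairwise comparisons in $\mathcal{O}(|X|^2 m)$ work.
The insertions then resemble \Cref{lem:dynString}, but with up to $|X|$ insertions possibly landing in the same node.
We handle this by inserting them as a block (shifting $\mathcal{O}(n^{\epsilon})$ entries per node) and splitting overfull nodes into $\mathcal{O}(|X|/n^{\epsilon}+1)$ pieces.
Splits may cascade upward; there are constantly many levels, and each level is handled by recomputing the affected nodes.
The factor $2^{|X|}$ in the bound gives slack for a naive treatment of interactions among the $|X|$ updates.
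Concretely, we can precompute, for every subset of $X$, the combined effect on a node.
This is wasteful but affordable, and it sidesteps careful case analysis of overlapping splits and merges, including merges for removals under the sibling-size invariant.
Finally, we must check that the space bound $\mathcal{O}(n^{1+\epsilon})$ holds: there are $\mathcal{O}(n)$ nodes of size $\mathcal{O}(n^{\epsilon})$, and the keys themselves are stored once in $\mathcal{O}(nm)$ space.
The latter fits the stated bound since in our uses $m=\mathcal{O}(n^{\epsilon})$.
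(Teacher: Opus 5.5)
Your final design, after you drop the trie, has the same skeleton as the paper's. It is a constant-depth tree of arity $\Theta(n^{\epsilon})$ with children ordered by key. Lexicographic comparisons go through the first mismatch (\Cref{lem:linSearch}), so queries cost $\mathcal{O}(mn^{\epsilon})$ processors, and \textsc{SetValue} runs concurrently on distinct leaves.

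The gap is in \textsc{Add} and \textsc{Remove}, which is where the difficulty of the lemma lies. With compact child arrays, every surviving child must learn its new index, i.e.\ how many entries are inserted or deleted in front of it. That is a counting/prefix-sum problem, and exact prefix sums are not computable in constant time with polynomially many processors in general.

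``Inserting them as a block'' does not address this. The keys of $X$ that land in one node are generally interleaved with the existing children, not contiguous. One level up, a parent receives new pieces from several split children at once, so each of its children must add up the numbers of pieces created to its left.

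For \textsc{Remove} you give no procedure at all. Merging with a sibling as in \Cref{lem:dynString} breaks down when a batch creates a run of adjacent underfull nodes; the paper's footnote points out exactly this. ``Precompute, for every subset of $X$, the combined effect on a node'' does not say what is computed or how it yields array positions, so it does not close the gap. Also, turning your $|X|^2$ pairwise comparisons into ranks already requires exact counting, so $\mathcal{O}(|X|^2 m)$ understates the sorting step; the paper charges $2^{|X|}m$ for it.

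The ingredient the paper uses and you lack is padded arrays combined with approximate prefix sums. Nodes hold padded arrays of size $2n^{\epsilon}\log n$ with void cells. The shifts that make room for new leaves, or for the pieces of a split node inside its parent, come from consistent $(\log n)^{-a}$-approximate prefix sums (\Cref{lem:approxPrefixSum}); the $\log n$ slack in the array size absorbs their error. For removals, each affected node gathers all its grandchildren into one array of size at most $4n^{2\epsilon}\log^2 n$ and redistributes them, again via approximate prefix sums.

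Your compact-array route could probably be completed, but you would have to supply the argument:
\begin{itemize}
\item Since $X$ is sorted, the number of new keys in front of an old child is found by locating one boundary in $X$ with $|X|$ parallel comparisons.
\item A node that overflows after receiving $k\geq 1$ new entries splits off at most $k$ new pieces. Hence the new entries at every level can be charged to distinct elements of $X$.
\item All shifts then reduce to exact counting over at most $|X|$ marked positions, which is where the $2^{|X|}$ factor can legitimately be spent.
\item For \textsc{Remove} you would additionally need an explicit rule that resolves runs of up to $|X|$ consecutive underfull siblings.
\end{itemize}
As written, the proposal contains none of this, so the update bound is not established.
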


We use a similar tree to \Cref{lem:dynString}.
The main difference is that the children are sorted by their substrings.
This comes at the cost of no longer being able to use compact arrays.%
\footnote{Compact sorting is impossible in parallel constant time on a polynomial number of processors (follows from \cite[Theorem 3.3]{furstParityCircuitsPolynomialtime1984} and \cite[Theorem 1]{stockmeyerSimulationParallelRandom1984}).}
Instead children are stored in padded arrays.
A padded array may contain unused \textit{void} cells.
A padded array is sorted if removing every void cell produces a sorted (compact) array.

\subparagraph{Data Structure}
The algorithm uses a tree where inner nodes have up to $2n^{\epsilon}\log n$ children, stored in a padded array of size $2n^{\epsilon}\log n$.

The distance between the root and a leaf is always $2\lceil \frac{1}{\epsilon} \rceil$ and the tree for $Z = \emptyset$ is a length $2\lceil \frac{1}{\epsilon} \rceil - 1$ path from the root to an `inner' node with no leaves.

Similar to \dynamicString, the number of children is adaptive to limit the effect of insertions and deletions.
Unlike \dynamicString, the children are not stored compactly.
This is the cost of maintaining the sorting in constant time.

The total number of children of two neighboring inner nodes is always at least $n^{\epsilon}$.
Inner nodes keep track of the smallest and largest key in their subtree.
The children of each inner node are sorted so that the largest key of every node is smaller than the smallest key of that node's right sibling.

Each leaf stores a key $z \in Z \subseteq \Sigma^m$ and its value.

\subparagraph{Queries}
A query $\textsc{GetValue}(x)$ traverses the tree top-down.
Starting at the root, it looks for the child where the smallest key is at least $x$ and the largest key is at most $x$.
If such a child exists, the search continues at this child.
This repeats until it reaches a leaf, where $f(x)$ is stored, or can not find a child, in which case $x \notin Z$ and it returns $\bot$.

\subparagraph{Updates}
An update $\textsc{SetValue}(x,y)$ uses the same procedure as a query to find the leaf for key $x$, then exchanges the value stored there with $y$.

Updates $\textsc{Add}$ and $\textsc{Remove}$ travel through the tree bottom-up.
Every update first sorts $X$ into a compact array.
Compact sorting in constant-time requires an exponential number of processors, meaning that this step runs on $\mathcal{O}(2^{|X|}m)$ processors. 
Recall that in our use-cases, $|X| = \mathcal{O}(\log^* n)$.

An update $\textsc{Add}(X)$ travels to every lowermost inner node where values from $X$ are to be inserted.
The first obstacle to overcome is inserting new leaves into the sorted array of children.
The algorithm uses the sorted list $X$ to determine, for each child, how many new leaves are to be inserted directly behind it.
It then computes approximate consistent prefix-sums over these values via \Cref{lem:approxPrefixSum} to determine how far to shift each old leaf to make space for the new leaves.
This might result in an array of size larger than $2n^{\epsilon}\log n$.
The algorithm deals with the size constraints later.

The next task is updating the smallest and largest keys stored at inner nodes.
The algorithm visits every inner node along a path from a newly inserted leaf to the root and updates the smallest and largest keys accordingly.

The final task enforces the size constraints.
If an inner nodes has an array larger than $2n^{\epsilon}\log n$, the algorithm first computes how often this array has to be split so that the length of each part is between $n^{\epsilon}\log n$ and $2n^{\epsilon}\log n$.
It splits this node into that many parts and computes their smallest and largest keys.
To fit these new children into the array at the parent, the algorithm again uses approximate consistent prefix sums over its children to compute how far to shift each child.
This continues upwards until either all size-constraints are met or it reaches the root.

An update $\textsc{Remove}(X)$ travels to every leaf that contains a key in $X$ and deletes it.

It visits every node along a path from a deleted leaf to the root in bottom-up order and updates their smallest and largest keys by taking the minimum and maximum of the smallest and largest keys of their children.

Then, for every node where a child was deleted, it collects all children of children of this node into one array of size at most $4n^{2\epsilon}\log^2 n$, then runs an approximate consistent prefix-sum to redistribute them into children with approximately $2n^{\epsilon}\log n$ children each.%
\footnote{
	We do not use the same procedure as for \dynamicString~here because of the batch-updates.
	If we attempt to simply merge children with their siblings if they get to small, we run into issues when encountering many neighboring small children.
}

This procedure continues upwards until either all size-constraints are met or it reaches the root.
It checks both the upper and lower size-constraints because the redistributing can break both of them.

\subsection{Names}

\Cref{lem:names} states that there is a constant-time algorithm for \dynamicNames~on a "common CRCW PRAM" where queries run on $\mathcal{O}(mn^{\epsilon})$ and updates on $\mathcal{O}(2^kmn^{\epsilon})$ processors with $|X| \leq k$ and $\Delta = [1,nk]$.
The problem \dynamicNames~is defined as follows.

\begin{itemize}
	\item \emph{Problem:} \dynamicNames
	\item \emph{Initialization:} Alphabet $\Sigma$, domain $\Delta$, set $Z = \emptyset$, function $f: Z \to [1,n]$, injective function $g: Z \to \Delta$, maximum size $n$ of $Z$, substring size $m$.
	\item \emph{Updates:}
	\begin{itemize}
		\item $\textsc{Add}(X):\begin{cases}
			\text{For each unique $x \in X$, if $x \in Z$, set $f(x)$ to $f(x)+\setcount{X}{x}$.}\\
			\text{If $x \notin Z$ add $x$ to $Z$ with $f(x) = \setcount{X}{x}, g(x) \in \Delta$ so that $g$ is injective.}
		\end{cases}$
		\item $\textsc{Sub(X)}:\begin{cases}
			\text{For each unique $x \in X$, if $x \in Z$, set $f(x)$ to $f(x) - \setcount{X}{x}$.}\\
			\text{Then, for each $x \in X$, if $f(x) \leq 0$, remove $x$ from $Z$.}
		\end{cases}$
	\end{itemize}
	\item \emph{Queries:}
	\begin{itemize}
		\item $\textsc{Name}(x)$: Return $g(x)$, or $\bot$ if $x \notin Z$.
	\end{itemize}
\end{itemize}

In a nutshell the algorithm uses \dynamicFunction~to store $Z$ with $f$ and $g$.
It maintains which names are currently in use with an instance \dynamicNeighbor.

\subparagraph{Data Structure}
The algorithm uses an instance $\mathcal{F}$ of \dynamicFunction~from \Cref{lem:searchTree} and an instance $\mathcal{N}$ of \dynamicNeighbor~over the domain $\Delta = [1,nk]$.
The value stored in $\mathcal{F}$ for some $x$ is the tuple $(f(x),g(x))$.
In $\mathcal{N}$, all free names $i \in \Delta = [1,nk]$ are marked.
All names are marked initially.

\subparagraph{Queries}
Queries call the query from \dynamicFunction~to retrieve $(f(x),g(x))$, then return $g(x)$, if that value exists.

\subparagraph{Updates}
The batch $X$ arrives in the form of an array which might contain duplicate entries.
An update first determines all unique keys $x \in X$ and their number of occurrences $\setcount{X}{x}$.
This runs in parallel constant time on $\mathcal{O}(2^{|X|})$ processors.

An update $\textsc{Add}(X)$ first computes the new keys in $X$ which do not occur in $\mathcal{F}$ yet.
It assigns numbers $1,2,\ldots$ to each new key.
It uses $\mathcal{N}$ and these numbers to assign names to new keys.
For each new unique key $x$ with number $i$, it assigns the successor $\textsc{Succ}((i-1)n)$ in $\mathcal{N}$ as the name, then unmarks that name in $\mathcal{N}$.
It never assigns the same name to two different keys so long as there are never more than $n$ names in use at the same time.

Then it adds all new keys $x$ to $\mathcal{F}$, sets the name $g(x)$ to the name that was just assigned and initializes $f(x)$ with zero.

Finally, for every unique (new or old) key $x$, it increases the value of $f(x)$ stored in $\mathcal{F}$ by $\setcount{X}{x}$.

An update $\textsc{Sub}(X)$ begins by, for every unique key $x \in X$, reducing the value $f(x)$ stored in $\mathcal{F}$ by $\setcount{X}{x}$.
Then, for every $x \in X$ with $f(x) \leq 0$, it removes $x$ from $\mathcal{F}$ and marks  $g(x)$ in $\mathcal{N}$.

\section{Proof of \texorpdfstring{\Cref{lem:queryNoMuddling}}{Lemma 6}}\label{sec:queryNoMuddling}
\Cref{lem:queryNoMuddling} states that a parallel constant-time algorithm can answer substring equality on $\mathcal{O}(\log n)$ processors if, for each $\tau = 2^z, z \in \mathbb{N}, \tau \in [1,n]$, it has access to "string synchronizing sets" $(B^{\tau},f^{\tau})$ and successor- and predecessor-links that point to the closest element in $B^{\tau}$.

\begin{proof}
	The general idea is to cover $S[i,i+m]$ with few occurrences from the "string synchronizing sets".
	Then $S[i,i+m] = S[j,j+m]$ holds if and only if it is possible to cover $S[j,j+m]$ the same way.
	
	The algorithm runs in parallel over the levels $\tau$ for each $\tau = 2^z, z \in \mathbb{N}, \tau \in [1,n]$.
	The objective of level $\tau$ is to verify if $S[i+\frac{\tau}{2},i+\tau) = S[j+\frac{\tau}{2},j+\tau)$ and $S[i+m-\tau,i+m-\frac{\tau}{2}) = S[j+m-\tau,j+m-\frac{\tau}{2})$ hold.
	The check might also include up to $\frac{\tau}{2}$ characters on either side of these intervals.
	
	If $S[i+\frac{\tau}{2},i+\tau) \neq S[j+\frac{\tau}{2},j+\tau)$ or $S[i+m-\tau,i+m-\frac{\tau}{2}) \neq S[j+m-\tau,j+m-\frac{\tau}{2})$, then this level has to reject.
	If $S[i,i+\frac{3}{2}\tau) = S[j,j+\frac{3}{2}\tau)$ and $S[i+m-\frac{3}{2}\tau, i+m) = S[j+m-\frac{3}{2}\tau, j+m)$, then this level has to accept.
	Otherwise, this level can do either.
	
	There is some special handling for the first and last level.
	The first level $\tau = 1$ verifies $S[i] = S[j]$ and $S[i+m-1] = S[j,j+m-1]$ na\"ively and does nothing else.
	The last level is the maximum $\tau$ such that $\frac{3}{2}\tau \leq m$ holds.
	If $2\tau < m$ holds for this $\tau$, then there is a gap between $S[i+\frac{\tau}{2},i+\tau)$ and $S[i+m-\tau,i+m-\frac{\tau}{2})$.
	The last level closes this gap with two additional checks that are shifted by $\frac{\tau}{2}$ indices.
	It therefore also verifies that $S[i+\tau,i+\frac{3}{2}\tau) = S[j+\tau,j+\frac{3}{2}\tau)$ and $S[i+m-\frac{3}{2}\tau,i+m-\tau) = S[j+m-\frac{3}{2}\tau,j+m-\tau)$ hold.
	
	\begin{figure}[ht]
		\centering\includegraphics[scale=0.95]{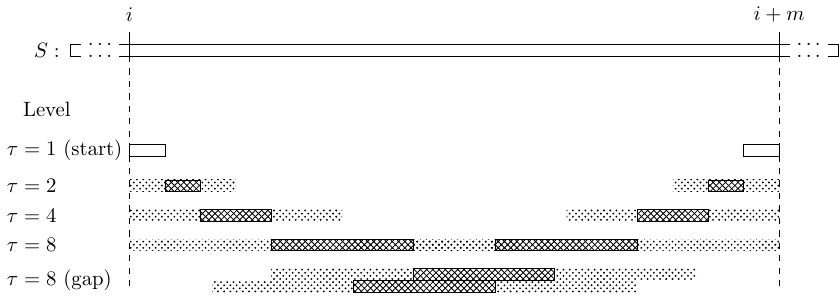}
		\caption{Covering the substring $S[i,i+m]$ with occurrences from the "synchronizing sets"}
		\label{fig:query_all_levels}
	\end{figure}
	See \Cref{fig:query_all_levels} for how the levels taken together cover the entire substring.
	The figure shows both the characters that are guaranteed to be part of the check (the level has to reject if they don't match) as well as characters that might or might not be included (the level has to accept if they match).
	
	We only describe the check for $S[i+\frac{\tau}{2},i+\tau) = S[j+\frac{\tau}{2},j+\tau)$.
	The other side is analogous.
	
	Level $\tau$ uses the successor-links to find the first occurrence from $B^{\tau}$ that begins at or after index $i$.
	If there are no periodic segments, then the first occurrence begins at some index $\in [i,i+\frac{\tau}{2})$ due to the "density"-property of $B^{\tau}$.
	\begin{figure}[ht]
		\centering\includegraphics[scale=0.95]{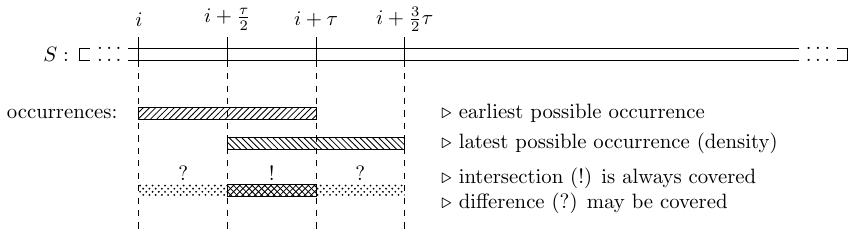}
		\caption{Indices covered by the first occurrence in $S[i,i+m]$ in the non-periodic case}
		\label{fig:query_single_level}
	\end{figure}
	See \Cref{fig:query_single_level} for the indices covered by the first occurrence.
	
	We distinguish three cases.
	
	In the first case, if the first occurrence contains $S[i+\frac{\tau}{2},i+\tau)$ and lies entirely in $S[i,i+m)$, then it looks for a corresponding occurrence with the same name for $S[j+\frac{\tau}{2},j+\tau)$.
	The check is successful if that corresponding occurrence exists.
	
	In the second case, if the candidate contains $S[i+\frac{\tau}{2},i+\tau)$ but does not lie entirely in $S[i,i+m)$, then $S[i,i+m)$ is too short for this particular level.
	This level does nothing and leaves the checks to the levels below it.
	
	In the third case, if the candidate does not contain $S[i+\frac{\tau}{2},i+\tau)$, then $S[i,i+m)$ starts with a periodic segment.
	The algorithm verifies that $S[j,j+m)$ has a matching periodic prefix by verifying that there are no occurrences there either.
	Level $\tau$ does not have to verify the period itself because there is a lower level $\tau' < \tau$ such that the period is short enough to fit into the substrings at that level, but long enough not to trigger the exception to the "density"-property.
	
	\begin{figure}[ht]
		\centering\includegraphics{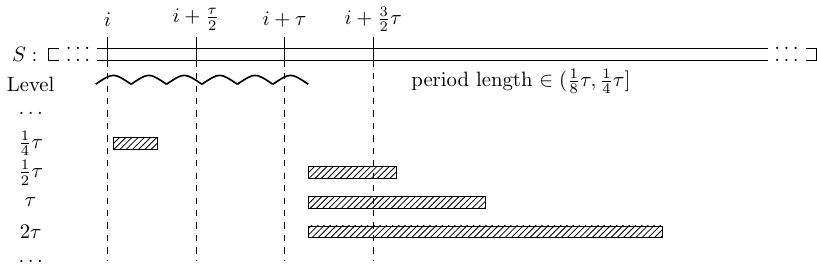}
		\caption{Example for a substring $S[i,i+m]$ with periodic prefix}
		\label{fig:query_periodic}
	\end{figure}
	
	See \Cref{fig:query_periodic} for an example.
	The substring $S[i,i+m)$ starts with a periodic prefix.
	The figure shows the candidate of four levels.
	Level $\tau' = \frac{1}{4}\tau$, which falls into the first case, guards the period.
	Note that its candidate lies entirely within the periodic prefix and is not shorter than the period.
	There is always a level for which this is the case.
	Level $2\tau$ guards the end and therefore the length of the periodic prefix and also falls into the first case.
	Levels $\tau$ and $\frac{1}{2}\tau$, which fall into the third case, only guard that there is a periodic prefix.
\end{proof}

\section{Proof of \texorpdfstring{\Cref{lem:prep}}{Lemma 10}}\label{sec:preparation}

\Cref{lem:prep} establishes the properties of the string of factor names produced by the preparation.
We first describe the preparation in more detail and then show how \Cref{lem:prep} follows from it.

\subsection{Description of the Preparation}

We give a more detailed description of how the algorithm prepares the processing of level $\ell$.
The preparation of the factors in $D^{\ell-1}$ consists of four steps which are processed sequentially in that order.

\subparagraph{Merging Short Factors}
The first step looks for factors shorter than $2^{\ell-1}$.
Note that, due to \Cref{lem:factorSize}, these short factors can not neighbor each other.
It first tries to merge every short factor with its left neighbor if their total length is at most $2^{\ell+1}$.
After this, it similarly tries to merge every remaining short factor with its right neighbor.

\subparagraph{Merging Runs}
The second step uses the successor- and predecessor-links in $\mathcal{D}^{\ell-1}$ to find runs of identical factors and merge each run into one factor.

\subparagraph{"Deactivating" Long Factors}
The third step "deactivates" all factors longer than $2^{\ell}$.
It "deactivates" a factor by replacing the name with $-1$ for the next actions.
"Deactivated factors" will not be merged further.

\subparagraph{Local Copies}
The final step creates local copies of the "relevant segments" of $\mathcal{D}^{\ell-1}$ and $\mathcal{S}^{\ell-1}$ to avoid conflicts with later actions.

\subsection{Proof}

\Cref{lem:prep} states that, if $d_1d_2\ldots$ is the string of factor names produced by the preparation of level $\ell$ and $v_1,v_2,\ldots$ are the corresponding factors, the following hold for every factor name in $d_1d_2\ldots$:

\begin{enumerate}
	\item A factor name $d_i$ is $-1$ if and only if the corresponding factor $v_i$ is longer than $2^\ell$.
	\item Two factor names $d_i$, $d_j$ that are not $-1$ are equal if and only if the corresponding factors $v_i, v_j$ are equal.
	\item A factor $v_i$ is only shorter than $2^{\ell-1}$ if the factor name $d_{i-1}$ before and the name $d_{i+1}$ after it are both $-1$.
	\item Neighboring factor names $d_i,d_{i+1}$ are distinct or both $-1$.
\end{enumerate}

\begin{proof}
We prove the four statements in \Cref{lem:prep} in the order they appear.

According to the final step, the names of all factors longer than $2^{\ell}$ are replaced with $-1$, proving the first statement of \Cref{lem:prep}.

The second statement follows from the fact that the algorithm uses $\mathcal{N}_D^{\ell-1}$ to assign consistent names to the factors in $\mathcal{D}^{\ell-1}$.

To prove the third statement, first assume that there is a factor $v_i$ in $D^{\ell-1}$ that is shorter than $2^{\ell}$.
Due to \Cref{lem:factorSize}, the combined size of neighboring factors in $D^{\ell-1}$ is at least $2^{\ell+1}$.
The neighbors of $d_i$ therefore have length greater than $2^{\ell}$.
The first step selects each factor shorter than $2^{\ell}$ and tries to merge them into their left or right neighbor.
Because short factors can not neighbor each other and because left- and right-merges are done sequentially, there are no conflicts during this step.
A short factor is not merged with a neighbor only if their combined length exceeds $2^{\ell+1}$.
This can only occur if both neighbors have length greater than $2^{\ell}$, meaning that their names will be replaced by $-1$ in the final step.

The fourth statement follows from the merging of runs.
There can not be any sequence $d_i\ldots d_i$ of equal names after this step because this step merges runs of equal factors.
There are also no later steps that could introduce new factor names into the sequence, except for $-1$ which is excluded in this statement.
\end{proof}

\section{Proof of \texorpdfstring{\Cref{lem:coinFlipping}}{Lemma 11}}\label{sec:coinFlipping}

\Cref{lem:coinFlipping} states that there is a function $f$ that, given a string $S$ over the alphabet $[-1,N-1]$, computes a bit-string of the same length, such that
\begin{enumerate}
	\item the bit $f(S)[i]$ depends only on $S[i-4,i+\log^* N + 5]$,
	\item the character $S[j]$ does not affect bit $f(S)[i]$ if there is a character $-1$ in between,
	\item if $S[i]$ is $-1$, then $f(S)[i,i+1] = 11$,
	\item the bit-string does not contain $11$, unless the previous property applies,
	\item the bit-string does not contain $0000$, and
	\item this function can be computed in $\mathcal{O}(\log^* n)$ time on $n$ processors.
\end{enumerate}

\begin{proof}
	The algorithm treats $S$ as the $N$-coloring of a path.
	In a nutshell, it reduces the number of colors first to $6$ over $\log^*N$ rounds, then to $3$, and finally sets all indices to $1$ where the $3$-coloring has a local minimum.
	
	The algorithm also has some special handling for indices $i$ with $S[i] = -1$.
	The intention is that $S[i] = -1$ acts as a barrier.
	The coloring to the left of $S[i] = -1$ is independent from the characters to the right of it and vice versa.
	The coloring of $S[i] = -1$ itself is irrelevant because the final result $f(S)[i]$ is fixed according to \cref{lem:coinFlipping}, Property \ref{enum:coinFlip:lowerforce}.
	
	To get from an $N$-coloring to a $6$-coloring, the algorithm repeatedly compares, for each index, the binary representation of its color with the binary representation of the next color.
	It determines the first index where the binary representations differ and sets the new color to the bit at that index followed by the index itself.
	This reduces the size of the binary representation from $b$ bits to $1 + \lceil\log b\rceil$ bits.
	Observe that this results in a valid coloring.
	
	For indices $S[i] = -1$, it always keeps the value $-1$.
	When comparing binary representations, it treats the binary representation of $-1$ as $2^b$ (i.e. every bit is different compared to a regular binary representation).
	This special handling of the color $-1$ ensures Property \ref{enum:coinFlip:localforce}.
	
	After $\log^* N$ iterations every value is either in $[0,5]$, which forms a $6$-coloring over these indices, or it is $-1$.
	The algorithm then looks at each index with color $3$ and assigns for each index the smallest available color from $\{0,1,2\}$ that does not occur at a neighbor.
	It repeats this procedure for $4$ and finally for $5$.
	
	The algorithm sets $f(S)[i] = 1$ if and only if at least one of the following holds:
	\begin{enumerate}
		\item $S[i] = -1$,
		\item $S[i-1] = -1$
		\item the final color at $i$ is $0$, or
		\item the final color at $i$ is $1$ and no neighbor has color $0$.
	\end{enumerate}
	
	The first two conditions ensure that Property \ref{enum:coinFlip:lowerforce} holds.
	The last two conditions simply check for a local minimum, ignoring the color $-1$.
	Since the colors form a valid 3-coloring, local minima can not neighbor each other, so Property \ref{enum:coinFlip:lower} holds, if we assume there are is no $-1$.
	Since there are only three colors, at least every fourth color has to be a local minima (if the coloring goes $0,1,2,1,0$), so Property \ref{enum:coinFlip:upper} holds as well.
	
	The algorithm concludes with a clean-up step to address the problem that Property \ref{enum:coinFlip:lower} might otherwise fail next to indices with $-1$.
	If $S[i] = -1, S[i+1] \neq -1$ the algorithm might produce $f(S)[i+1,i+2] = 11$.
	To fix this, if $f(S)[i+2,i+4] = 100$, it sets $f(S)[i+2,i+4]$ to $010$.
	Otherwise, if $f(S)[i+2,i+4] = 101$, it sets $f(S)[i+2,i+4]$ to $001$.
	The algorithm does the same mirrored for $S[j] = -1, S[j-1] \neq -1$ with $f(S)[j-1,j] = 11$.
	
	The algorithm runs over $\log^* N$ rounds where, for each round, it only compares each color with its neighbors.
	Property \ref{enum:coinFlip:work} follows from this observation.
	
	Finally, adding up all indices that affect one bit $S[i]$ yields the bounds in Property \ref{enum:coinFlip:localdist}.
\end{proof}

\section{Proof of \texorpdfstring{\Cref{lem:segments}}{Lemma 9}}\label{sec:proof:lem:segments}

\Cref{lem:segments} states that the "affected segment" for each index consists of the factor that contains that index plus $\log^* n+9$ factors before and $5$ factors after it.
It also states that the factors required to compute the "affected segment" are the factor that contains the index plus $\log^* n+14$ factors before and after it.

\begin{proof}
	Let $\Dell $ be a given decomposition and let $D^{\ell+1}$ be the next decomposition in the decompositions hierarchy.
	We calculate which factors in $\Dell $ affect a given factor $u$ in $D^{\ell+1}$.
	
	For some factor $u$ in $D^{\ell+1}$, let $v_i$ be the factor in $\Dell $ that has the same starting index.
	Our goal is to calculate $\alpha,\beta$ so that $u$ is affected by $v_{i-\alpha},\ldots,v_{i+\beta}$.
	We go backwards through the algorithm and count the factors that affect $u$.
	
	The final step is the level finalization.
	It merges factors according to the bit-string produced by the previous step.
	A merged factor begins at every factor of $\Dell $ where the bit-string is $1$.
	Since $v_j$ is the first factor of $\Dell $ in $u$, we know that the bit-string at $v_i$ is $1$.
	How many factors from $\Dell $ make up $u$ depends on how many $0$'s follow in the bit-string.
	Due to \Cref{lem:coinFlipping}, Property \ref{enum:coinFlip:upper}, it follows that there are at most three $0$'s after that $1$.
	The number of factors therefore depends on the bits for $v_{i+1},v_{i+2},v_{i+3}$.
	
	Before the finalization comes choosing the merges via the deterministic coin flipping in \Cref{lem:coinFlipping}.
	The result of the deterministic coin flipping depends on the sequence of factor names where some factors are "deactivated".
	According to \Cref{lem:coinFlipping}, Property \ref{enum:coinFlip:localdist}, the bit for some factor $v_j$ depends at most on the factor names of $v_{i-4},\ldots,v_{i+\log^* N + 5}$.
	
	Adding this to our previous bounds yields that $u$ depends on the factor names for $v_{i-4},\ldots,v_{i+\log^* N + 8}$.
	
	The sequence of factor names that is fed into the deterministic coin flipping is the result of the level preparation.
	
	The last step of the level preparation "deactivates" long factors.
	This has indirect effect on our bounds.
	Combined with \Cref{lem:coinFlipping}, Property \ref{enum:coinFlip:lowerforce}, it follows that we only have to worry about at most one long factor before and one long factor after $v_j$.
	Everything that comes after the first long factor cannot affect $u$.
	
	The earlier preparation steps create long factors.
	First observe that, due to \Cref{lem:factorSize}, if a factor is merged during the preparation, the resulting factor is longer than $2^{\ell}$ and will be "deactivated".
	We therefore only have to worry about the first factors on either side that are merged due to the preparation.
	For both runs and short factors, the merging depends depends on the factor directly before and directly after a given factor.
	This extends our bounds to $v_{i-5},\ldots,v_{i+\log^* n + 9}$.
	
	We have reached the beginning and conclude that the factors that affect $u$ are the $5$ factors before $v_i$ and the $\log^* n +9$ factors after it.
	
	To derive the size of the "affected segments", i.e. how many factors are affected by changes to $v_i$, we mirror this and arrive at the bounds stated in \Cref{lem:segments}.
	
	For the number of factors required to compute the "affected segment", we take the union of all factors that affect a factor in the "relevant segment".
	The furthest affected factor to the left is $v_{i-(\log^* n + 9)}$, which is in turn affected by $5$ factors before it.
	Thus the "relevant segment" extends $\log^* n + 14$ factors to the left.
	By similar logic it extends $\log^* n + 14$ factor to the right.
\end{proof}

\section{Proof of \texorpdfstring{\Cref{the:dyck1}}{Theorem 14}}\label{sec:dyck1}
Schmidt et al.\ show that membership to the "Dyck language" $D_1$ can be maintained in parallel constant time on $\mathcal{O}((\log n)^3)$ processors \cite[Theorem 6.5]{schmidtWorksensitiveDynamicComplexity2021}.
\Cref{the:dyck1} states that we can improve this bound to $\mathcal{O}(\log n \log \log n)$.

The problem discussed in this section does not allow character insertion or deletion and instead uses a ""void character"" $\bot$, which marks positions that should be ignored.
To deal with the "void character" more easily, we define an ""erasing-function"" $\erase{a}(S) $ that deletes all occurrences of character $a$ from a string $S$, which we use with $a = \bot$ to erase "void characters".

\begin{definition}\label{def:erase_func}
	The function $\intro *\erase{a}: \Sigma^* \mapsto (\Sigma \setminus \{a\})^*$ erases all occurrences of a symbol $a$ from a string with
	\[ \erase{a}(S) = \begin{cases}
		\epsilon & \mid S \in \{\epsilon, a\}\\
		\sigma & \mid S \in \Sigma \setminus \{a\}\\
		\erase{a}(u) \circ \erase{a}(v) & \mid S = u \circ v,\ u,v \in \Sigma^+\\
	\end{cases} \]
\end{definition}

We first recall the problem \member{$D_1$}~as it is defined by Schmidt et al.\ \cite[Section 2]{schmidtWorksensitiveDynamicComplexity2021}.
Let $\langle, \rangle$ be the parentheses from $D_1$.
The problem \member{$D_1$}~also uses the "void character" $\bot$ which is ignored by $D_1$.

\begin{itemize}
	\item \AP \emph{Problem:} \intro *\member{$D_1$}
	\item \emph{Initialization:} String $S=\bot^n$
	\item \emph{Updates:}
	\begin{itemize}
		\item $\textsc{Set}(i,\sigma)$: If  $S[i] = \bot$, set $S[i]$ to $\sigma \in \{ \langle, \rangle \}$.
		\item $\textsc{Reset}(i)$: Set $S[i]$ to $\bot$.
	\end{itemize}
	\item \emph{Query:}
	\begin{itemize}
		\item $\textsc{Member}()$: Is $\erase{\bot}(S) \in D_1$?
	\end{itemize}
\end{itemize}

Note that, unlike most problems discussed before, \member{$D_1$}~does not allow inserting or deleting indices from the string and instead uses a "void character" $\bot$.

The algorithm is in large parts identical to the original algorithm described by Schmidt et al.\ \cite[Theorem 6.5]{schmidtWorksensitiveDynamicComplexity2021}.
The only difference lies in one step of the updates, which we will point out when we come to it.

\subparagraph{Data Structure}
\begin{figure}[ht]
	\centering \includegraphics{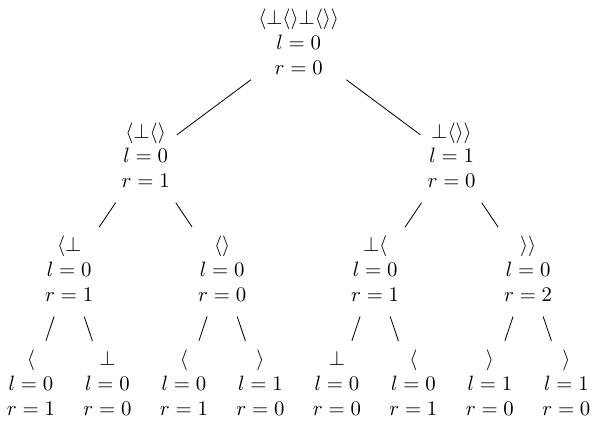}
	\caption{Data Structure for $S = \langle \bot \langle \rangle \bot \langle \rangle \rangle$}
	\label{fig:d1_structure}
\end{figure}

The algorithm uses a binary tree with one leaf for every character of the underlying string $S$.
Every inner node represents some substring $S[\alpha,\beta)$.

The inner node for substring $S[\alpha,\beta]$ stores the number of unmatched parentheses of this substring.
It stores the number of unmatched closing parentheses $l(S[\alpha, \beta))$ and the number of unmatched opening parentheses $r(S[\alpha,\beta])$.

See \Cref{fig:d1_structure} for an example.

\subparagraph{Updates}

An update $\textsc{Set}(i,\sigma)$ or $\textsc{Reset}(i)$ clearly only affects nodes along the path from the root to the $i$-th leaf, which we call the ""affected path"".

\begin{table}[ht]
	\centering\begin{tabular}{l|l|l}
		$\textsc{Set}(i,\langle)$&$i \in [\alpha,\gamma) $&$i \in [\gamma,\beta)$\\ \hline
		$r(S[\alpha,\gamma)) < l(S[\gamma,\beta))$ & decrease $l(S[\alpha,\beta))$ by $1$&same as for $[\gamma,\beta)$\\ \hline
		$r(S[\alpha,\gamma)) \geq l(S[\gamma,\beta))$&same as for $[\alpha,\gamma)$&increase $r(S[\alpha,\beta))$ by $1$
	\end{tabular}\\[1em]
	\centering\begin{tabular}{l|l|l}
		$\textsc{Set}(i,\rangle)$&$i \in [\alpha,\gamma) $&$i \in [\gamma,\beta)$\\ \hline
		$r(S[\alpha,\gamma)) \leq l(S[\gamma,\beta))$ & increase $l(S[\alpha,\beta))$ by $1$&same as for $[\gamma,\beta)$\\ \hline
		$r(S[\alpha,\gamma)) > l(S[\gamma,\beta))$&same as for $[\alpha,\gamma)$&decrease $r(S[\alpha,\beta))$ by $1$
	\end{tabular}\\[1em]
	\centering\begin{tabular}{l|l|l}
	$\textsc{Reset}(i),\ S[i] = \langle$&$i \in [\alpha,\gamma) $&$i \in [\gamma,\beta)$\\ \hline
	$r(S[\alpha,\gamma)) \leq l(S[\gamma,\beta))$ & increase $l(S[\alpha,\beta))$ by $1$&same as for $[\gamma,\beta)$\\ \hline
	$r(S[\alpha,\gamma)) > l(S[\gamma,\beta))$&same as for $[\alpha,\gamma)$&decrease $r(S[\alpha,\beta))$ by $1$
	\end{tabular}\\[1em]
	\begin{tabular}{l|l|l}
		$\textsc{Reset}(i),\ S[i] = \rangle$&$i \in [\alpha,\gamma) $&$i \in [\gamma,\beta)$\\ \hline
		$r(S[\alpha,\gamma)) < l(S[\gamma,\beta))$ & decrease $l(S[\alpha,\beta))$ by $1$&same as for $[\gamma,\beta)$\\ \hline
		$r(S[\alpha,\gamma)) \geq l(S[\gamma,\beta))$&same as for $[\alpha,\gamma)$&increase $r(S[\alpha,\beta))$ by $1$
	\end{tabular}
	\caption{Effect of updates on the node with substring $S[\alpha,\beta), i \in [\alpha,\beta)$ where the children have substrings $S[\alpha,\gamma)$ and $S[\gamma,\beta)$}
	\label{tab:d1}
\end{table}
\Cref{tab:d1} shows the effects of an update on a node on the "affected path".
A node is ""effect-preserving"" if the effect for this node is defined as being the same as the effect for its child on the "affected path".
Otherwise it is ""effect-inducing"".
Observe that the upper left and lower right corners of each table are "effect-inducing" and the others are "effect-preserving".

The algorithm first identifies all "effect-inducing" nodes and updates them.
This runs in constant time on one processors per node, or $\mathcal{O}(\log n)$ processors for all nodes on the "affected path".

For "effect-preserving" nodes, it suffices to find the closest "effect-inducing" descendant and copy the effect from it.
The difficulty therefore lies in finding the closest "effect-inducing" descendant.
Schmidt et al. use a na\"ive search for this, which requires $\mathcal{O}(\log^2 n)$ processors per node or $\mathcal{O}(\log^3 n)$ processors in total.
We can improve this to $\mathcal{O}(\log n \log \log n)$ processors.

Let $A$ be a bit-array of size $\log n$ where the $i$-th bit is $1$ if and only if the $i$-th node on the "affected path" is "effect-inducing", with the root as the first node.

The algorithm computes successor-links for $A$.
This procedure is a static version of \Cref{lem:fixedSuccessor}.
The algorithm builds a binary tree over $A$ with a leaf for every $i \in [1,\log n]$.
Every node represents some interval $[\alpha,\beta] \subseteq [1,\log n]$.
It computes $\min\{ i \in [\alpha, \beta] \mid A[i] = 1 \}$ for every node with interval $[\alpha,\beta]$ using \Cref{lem:linSearch}.
This runs on $\mathcal{O}(\log n \log \log n)$ processors because the sum total length of all intervals of all nodes at one level is $|A| = \log n$ and there are $\log |A| = \log \log n$ levels. 
Retrieving successor-links from this tree works the same as successor-queries in \Cref{lem:fixedSuccessor}.
Each query runs on $\mathcal{O}(\log |A|) = \mathcal{O}(\log \log n)$ processors, or $\mathcal{O}(\log n \log \log n)$ for the entire array $A$.

The closest descendant of the $i$-th node on the "affected path" is the successor of $i$ in $A$.
Given the successor-links that the algorithm just computed, the complexity of finding the closest "effect-inducing" descendant of each "effect-preserving" node thus drops to $\mathcal{O}(1)$.

The overall complexity is dominated by the calculation of the successor-links, which runs on $\mathcal{O}(\log n \log \log n)$ processors.

\subparagraph{Queries}
The string $S$ is in $D_1$ if and only if there are no unmatched parentheses, meaning that $l(S) = r(S) = 0$.
The values $l(S)$ and $r(S)$ are retrieved from the root.

\section{Proof of \texorpdfstring{\Cref{cor:dyck}}{Corollary 15}}\label{sec:proof:dyck}

We prove that a dynamic parallel constant-time algorithm can maintain membership in $D_k$ on a "common CRCW PRAM" with $\mathcal{O}(n^{\epsilon})$ processors.

\subsection{Maintaining "Dyck Languages"}

The "Dyck language" $D_k$ is the language of well-balanced parentheses expressions with $k$ types of parentheses $\langle_1,\rangle_1,\ldots,\langle_k,\rangle_k$.
We first define the dynamic problem of maintaining membership in $D_k$.
We use the "erasing-function" $\erase{\bot}:(\Sigma\cup\{\bot\})^* \to \Sigma^*$ (\Cref{def:erase_func}) to erase occurrences of the "void character" $\bot$ from a string.

\begin{itemize}
	\item \AP \emph{Problem:} \member{$D_k$}
	\item \emph{Initialization:} String $S=\bot^n$
	\item \emph{Updates:}
	\begin{itemize}
		\item $\textsc{Set}(i,\sigma)$: If  $S[i] = \bot$, set $S[i]$ to $\sigma \in \{ \langle_j, \rangle_j \mid j \in [1,k] \}$.
		\item $\textsc{Reset}(i)$: Set $S[i]$ to $\bot$.
	\end{itemize}
	\item \emph{Query:}
	\begin{itemize}
		\item $\textsc{Member}()$: Is $\erase{\bot}(S) \in D_k$?
	\end{itemize}
\end{itemize}

Schmidt et al.\ prove a link between \member{$D_k$}~and the problem \stringEquality~\cite[Lemma 6.9]{schmidtWorksensitiveDynamicComplexity2021}, which maintains if two dynamic strings are equal.

\begin{itemize}
	\item \AP \emph{Problem:} \intro *\stringEquality
	\item \emph{Initialization:} Strings $S_1=\bot^n$, $S_2 = \bot^n$
	\item \emph{Updates:}
	\begin{itemize}
		\item $\textsc{Set}_1(i,\sigma)$: If  $S_1[i] = \bot$, set $S_1[i]$ to $\sigma$.
		\item $\textsc{Reset}_1(i)$: Set $S_1[i]$ to $\bot$.
		\item $\textsc{Set}_2(i,\sigma)$: If  $S_2[i] = \bot$, set $S_2[i]$ to $\sigma$.
		\item $\textsc{Reset}_2(i)$: Set $S_2[i]$ to $\bot$.
	\end{itemize}
	\item \emph{Query:}
	\begin{itemize}
		\item $\textsc{Equal}()$: Is $\erase{\bot}(S_1) = \erase{\bot}(S_2)$? 
	\end{itemize}
\end{itemize}

Schmidt et al.\ prove that, given an algorithm for \stringEquality, it is possible to solve \member{$D_k$}.
This increases the number of processors by a logarithmic factor.

\begin{lemma}\label{lem:StringEqualityToDyck} \cite[Lemma 6.9]{schmidtWorksensitiveDynamicComplexity2021}
	If there is a dynamic parallel constant-time algorithm that maintains \stringEquality~on a "common CRCW PRAM" with $W(n)$ processors, then there is a dynamic parallel constant-time algorithm that maintains \member{$D_k$}~on a "common CRCW PRAM" with $\mathcal{O}(W(n) \cdot \log n + (\log n)^3)$ processors, for each $k \geq 1$.
\end{lemma}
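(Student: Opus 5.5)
The plan is to follow the segment-tree approach behind \Cref{the:dyck1}. We reduce $D_k$-membership to two conditions: $D_1$-membership of the type-forgetting projection, plus a family of string equalities, one per inner node. Let $\pi$ map $\langle_j,\rangle_j$ to $\langle,\rangle$. A well-known characterization states that $\erase{\bot}(S)\in D_k$ if and only if (i) $\erase{\bot}(\pi(S))\in D_1$, and (ii) in the $D_1$-matching of $\pi(S)$ every matched pair has equal types.

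Fix the balanced binary tree over the $n$ positions. For an inner node with children covering $S[\alpha,\gamma)$ and $S[\gamma,\beta)$, every pair matched at this node pairs one of the last $\min(r(S[\alpha,\gamma)),l(S[\gamma,\beta)))$ unmatched opening parentheses of the left child with one of the first unmatched closing parentheses of the right child. These pairs are nested. So condition (ii) restricted to this node says: the sequence of types of those openings, read right to left, equals the sequence of types of those closings, read left to right. Every matched pair is matched at exactly one node, namely its lowest common ancestor. Hence (ii) is equivalent to all node-local equalities holding simultaneously.

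\textbf{Step 1: instances.} For each inner node $v$ I would keep one instance $E_v$ of \stringEquality. Its first string is indexed by the positions of the left child in reverse order. A position carries its type exactly if it is an unmatched opening inside the left child and lies in the window of the last $\min(r,l)$ such openings; otherwise it is $\bot$. The second string is defined symmetrically for the right child. A global counter of nodes whose $E_v$ currently answers ``unequal'' is maintained alongside the $D_1$ structure of \Cref{the:dyck1} (or that of Schmidt et al.). The query answers ``yes'' if and only if $l(S)=r(S)=0$ and the counter is $0$. Since only the $\mathcal{O}(\log n)$ nodes on the "affected path" change, the counter can be maintained by recomputing their flags.

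\textbf{Step 2: bounded change per node.} An update changes one character. For each node $u$ on the "affected path", I would argue from the case analysis of \Cref{tab:d1} that the set of positions unmatched inside $u$ changes by $\mathcal{O}(1)$ elements. Inserting or removing a parenthesis either adds or removes one unmatched symbol, or it flips the status of exactly one partner position. Likewise $\min(r,l)$ at the parent changes by at most one, so the comparison window gains or loses $\mathcal{O}(1)$ positions at its boundary. Consequently each $E_v$ with $v$ on the path receives $\mathcal{O}(1)$ $\textsc{Set}/\textsc{Reset}$ operations, and nodes off the path are untouched. All these instances are updated in parallel, for a total of $\mathcal{O}(W(n)\log n)$ processors.

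\textbf{Step 3 (main obstacle): locating the positions.} The hard part is finding, in constant time, which concrete positions flip at every node of the path and where the window boundaries lie. This requires locating the $t$-th unmatched opening or closing parenthesis inside a node's interval. I would do this by descending the tree using the stored counts $l,r$, trying all $\mathcal{O}(\log n)$ candidate levels in parallel. For one target position, determining which child to enter at each of the $\mathcal{O}(\log n)$ levels takes $\mathcal{O}(\log n)$ work per level, i.e.\ $\mathcal{O}(\log^2 n)$ processors. Over the $\mathcal{O}(\log n)$ path nodes this gives $\mathcal{O}((\log n)^3)$ processors. This accounts for the additive $(\log n)^3$ term, and the total is $\mathcal{O}(W(n)\log n+(\log n)^3)$. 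The delicate point is verifying carefully that these descents are consistent with the old versus new values of $l,r$. I would address this by computing all new $l,r$ values first, via the effect-propagation of \Cref{the:dyck1}, and only then performing the searches.
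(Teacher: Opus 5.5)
The paper does not prove this lemma; it imports it from Schmidt et al.\ \cite{schmidtWorksensitiveDynamicComplexity2021}. Your Steps~1 and~2 are sound:
\begin{itemize}
\item Projecting to $D_1$ and keeping one \textsc{StringEquality} instance per inner node, comparing exactly the pairs whose lowest common ancestor is that node, is a correct characterization.
\item Your observation that each node on the updated root-to-leaf path sees only $\mathcal{O}(1)$ toggles is also correct. Together these give the $\mathcal{O}(W(n)\log n)$ term.
\end{itemize}

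The gap is in Step~3, which is exactly where the $(\log n)^3$ bound is supposed to come from. Locating the $t$-th last unmatched opening of a node by descending with the stored $l,r$ values is inherently sequential. At depth $d$, the choice of child depends on the residual rank. That rank is $t$ minus the contributions $r-l$ of all right siblings skipped at shallower depths, so it depends on all earlier decisions. Running the levels in parallel therefore needs these prefix sums exactly. You give no mechanism that obtains them with $\mathcal{O}(\log n)$ processors per level. This is more than a missing detail: the primitive itself, exact summation of $\Theta(\log n)$ integers in constant time, subsumes parity of $\Theta(\log n)$ bits. By the Beame--H{\aa}stad lower bound, a CRCW PRAM cannot do that with polylogarithmically many processors. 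You also cannot store absolute heights instead, because a single update shifts them on a whole suffix. So the claimed $\mathcal{O}(\log^2 n)$ per target, and hence the additive $(\log n)^3$, is unsupported.

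The same issue appears in milder form in Step~1. The global counter changes by the number of flags flipped on the path, which is again an exact count of up to $\log n$ bits. You can fix this by storing at each node the conjunction of all local flags in its subtree. Along the path, each such bit is an AND of $\mathcal{O}(\log n)$ values, so all of them can be recomputed with $\mathcal{O}(\log^2 n)$ processors.

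What is missing is a way to find the positions that enter or leave a window without computing ranks or selects. Here is one direction, which you would have to check case by case:
\begin{itemize}
\item For every node, store the boundary pair of its window, i.e.\ the outermost pair matched there.
\item Keep each node's unmatched positions in a successor/predecessor structure as in \Cref{lem:fixedSuccessor}. Since a window shifts by only $\mathcal{O}(1)$ elements, every change is then a single predecessor/successor step from a stored pointer.
\item At each path node, the position that newly becomes unmatched (or matched) is either the one inherited from the child or one of these stored pointers. The choice is made by a local comparison; for instance, where the left child's effect is preserved, it is the minimum of the inherited position and the stored boundary.
\item Along the chain of the $D_1$ effect propagation, these choices can be resolved in parallel in the style of the effect-inducing search of \Cref{the:dyck1}.
\end{itemize}

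Alternatively, an $n^{\epsilon}$-ary tree with explicitly maintained prefix sums makes your descent work. That yields an additive $\mathcal{O}(n^{\epsilon})$ term, which suffices for \Cref{cor:dyck} but not for the lemma as stated.
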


We use \Cref{lem:StringEqualityToDyck} to show that \member{$D_k$}~can be solved on $\mathcal{O}(n^{\epsilon})$ processors.
To this end, we need to show that we can maintain \stringEquality~efficiently.
Our solution for \stringEquality~uses the problem \rangeEval{$M,\circ$}, which we address first.

\subsection{Maintaining RangeEval}

The problem \rangeEval{$M,\circ$}~uses a ""monoid"" $(M,\circ)$, which consists of a set $M$, a binary associative operator $\circ: M \times M \to M$ and a neutral element $0 \in M$.
The problem \rangeEval{$M,\circ$}~maintains a sequence of elements $m_1 \ldots m_{n}$ and can determine, for any range $1 \leq l \leq r \leq n$, the result of $m_l \circ \ldots \circ m_r$.

\begin{itemize}
	\item \AP \emph{Problem:} \intro *\rangeEval{$M,\circ$}
	\item \emph{Initialization:} A sequence $m_1 \ldots m_n$ of "monoid" elements with $m_i = 0$ for each $i \in [1,n]$.
	\item \emph{Updates:}
	\begin{itemize}
		\item $\textsc{Set}(i,m)$: Replace $m_i$ with $m$.
	\end{itemize}
	\item \emph{Queries:}
	\begin{itemize}
		\item $\textsc{Range}(l,r)$: Return $m_l \circ \ldots \circ m_r$.
	\end{itemize}
\end{itemize}

Schmidt et al.\ show that \rangeEval{$M,\circ$}~can be maintained on $\mathcal{O}(n^{\epsilon})$ processors for finite "monoids" $(M,\circ)$ \cite[Proposition 5.2]{schmidtWorksensitiveDynamicComplexity2021}.
It is easy to see that their proof also works for other "monoids" $(M,\circ)$, so long as each $m_i \in M$ can be stored in a constant number of memory cells (i.e.\ $\mathcal{O}(\log n)$ bits) and the operator $\circ$ can be evaluated in constant time by a single processor.

\begin{lemma}\label{lem:RangeEval} \cite[Proposition 5.2]{schmidtWorksensitiveDynamicComplexity2021}
	Let $(M,\circ)$ be a "monoid" where each $m_i \in M$ can be stored in $\mathcal{O}(\log n)$ bits and the operator $\circ$ can be evaluated in constant time by a single processor.
	For any $\epsilon > 0$, there is a dynamic algorithm for \rangeEval{$M,\circ$}~on a "common CRCW PRAM", where
	\begin{itemize}
		\item the size of the data structure is in $\mathcal{O}(n)$,
		\item the initialization runs in parallel constant time on $\mathcal{O}(n)$ processors,
		\item updates run in parallel constant time on $\mathcal{O}(n^{\epsilon})$ processors, and
		\item queries run in parallel constant time on one processor.
	\end{itemize}
\end{lemma}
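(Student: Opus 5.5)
The plan is to maintain a balanced tree of constant height with large fan-out, and to store enough partial products that a query touches only $\mathcal{O}(1)$ stored values. Fix $d = \lceil n^{\epsilon/2}\rceil$ and build a static tree over the positions $1,\ldots,n$ with fan-out $d$ and height $h = \lceil 2/\epsilon\rceil$. Each node $v$ covers a contiguous interval $I_v$ and stores two arrays indexed by the positions of $I_v$: the prefix products $P_v[x] = m_{\min I_v}\circ\cdots\circ m_x$ and the suffix products $Q_v[x] = m_x\circ\cdots\circ m_{\max I_v}$. In addition, $v$ stores for every pair $a\le b$ of its children the product $C_v[a,b]$ of the children $a,\ldots,b$.

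Each level of the tree partitions $[1,n]$, so the prefix and suffix arrays take $\mathcal{O}(n)$ cells per level and $\mathcal{O}(hn)=\mathcal{O}(n)$ cells in total. This needs only that each element fits in $\mathcal{O}(\log n)$ bits. Initialization writes the neutral element $0$ into every cell, which takes one processor per cell.

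To answer $\textsc{Range}(l,r)$, let $v$ be the lowest common ancestor of the leaves $l$ and $r$. Let $a<b$ be the children of $v$ containing $l$ and $r$, respectively. Then
\[ m_l\circ\cdots\circ m_r = Q_{a}[l]\circ C_v[a+1,b-1]\circ P_{b}[r], \]
where the middle factor is omitted if $b=a+1$. The case $l=r$ is trivial. Since the tree is static, $v$, $a$ and $b$ can be computed arithmetically from $l$ and $r$: the depth of $v$ is the number of leading base-$d$ digits that $l-1$ and $r-1$ have in common. The paper's model allows this per-processor computation, and it can also be replaced by a lookup table. Evaluating the formula needs two applications of $\circ$, so a query runs in constant time on one processor.

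For $\textsc{Set}(i,m)$, only the $h$ ancestors of leaf $i$ change. For an ancestor $v$, the entries of $P_v$ at positions $x\ge i$ and the entries of $Q_v$ at positions $x\le i$ change. These entries are not recomputed one by one, since $|I_v|$ may be large. Instead we use the decomposition trick in the opposite direction. Let $c$ be the child of $v$ containing $x$, and let $c_1$ be the first child of $v$. Then for $x\in I_c$ we have
\[ P_v[x] = C_v[c_1, c-1]\circ P_c[x]. \]
So it suffices to store $P_v$ only at child boundaries and to recover the other entries at query time. This adds one more $\circ$ per level, but only a constant number in total, because the query formula uses just two such arrays. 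With this change, an update recomputes bottom-up, for each of the $h$ ancestors, the $\mathcal{O}(d^2)$ values $C_v[\cdot,\cdot]$ and the $\mathcal{O}(d)$ boundary prefix and suffix values. Each value is computed as a product of $\mathcal{O}(1)$ already-known values, namely $C_v[a,b]=C_v[a,k]\circ C_v[k+1,b]$, where $k$ is the child containing $i$ and the product of child $k$ comes from the level below. Processing the $h$ levels sequentially takes constant time on $\mathcal{O}(hd^2)=\mathcal{O}(n^{\epsilon})$ processors.

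The main obstacle is the space bound, because of the pairwise tables $C_v$. A node at height $j$ stores $d^2$ entries, and there are $n/d^j$ such nodes, so the nodes at height $1$ together store $nd$ entries. This exceeds $\mathcal{O}(n)$. I would fix this by choosing a non-uniform fan-out. Near the leaves, nodes have fan-out $2$ up to height $\lceil\log d\rceil$; above that, the fan-out is $d$. A node whose interval has length $L\ge d^2$ then stores $d^2\le L$ pairwise entries, so each of the $\mathcal{O}(1/\epsilon)$ upper levels contributes only $\mathcal{O}(n)$ entries in total.

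The remaining difficulty is the bottom part, which consists of binary levels of height $\Theta(\log n)$. There, queries inside a block of size $d^2$ still need constant time. For these blocks, I would store $P$ and $Q$ relative to sub-blocks of size $d$. For the middle part, I would either replace the pairwise table by a sparse-table-style layout whose size is linear in the block size, or, if that fails, accept one more level of indirection. Checking that this bottom layer fits in $\mathcal{O}(n)$ space while updates stay within $\mathcal{O}(n^{\epsilon})$ processors is the step I expect to require the most care.
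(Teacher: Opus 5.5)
\textbf{Verdict: genuine gap.} The $\mathcal{O}(n)$ size bound is not established, and for a general monoid it cannot be established along the lines you suggest.

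\textbf{What is sound.} Everything before your space discussion works. The paper gives no argument beyond citing Schmidt et al., and your core construction is presumably the one behind that citation. It is a tree of height $h=\mathcal{O}(1/\epsilon)$ and fan-out $d=n^{\epsilon/2}$ that stores $C_v[a,b]$ for all pairs of children. Queries recover the boundary suffix and prefix products with $\mathcal{O}(h)$ lookups. Updates recompute $\mathcal{O}(d^2)$ entries at each of the $h$ ancestors. This gives one-processor constant-time queries and $\mathcal{O}(n^{\epsilon})$-processor updates. However, it uses $\Theta(nd)=\Theta(n^{1+\epsilon/2})$ cells, and as many processors to initialize them.

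\textbf{Why your remedies do not close the gap.} With binary levels at the bottom, the upper part is indeed linear, and per-block prefix and suffix arrays handle queries that cross a block boundary. But a query with $l$ and $r$ in the same bottom block is again $\textsc{Range}$ on $d=n^{\Omega(\epsilon)}$ elements, with $\mathcal{O}(d)$ cells per block and $\mathcal{O}(1)$ operations. That is exactly the original problem.
\begin{itemize}
\item A sparse table does not help. Its $\mathcal{O}(1)$ query overlaps two ranges, which requires $x\circ x=x$. The disjoint variant needs $s\log s$ cells for a block of size $s$.
\item Extra levels of indirection do not help either. A pairwise table over $s/s'$ sub-blocks costs at most $s$ cells only if $s'\ge\sqrt{s}$. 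So constantly many levels still leave blocks of polynomial size.
\item More fundamentally, for a general (black-box) monoid the answer can only be formed by combining stored products with $\circ$. Yao's space--time tradeoff for one-dimensional semigroup range queries says that $\mathcal{O}(n)$ stored values force an unbounded, inverse-Ackermann number of operations per query. So no scheme of this kind reaches linear size with constant-time queries.
\end{itemize}

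\textbf{What you can conclude.} Your argument does prove the lemma with size $\mathcal{O}(n^{1+\epsilon})$, and initialization on that many processors.

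For the instance the paper actually uses, $([0,n],+_{n+1})$ in the reduction from string equality, linear size is easy because that monoid is a group:
\begin{itemize}
\item At each node, store only the $d$ prefix sums over its children.
\item Compute $\mathrm{pre}(x)=m_1\circ\cdots\circ m_x$ with $\mathcal{O}(h)$ lookups along the root-to-leaf path.
\item Answer $\textsc{Range}(l,r)=\mathrm{pre}(r)-\mathrm{pre}(l-1)$.
\item On $\textsc{Set}(i,m)$, add the difference to the $\mathcal{O}(d)$ affected prefix sums at each of the $h$ ancestors.
\end{itemize}
This uses $\mathcal{O}(n)$ cells, one-processor constant-time queries, and $\mathcal{O}(hd)$ processors per update.
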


We do not repeat the proof here because it is identical to the original proof in \cite[Proposition 5.2]{schmidtWorksensitiveDynamicComplexity2021}.\footnote{
	We note that, if we take the algorithm from Schmidt et al.\ and replace the fixed tree with an adaptive tree like in \Cref{lem:dynString}, the modified algorithm even supports insertions and deletions.
	We do not discuss this modification in detail because it is unnecessary for our purposes.
}
Now that we have an algorithm for \rangeEval{$M,\circ$}, we can reduce \stringEquality~to \dynamicLCE.

\subsection{Maintaining String Equality}

To solve \stringEquality, we reduce \dynamicLCE~to it.
Recall that \dynamicLCE~is defined as follows.

\begin{itemize}
	\item \emph{Problem:} \dynamicLCE
	\item \emph{Initialization:} String $S=\epsilon$, alphabet $\Sigma = [1,n]$, maximum size $n$
	\item \emph{Updates:}
	\begin{itemize}
		\item $\textsc{Insert}(i,\sigma)$: Set $S$ to $S[1,i-1] \circ \sigma \circ S[i,|S|]$.
		\item $\textsc{Delete}(i)$: Set $S$ to $S[1,i-1] \circ S[i+1,|S|]$.
	\end{itemize}
	\item \emph{Queries:}
	\begin{itemize}
		\item $\textsc{Lcp}(i,j)$: Return $\max\{ m \mid S[i,i+m) = S[j,j+m) \}$.
		\item $\textsc{Lcs}(i,j)$: Return $\max\{ m \mid S(i-m,i] = S(j-m,j] \}$.
	\end{itemize}
\end{itemize}

According to \Cref{the:LCE}, we can solve \dynamicLCE~on $\mathcal{O}(n^{\epsilon})$ processors.
It remains to show that we can solve \stringEquality~on $\mathcal{O}(n^{\epsilon})$ processors as well.

\begin{lemma}\label{lem:stringEquality}
	For any $\epsilon > 0$, there is a dynamic algorithm for \stringEquality~on a "common CRCW PRAM", where
	\begin{itemize}
		\item the size of the data structure is in $\mathcal{O}(n \log n \log^* n)$,
		\item the initialization for $S = \bot^n$ runs in parallel constant time on $\mathcal{O}(n \log n \log^* n)$ processors,
		\item updates run in parallel constant time on $\mathcal{O}(n^{\epsilon})$ processors, and
		\item queries run in parallel constant time on $\mathcal{O}(n^{\epsilon})$ processors.
	\end{itemize}
\end{lemma}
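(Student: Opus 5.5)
We reduce \stringEquality~to \dynamicLCE~(\Cref{the:LCE}) and use \rangeEval{$M,\circ$}~(\Cref{lem:RangeEval}) to translate between indices of the padded strings $S_1,S_2$ and indices of the compact strings $\erase{\bot}(S_1),\erase{\bot}(S_2)$.

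\textbf{Data structure and initialization.} We keep one instance of \dynamicLCE~with maximum size $2n+1$ for the concatenated string $T = \erase{\bot}(S_1)\circ \# \circ \erase{\bot}(S_2)$, where $\#$ is a fresh separator character. Because the alphabet stays of size $\mathcal{O}(n)$, it can be relabelled into $[1,2n+1]$. We also keep two instances of \rangeEval{$\mathbb{Z},+$}, one for each $S_t$. Their sequences hold $m_i = 1$ if $S_t[i] \neq \bot$ and $m_i = 0$ otherwise. This monoid fits the hypotheses of \Cref{lem:RangeEval}, since counts are at most $n$ and fit in $\mathcal{O}(\log n)$ bits, and addition takes one step. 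Initially both strings are $\bot^n$, so $T = \#$. This costs one insertion after the $\mathcal{O}(n\log n\log^* n)$ initialization of \Cref{the:LCE}, plus $\mathcal{O}(n)$ for the range structures. Hence the size and initialization bounds follow directly.

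\textbf{Updates.} For $\textsc{Set}_1(i,\sigma)$ we first query $\textsc{Range}(1,i-1)$ on the first counter. This returns $p$, the number of non-void characters before position $i$, in constant time with one processor. We then call $\textsc{Insert}(p+1,\sigma)$ on $T$ and $\textsc{Set}(i,1)$ on the counter.

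For $\textsc{Set}_2(i,\sigma)$ we need the offset $|\erase{\bot}(S_1)|+1$. We read $|\erase{\bot}(S_1)| = \textsc{Range}(1,n)$ from the first counter and add $p+1$ to it, where $p$ now comes from the second counter. Resets work the same way, using $\textsc{Delete}$ and $\textsc{Set}(i,0)$.

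Each update makes $\mathcal{O}(1)$ calls to components that run on $\mathcal{O}(n^{\epsilon})$ processors, so updates run on $\mathcal{O}(n^{\epsilon})$ processors in constant time.

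\textbf{Queries.} Let $a = \textsc{Range}(1,n)$ on the first counter and $b = \textsc{Range}(1,n)$ on the second counter. Then $\erase{\bot}(S_1) = \erase{\bot}(S_2)$ holds if and only if two conditions hold. First, $a = b$. Second, either $a = 0$ or $\textsc{Lcp}(1,a+2) \geq a$ on $T$. The separator $\#$ ensures that the common prefix found by $\textsc{Lcp}$ cannot run past the end of the first block into the second. By \Cref{the:LCE}, this query runs on $\mathcal{O}(n^{\epsilon})$ processors.

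\textbf{Main obstacle.} The step needing the most care is the index bookkeeping. We must ensure that the rank computed from the counters matches the compact index expected by \dynamicLCE, and in particular we must handle the offset for $S_2$ correctly. Everything else is a direct composition of \Cref{the:LCE} and \Cref{lem:RangeEval}. Only constantly many sequential calls are needed, so constant time is preserved.
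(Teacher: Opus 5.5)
Your proof is correct and follows essentially the same route as the paper. Both use one \dynamicLCE~instance over the concatenation of $\erase{\bot}(S_1)$ and $\erase{\bot}(S_2)$, plus two \rangeEval{$M,\circ$}~instances whose prefix sums translate a position of $S_t$ into an insertion or deletion position, and both read equality off the two lengths and a single $\textsc{Lcp}$ query. The differences are cosmetic: the separator $\#$ is harmless but unnecessary, since $a=b$ already caps the $\textsc{Lcp}$ at $a$; you read $|\erase{\bot}(S_1)|$ from the counter rather than storing it; and you use $(\mathbb{Z},+)$ with bounded values where the paper uses $([0,n],+_{n+1})$.
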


We want to reduce \stringEquality~to \dynamicLCE.
It is not difficult to see that, if we let the string for \dynamicLCE~be the concatenation of the two strings from \stringEquality, answering queries becomes easy.
The challenge lies in the different updates.

\stringEquality~uses "void characters" $\bot$, whereas \dynamicLCE~has insertion and deletion.
While insertion and deletion is intuitively more powerful than "void characters", it is not obvious how to get from one to the other.
The difficulty lies in the fact that, for a string $S$ with "void characters", the position $S[i]$ in general does not correspond to $\erase{\bot}(S)[i]$.
Instead $S[i]$ corresponds to $\erase{\bot}(S)[j]$ where $j$ is the number of non-"void characters" in $S[1,i]$.

We use an instance of \rangeEval{$M,\circ$}~with an appropriate "monoid" $(M,\circ)$ to maintain which position in $S$ corresponds to which position in $\erase{\bot}(S)$.

\begin{proof}
	We show that we can use the algorithm for \dynamicLCE~from \Cref{the:LCE} to solve \stringEquality.
	
	Let $S_1,S_2$ be the strings from \stringEquality.
	The algorithm uses an instance of \dynamicLCE~with the concatenated string $\erase{\bot}(S_1 \circ S_2)$.
	It furthermore maintains the lengths $|\erase{\bot}(S_1)|, |\erase{\bot}(S_2)|$ of $\erase{\bot}(S_1)$ and $\erase{\bot}(S_2)$.
	
	The algorithm answers queries $\textsc{Equal}()$ by checking if $\textsc{Lcp}(1,|\erase{\bot}(S_1)|+1) = |\erase{\bot}(S_1)| = |\erase{\bot}(S_2)|$ holds.
	
	The algorithm uses instances of \rangeEval{$M,\circ$}~to maintain which position of $S_1$ corresponds to which position of $\erase{\bot}(S_1)$.
	We define a binary string $B_1$ that marks which positions in $S_1$ are not $\bot$.
	Let $B_1$ be a binary string with $|B_1| = |S_1| = n$ and $B_1[i] = 1$ if and only if $S_1[i] \neq \bot$.
	Then, for any index $i$ of $S_1$, it follows that $j = \sum_{h = 1}^{i}B[h]$ is the corresponding index in $\erase{\bot}(S_1)$.
	
	Let $M = [0,n]$ be the natural numbers from 0 to $n$ and let $\circ = +_{n+1}$ be addition modulo $n+1$.
	According to \Cref{lem:RangeEval}, \rangeEval{$[0,n],+_{n+1}$}~can be maintained on $\mathcal{O}(n^{\epsilon})$ processors.
	We use the binary string $B_1$ as the underlying sequence of "monoid"-elements for  \rangeEval{$[0,n],+_{n+1}$}.
	A query $\textsc{Range}(1,i) = \sum_{h = 1}^{i} B[i] = j$ then returns, for any index $i$ of $S$, the corresponding index $j$ in $\erase{\bot}(S)$.
	
	The algorithm processes an update $\textsc{Set}_1(i,\sigma)$ by first using the instance of\\ \rangeEval{$[0,n],+_{n+1}$}~to determine which index $j$ of $\erase{\bot}(S)$ corresponds to the first non-"void-character" before $S[i]$.
	It updates that instance \rangeEval{$[0,n],+_{n+1}$}~with $\textsc{Set}(i,1)$, increments $|\erase{\bot}(S_1)|$ by one, then updates \dynamicLCE~with $\textsc{Insert}(j+1,\sigma)$.
	
	Updates $\textsc{Reset}_1(i)$ are processed similarly.
	
	For updates to $S_2$ (updates $\textsc{Set}_2(i,\sigma)$ and $\textsc{Reset}_2(i)$), the algorithm uses a second instance of \rangeEval{$[0,n],+_{n+1}$}~to maintain which positions in $S_2$ correspond to which positions in $\erase{\bot}(S_2)$.
	Updates to $S_2$ are processed similarly to updates to $S_1$.
\end{proof}

We have shown that we can use our algorithm for \dynamicLCE~alongside an algorithm for \rangeEval{$M,\circ$}~to solve \stringEquality~efficiently.
\Cref{cor:dyck} follows from combining \Cref{lem:stringEquality} with \Cref{lem:StringEqualityToDyck}.

\section{Proof of \texorpdfstring{\Cref{cor:squares}}{Corollary 16}} \label{sec:proof:squares}

We show that a dynamic parallel constant-time algorithm can maintain \dynamicSquareFreeness~on a "common CRCW PRAM" with $\mathcal{O}(n^{\epsilon})$ processors.
This algorithm is derivative of an algorithm by Amir et al.\ \cite{amirRepetitionDetectionDynamic2019} modified to run in parallel constant time.
As with the original algorithm by Amir et al., this algorithm can be modified to report max "squares", enumerate all "squares" or enumerate all runs.
We only discuss the simplest incarnation for "square-freeness" here.

We begin by defining finite arithmetic progressions in \cref{subsec:sq:arithmetic}, which the algorithm uses to encode intermediate results efficiently.
Then we give an overview of the algorithm's structure in \cref{subsec:sq:structure}.
The algorithm runs over a series of stages, which we describe in \crefrange{subsec:sq:prefsuf}{subsec:sq:squareFree}.

\subsection{Finite Arithmetic Progressions} \label{subsec:sq:arithmetic}

The algorithm uses finite arithmetic progressions to encode intermediate results.

A finite arithmetic progression is a finite sequence of integers $i_1,i_2,\ldots,i_n$ where the difference between neighboring integers is constant, i.e.\ $i_j - {i_j+1} = i_{j+1} - i_{j+2}$ for all $j \in [1,n-2]$.
Finite arithmetic progressions can be written functions $P:[0,n-1] \to \mathbb{N}$ with $P(x) = a \cdot x + d$ with $i_1 = P(0), i_2 = P(1), \ldots, i_n = P(n-1)$.
This function is defined as a triple $(a,d,n)$ where $a$ is the first integer in the progression, $d$ is step between neighboring integers and $n$ is the length.
We encode finite arithmetic progressions as such triples $(a,d,n)$.

\subsection{Structure}\label{subsec:sq:structure}

The structure of this algorithm differs from the algorithms discussed so far.

The algorithm runs over a series of four stages.
Each stage consists of one or more instances of a dynamic algorithm.
Each instance uses one or more sub-instances of a dynamic algorithm from the stage below it.

Updates consist of a top-down step followed by a bottom-up step.
The top-down step passes the update through the stages to all instances that are affected, essentially computing which instances are affected and notifying them of the change.
The bottom-up step computes the effects from that change for each affected instance, starting at the lowest stage and traveling upwards.

The top-down step starts at the singular topmost instance.
Each instance that is updated passes this update along to all of its sub-instances that are affected until this chain reaches the lowermost instances.
The topmost instance passes the update to a logarithmic number of sub-instances.
Each following instance passes the update to a constant number of sub-instances.
The number of steps between the topmost and lowermost instances is constant.

The calculation during the bottom-up step for each instance is essentially static.
Each affected instance recalculates its result fully, then passes its result up to its parent-instance, so that it may recalculate its result.

The \emph{dynamic} aspect of this algorithm lies in the fact that each instance passes updates only to those sub-instances that are affected by it and sticks with the old results for unaffected sub-instances.

The topmost stage solves the problem \dynamicSquareFreeness.

\begin{itemize}
	\item \AP \emph{Problem:} \intro *\dynamicSquareFreeness
	\item \emph{Initialization:} String $S=\text{a}^n$, alphabet $\Sigma = \{a,\ldots\}$
	\item \emph{Updates:} $\textsc{Set}(x,\sigma)$: Set $S[x]$ to $\sigma \in \Sigma$.
	\item \emph{Query:} $\textsc{Query}()$: Does $S$ have a substring $uu, u \neq \epsilon$?
\end{itemize}

Note that, unlike the problems discussed before, this problem as well as the following problems do not support character insertion and do not support "void characters".

The algorithm solves this problem using $\mathcal{O}(n)$ instances of \dynamicRangeSquareFreeness, which restricts \dynamicSquareFreeness~to "squares" whose lengths and starting positions fall within certain ranges.

\begin{itemize}
	\item \AP \emph{Problem:} \intro *\dynamicRangeSquareFreeness
	\item \emph{Initialization:} String $S=\text{a}^n$, alphabet $\Sigma = \{a,\ldots\}$, integers $\ell, i$
	\item \emph{Updates:}  $\textsc{Set}(x,\sigma)$: Set $S[x]$ to $\sigma \in \Sigma$.
	\item \emph{Query:} $\textsc{Query}()$: Does $S$ have a substring $S[x,x+m] = uu$ with:
	\begin{itemize}
		\item  $x \in [\ell(i-2)+2,\ell(i-1)+2)$ and $m \in [4\ell, 6\ell)$.
	\end{itemize}
\end{itemize}

To tackle an instance of \dynamicRangeSquareFreeness, it uses one instance of \dynamicStringMatching.
The problem \dynamicStringMatching~uses a text $T$ and pattern $P$ where the text is exactly twice as long as the pattern.
It maintains every occurrence of the pattern in the text.
We will see that these occurrences can be encoded as one finite arithmetic progression.

\begin{itemize}
	\item \AP \emph{Problem:} \intro *\dynamicStringMatching
	\item \emph{Initialization:} Strings $T=\text{a}^{2n}$, $P=\text{a}^n$, alphabet $\Sigma = \{a,\ldots\}$
	\item \emph{Updates:}
	\begin{itemize}
		\item $\textsc{Set}_T(x,\sigma)$: Set $T[x]$ to $\sigma \in \Sigma$.
		\item $\textsc{Set}_P(x,\sigma)$: Set $P[x]$ to $\sigma \in \Sigma$.
	\end{itemize}
	\item \emph{Query:}
	\begin{itemize}
		\item $\textsc{Occ}()$: All occurrences of $P$ in $T$ encoded as a finite arithmetic progression.
	\end{itemize}
\end{itemize}

It resolves this problem with two instances of \dynamicPrefixSuffix.
The problem \dynamicPrefixSuffix~uses two strings $P$ and $S$ and maintains all prefixes of $P$ that are suffixes of $S$.
A query retrieves occurrences with lengths in $[2^{\ell}, 2^{\ell+1})$ because the occurrences in such a range can always be encoded as one finite arithmetic progression.

\begin{itemize}
	\item \AP \emph{Problem:} \intro *\dynamicPrefixSuffix
	\item \emph{Initialization:} Strings $P=\text{a}^{n}$, $S=\text{a}^n$, alphabet $\Sigma = \{a,\ldots\}$
	\item \emph{Updates:}
	\begin{itemize}
		\item $\textsc{Set}_P(x,\sigma)$: Set $T[x]$ to $\sigma \in \Sigma$.
		\item $\textsc{Set}_S(x,\sigma)$: Set $P[x]$ to $\sigma \in \Sigma$.
	\end{itemize}
	\item \emph{Query:}
	\begin{itemize}
		\item $\textsc{Match}(\ell)$: All $i \in [2^{\ell},2^{\ell+1})$ with $P[1,i] = S[|S|-i+1,|S|]$, encoded as a finite arithmetic progression
	\end{itemize}
\end{itemize}

Finally, the algorithm solves \dynamicPrefixSuffix~recursively over constant depth, where each instance uses $\mathcal{O}(n^{\epsilon})$ recursive sub-instances.

We describe these stages from the bottom up, starting with the recursive stage.

\begin{enumerate}
	\item Solve one instance of \dynamicPrefixSuffix~with $P,S$ with
	\begin{itemize}
		\item $\mathcal{O}(n^{\epsilon})$ instances of \dynamicPrefixSuffix~with $P',S'$, $n^{\epsilon}|P'S'| = |PS|$.
	\end{itemize}
	\item Solve one instance of \dynamicStringMatching~with $T,P$ with
	\begin{itemize}
		\item two instances of \dynamicPrefixSuffix~with $P',S'$, $|P'S'| = |T| = 2|S|$.
	\end{itemize}
	\item Solve one instance of \dynamicRangeSquareFreeness~with $S, \ell$ with
	\begin{itemize}
		\item One instance of \dynamicStringMatching~with $T',P'$, $|T'| = 2|P'| = 2\ell$.
	\end{itemize}
	\item Solve the topmost instance \dynamicSquareFreeness~with $S$ with
	\begin{itemize}
		\item $\mathcal{O}(n)$ instances of \dynamicRangeSquareFreeness~with $S' = S$.
	\end{itemize}
\end{enumerate}

We use the fact that we are only interested in "square-freeness" to simplify the stages.
If a stage has many candidates, it has a periodic substring which always contains a "square".
The algorithm only have to verify candidates if there a few of them, in which case it can afford to look at every candidate separately.

To this end, if any instance discovers a periodic substring, it may simply report that it found a "square", without actually solving its associated problem.

The algorithm also maintains an instance of \dynamicLCE~over the string $S$ to compare substrings efficiently.\footnote{
	We note that, alternatively, it is possible to adjust the algorithm used in \cite[Lemma 1]{amirRepetitionDetectionDynamic2019} to run in constant time on $\mathcal{O}(n^{\epsilon})$ processors.
	One can use \Cref{lem:RangeEval} to maintain Karp-Rabin fingerprints for each substring, where the elements in the "monoid" are tuples containing the fingerprint and length of a substring and the binary operator computes the fingerprint of the concatenation.
	
	We prefer to use \Cref{the:LCE} instead because Karp-Rabin fingerprints are randomized, may produce false positives, and maintaining them via \Cref{lem:RangeEval} allows only character substitutions.
}

\subsection{Recursive Dynamic Prefix-Suffix} \label{subsec:sq:prefsuf}

The first stage solves \dynamicPrefixSuffix~using $n^{\epsilon}$ recursive instances.
This algorithm is taken from Amir et al.\cite[Section 4.1]{amirRepetitionDetectionDynamic2019}, with some modifications to run in parallel constant time.

\begin{lemma}\label{the:sqfreeness:prefixsuffix}
	For any $\epsilon > 0$, there is a dynamic constant-time algorithm for \dynamicPrefixSuffix~on a "common CRCW PRAM" with $\mathcal{O}(n^{\epsilon})$ work.
	If $P$ or $S$ contains a periodic substring, a query may return this information instead of the regular result.
\end{lemma}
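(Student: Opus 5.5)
We follow the recursive scheme of Amir et al.\ for \dynamicPrefixSuffix, but split into $n^{\epsilon}$ pieces per recursion level instead of a constant number. The recursion then has depth $\lceil 1/\epsilon\rceil$, which is constant. We use the standard combinatorial fact that all $i \in [2^{\ell},2^{\ell+1})$ with $P[1,i] = S[|S|-i+1,|S|]$ form a single arithmetic progression. If two such lengths $i<i'$ exist, then $P[1,i']$ has period $i'-i \le 2^{\ell}$, so either the progression is described by that period or $P$ contains a periodic substring. In the latter case we are allowed to report periodicity. All substring comparisons use the instance of \dynamicLCE\ from \Cref{the:LCE}, which costs $\mathcal{O}(n^{\epsilon})$ processors per comparison and constant time.

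\textbf{Construction.} For an instance with strings $P,S$ of length $n$, split $P$ into $n^{\epsilon}$ blocks of length $n^{1-\epsilon}$, and do the same for $S$. For each block boundary $b$, create a sub-instance whose strings are the prefix $P[1,b']$ and the suffix $S$-block of matching length, scaled so that $n^{\epsilon}|P'S'| = |PS|$. Concretely, a prefix-suffix match of length $i$ lies in a range $[b,b+n^{1-\epsilon})$. It is witnessed by a match of a length-$(i-b)$ part inside one sub-instance together with an \dynamicLCE\ equality check of the remaining aligned portion of length $b$. The $\textsc{Match}(\ell)$ query is answered as follows. Determine the $\mathcal{O}(1)$ or $n^{\epsilon}$ blocks overlapping $[2^{\ell},2^{\ell+1})$. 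In parallel, query each relevant sub-instance for its progression. Filter each progression by a constant number of LCE checks, since an arithmetic progression survives a rigid equality constraint either fully, at one point, or at none, because compatibility along a period is monotone. Then merge the at most $n^{\epsilon}$ resulting progressions into one. If they are inconsistent with a single progression, a periodic substring has been detected and we report it. A constant-size recursion base is solved naively.

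\textbf{Updates and work.} A character update touches only the $\mathcal{O}(1)$ sub-instances containing the changed position at each recursion level. Those sub-instances recompute their cached answers bottom-up in constant time per level, while all others keep their cached progressions. Merging $n^{\epsilon}$ progressions per instance, including the \dynamicLCE\ filters, costs $n^{\epsilon}\cdot \mathcal{O}(n^{\epsilon})$ processors. The affected instances per level number $\mathcal{O}(1)$, and there are $\mathcal{O}(1/\epsilon)$ levels. Rescaling $\epsilon$ by a constant therefore gives $\mathcal{O}(n^{\epsilon})$ work.

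\textbf{Main obstacle.} The hard step is merging $n^{\epsilon}$ arithmetic progressions into one in constant time, and proving that any inconsistency genuinely certifies a periodic substring. I would first compute the two smallest surviving lengths using \Cref{lem:linSearch} over blocks. These fix the candidate period $d$. One LCE check then verifies that $P$ has period $d$ up to the largest surviving length. If the check succeeds, the answer is the progression with step $d$ between the smallest and largest survivor. If it fails, the two smallest survivors already yield a periodic substring of $P$ with period $d \le 2^{\ell}$, which we report.
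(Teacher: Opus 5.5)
You keep the paper's overall architecture: an $n^{\epsilon}$-way split, constant recursion depth, cached progressions per dyadic result segment, \dynamicLCE\ verification, and reporting periodicity instead of answering. However, the concrete decomposition you chose breaks the work bound. Your claim that ``a character update touches only the $\mathcal{O}(1)$ sub-instances containing the changed position'' is false for it. In your scheme the length-$(i-b)$ piece of every match sits at the start of $P$: it is $P[1,i-b]$ compared against a suffix of an $S$-block. So all $n^{\epsilon}$ sub-instances contain the same string $P[1,n^{1-\epsilon}]$, and recursively every instance at every depth contains $P[1]$. An update to $P[1]$ therefore forces all $\sum_{k\le 1/\epsilon} n^{k\epsilon}=\Theta(n)$ instances to recompute, giving $\Omega(n)$ work. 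The symmetric choice, putting the short piece at the end of $S$, has the same defect with the last block of $S$.

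The missing idea is to witness each match in the \emph{middle} of its alignment rather than at an end. Write $P=P_1\cdots P_k$ and $S=S_k\cdots S_1$ with blocks of length $B$. If $a$ is a match with $\lceil a/B\rceil=t$, then for every pair $i+j=t+1$ the block $P_i$ starts inside $S_j$ in the alignment. Hence $a-(t-1)B$ is a prefix--suffix match of $(P_i,S_j)$. Choosing $j=i$ for odd $t$ and $j=i+1$ for even $t$ gives the paper's sub-instances $(P_i,S_i)$ and $(P_i,S_{i+1})$. Each block occurs in at most two of them, so an update descends into at most two children per level, which is $2^{\mathcal{O}(1/\epsilon)}$ instances overall. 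Each candidate is then verified by a single \dynamicLCE\ query comparing all of $P[1,a]$ with $S[m-a+1,m]$.

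Your merging machinery is also unnecessary. Two matches $i<i'$ in one segment $[2^{\ell},2^{\ell+1})$ give $P[1,i']$ the period $i'-i<2^{\ell}\le i$, hence $i'-i\le i'/2$. So $P[1,i']$ is itself periodic, and the instance may simply report it whenever a segment has two or more candidates. This also sidesteps your filtering claim, which is not quite right as stated. The candidates surviving the length-$b$ equality constraint can form a proper prefix of the progression, not only all, one, or none of it.
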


We use $n$ when referring to the length of $P$ and $S$ in the original instance at the highest level of the recursion.
We use $m$ for the length of $P$ and $S$ in recursive instances.

\subparagraph{Result Representation}
We first describe how to represent the result efficiently.
While there may be up to $O(m)$ matches (as can be seen with $P = S = a^m$), we can always store these matches as just $O(\log m)$ finite arithmetic progressions.
This relies on the following observation.
\begin{observation}\label{lem:prefix-suffix-result-periodicity}
	Let $P,S \in \Sigma^m$ be some input for \dynamicPrefixSuffix~and let $i \lneq j$ be two matches such that $P[1,i] = S[m-i+1,m]$ and $P[1,j] = S[m-j+1,m]$.
	
	Then $P[1,j] = S[m-j+1,m]$ has a period of length $j-i$.
	
	If there are no matches between $i$ and $j$, then $j-i$ is the length of the shortest period and the matches between $j$ and $(j \text{ mod } (j-i))$ are precisely all elements of the arithmetic progression $P(x) = j - (j-i)x$ for $0 \leq x < \lfloor \frac{j}{j-i} \rfloor+1$.
\end{observation}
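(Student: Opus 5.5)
The plan is to reduce everything to statements about borders and periods of the single string $X = P[1,j] = S[m-j+1,m]$, and then apply the standard periodicity facts. Throughout, I use that a string of length $j$ has a border of length $k$ if and only if it has period $j-k$.

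\emph{Step 1 (period $j-i$).} Since $i < j$, the prefix $P[1,i]$ is a prefix of $X$. Likewise $S[m-i+1,m]$ is the suffix of length $i$ of $S[m-j+1,m] = X$. Because $i$ is a match, these coincide, so $X$ has a border of length $i$ and therefore period $j-i$.

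\emph{Step 2 (minimality).} Suppose no match lies strictly between $i$ and $j$. Assume for contradiction that $X$ had a period $q < j-i$. Then $X$ has a border of length $j-q$, so
\[ P[1,j-q] = X[1,j-q] = X[q+1,j] = S[m-(j-q)+1,m]. \]
Hence $j-q$ is a match with $i < j-q < j$, which is a contradiction. So $p := j-i$ is the shortest period of $X$.

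\emph{Step 3 (the matches form the progression).} For inclusion, let $0 \le x \le \lfloor j/p \rfloor$. Since $X$ has period $p$, it also has period $xp$, so it has a border of length $j - xp$. Reading this border once as a prefix of $P$ and once as a suffix of $S$, exactly as in Step 1, shows that $j-xp$ is a match. For the converse, take any match $k \le j$. The same argument gives $X$ a border of length $k$, hence a period $q = j-k$. If $p + q \le j + \gcd(p,q)$, the Fine--Wilf periodicity lemma makes $\gcd(p,q)$ a period of $X$. Minimality of $p$ then forces $p \mid q$, so $k = j - xp$ lies on the progression. The Fine--Wilf condition $p + q \le j + \gcd(p,q)$ holds whenever $k \ge p$.

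\emph{Main obstacle.} The difficulty is the converse for small matches, with $j \bmod p < k < p$, where Fine--Wilf no longer applies directly. For this range I would first write $X = u^t u'$ with $|u| = p$, $t \ge 1$, and $u'$ the prefix of $u$ of length $r = j \bmod p$. Shifting by multiples of $p$ transfers the border of length $k$ to a border of $uu'$. I would then show that this contradicts the primitivity of $u$, which follows from the minimality of $p$. If that contradiction cannot be obtained in general, I would restrict the claim to the matches in $[p, j]$, which is the range actually used to encode the answers to \textsc{Match}$(\ell)$ for lengths in $[2^\ell, 2^{\ell+1})$.
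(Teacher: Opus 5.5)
\textbf{Where the proof fails.} Steps 1 and 2 are correct. So are the inclusion half of Step 3 and the Fine--Wilf argument for matches $k \ge p$. The failure is in the range you flagged as the main obstacle. No contradiction with the primitivity of $u$ is available there, because the last sentence of the observation is false in that range.

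Take $m=6$ and $P=S=abaaba$. The matches are exactly $1,3,6$. For $i=3$, $j=6$ there is no match in between, and $3$ is indeed the shortest period of $abaaba$. Since $j \bmod 3 = 0$, the progression is $6,3,0$. Yet $1$ is a match lying between $0$ and $6$ that is not on it. An example with nonzero remainder is $P=S=aabaaabaa$ with $i=5$, $j=9$: the progression is $9,5,1$, but $2$ is also a match.

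In your notation, $u=aba$ is primitive and $uu'=aba$ has the border $a$. Primitivity only rules out $u$ being a proper power; it does not rule out $u$ having a border. In general, the borders of $X$ shorter than $j-p$ are exactly the borders of $X[1,j-p]=u^{t-1}u'$. For $t\ge 2$ these can have lengths in $(j \bmod p,\, p)$; for instance, when $j \bmod p=0$ and $t=2$, every border of $u$ is one.

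There is also a smaller defect. When $p \mid j$, the last term of the progression is $0$, which is not a match. Your inclusion claim should therefore be restricted to terms $\ge 1$.

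The paper states this observation without proof, so nothing there handles this range either.

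\textbf{The repair.} Your fallback is the correct fix, and you should adopt it rather than attempt the general case. With $p=j-i$ the shortest period, a number $k\in[p,j]$ is a match if and only if $k=j-xp$ for some integer $x\ge 0$. Your inclusion step and your Fine--Wilf step together prove exactly this.

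This restricted version is all the paper uses. In the subsequent proposition, the observation is applied to the last two matches $a>b$ of $I_i=[2^{i-1},2^i-1]$. Hence $p=a-b\le 2^{i-1}-1$, so every match in $I_i$ is at least $2^{i-1}>p$. All matches of $I_i$ therefore lie in the range where Fine--Wilf applies, and they are precisely the terms $a, a-p, a-2p,\dots$ that fall in $I_i$.
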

A match is an integer from $[1,m]$.
We divide this interval into segments of doubling length $I_1 = [1,1], I_2 = [2,3], I_3 = [4,7], \ldots$, or in general $I_i = [2^{i-1}, 2^i-1]$.
\begin{proposition}
	All matches within the same segment $I_i$ follow a single finite arithmetic progression.
\end{proposition}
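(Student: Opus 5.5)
The plan is to apply \Cref{lem:prefix-suffix-result-periodicity} to the largest two matches in a segment and argue that no match in the segment lies off the resulting progression. Fix a segment $I_i = [2^{i-1}, 2^i-1]$. If it contains at most two matches, they trivially form an arithmetic progression, so assume it contains at least three.

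Let $j$ be the largest match in $I_i$ and let $i'$ be the largest match in $I_i$ with $i' < j$. Since there is no match strictly between $i'$ and $j$, the observation applies. It says that $p = j - i'$ is the length of the shortest period of $P[1,j] = S[m-j+1,m]$. It also says that the matches between $j \bmod p$ and $j$ are exactly the values $j - px$.

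We then need these values to cover everything below $j$ inside $I_i$. Because $I_i$ contains a third match, which is below $i'$, we have $p = j - i' < j - 2^{i-1}$. Both $i'$ and $j$ lie in $I_i$, so $p \le 2^i - 1 - 2^{i-1} < 2^{i-1} \le j/2$ and hence $p < j/2$. So the progression $j, j-p, j-2p, \dots$ runs down to $j \bmod p < p < 2^{i-1}$, which is below the lower end of $I_i$. Since the observation states these are precisely all matches in that range, every match of $I_i$ lies on the progression. Conversely, every term of the progression that lies in $I_i$ is a match. Therefore the matches in $I_i$ are exactly the terms of $j - px$ that fall in $[2^{i-1}, j]$, a single finite arithmetic progression of the form $(a,d,n)$.

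The main obstacle is justifying the observation's ``precisely all'' claim, because the paper states it without proof; I would reprove it as follows.

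\begin{itemize}
\item \emph{Terms are matches.} If $P[1,j] = S[m-j+1,m] =: w$ has period $p$, then for each $k = j - px \ge 1$ the prefix $w[1,k]$ equals the suffix $w[j-k+1,j]$. Hence $P[1,k] = S[m-k+1,m]$.
\item \emph{Every match $k$ with $k > j - p$ is a term.} Such a $k$ is a border of $w$, which gives $w$ the period $j - k$. Minimality of $p$ forces $j - k \ge p$.
\item \emph{Every match $k \le j - p$ is a term.} For such $k$, $w$ has both periods $p$ and $j-k$, and $p + (j-k) \le j$ whenever $k \ge p$. Fine--Wilf then gives $w$ the period $\gcd(p, j-k)$. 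Minimality of $p$ makes $p$ divide $j-k$, so $k \equiv j \pmod p$. This covers all $k \ge p$, which includes all of $I_i$ because $p < 2^{i-1}$.
\end{itemize}

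This last step is exactly what we need, and it avoids any weak-periodicity subtleties for matches below $p$, since those lie outside $I_i$ anyway.
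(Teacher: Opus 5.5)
Your proof is correct and follows the paper's route: take the two largest matches in $I_i$, note that $p<2^{i-1}$, and let the progression $j-px$ run below $2^{i-1}$. Your Fine--Wilf reproof restricted to $k\ge p$ goes further than the paper, which cites its observation without proof. That observation actually overclaims for matches below $p$: $w=\texttt{abaababaaba}$ has shortest period $5$ and borders $6,3,1$, so $3$ is a match off the progression $11,6,1$. Your restriction to $k\ge p$ avoids this and covers all of $I_i$.

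One harmless slip: $2^{i-1}\le j/2$ is false for $j\le 2^i-1$. The conclusion $p<j/2$ still holds via your earlier bound $p<j-2^{i-1}\le j/2$, and the rest of the argument only uses $p<2^{i-1}$.
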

\begin{proof}
	Let $I_i$ be an arbitrary segment.
	If there are two or fewer matches, it is trivial to express them as a finite arithmetic progression.
	We therefore assume that there are at least three matches in $I_i$.
	Let $a \gneq b$ be the last and second-to-last match in $I_i$.
	From $a, b \in I_i = [2^{i-1}, 2^i-1]$, it follows that $a-b < 2^{i-1}$.
	According to \Cref{lem:prefix-suffix-result-periodicity}, all matches in $[a, a \text{ mod } a-b]$ follow a finite arithmetic progression.
	This covers all matches in $I_i$, since $a-b < 2^{i-1}$ implies $a \text{ mod } a-b < 2^{i-1}$ and because there are by definition no matches in $I_i$ after $a$.
\end{proof}
It suffices to store just one finite arithmetic progression for each of the $\mathcal{O}(\log m)$ segments.
We call these segments the result segments.

\subparagraph{Algorithm}
The algorithm for \dynamicPrefixSuffix~divides its input strings in $n^{\epsilon}$ blocks $P = P_1P_2\ldots P_{n^{\epsilon}}$ and $S = S_{n^{\epsilon}}\ldots S_2S_1$ of equal lengths.
For every $i \in \{ 1, \ldots, \frac{n^{\epsilon}}{2} \}$, it uses two recursive instances of \dynamicPrefixSuffix~- one with $(P_i, S_i)$ and another with $(P_i,S_{i+1})$.

The algorithm is based on the observation that a match in the original instance implies a match in one of its recursive instances.
\begin{observation}
	Let $P,S$ be strings with $|P| = |S| = m$ and a match $a$ such that $P[1,a] = S[m-a+1,m]$.
	
	Then, for all substrings $P' = P[l_p,r_p]$ and $S' = S[l_s,r_s]$ with $|P'| = |S'| = m'$, $P',S'$ have match $a' = a-\ell_p-m+r_s+1$ such that $P'[1,a'] = S[m'-a'+1,m']$, if $a' \in [1,m']$.
\end{observation}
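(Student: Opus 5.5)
The plan is to unfold the definition of a match into a statement about individual characters and then restrict it to the window cut out by $P'$ and $S'$. The only real work is index bookkeeping, so I will fix the correspondence explicitly. Note that the statement's conclusion contains a typo: it should read $P'[1,a'] = S'[m'-a'+1,m']$, with $S'$ in place of $S$. Throughout I use $l_p$ for the start index of $P'$; the statement writes $\ell_p$ and $l_p$ for the same quantity.

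First, I rewrite the hypothesis $P[1,a] = S[m-a+1,m]$ pointwise: $P[x] = S[m-a+x]$ for every $x \in [1,a]$. Second, I note that $P'[1,a'] = P[l_p, l_p+a'-1]$ and $S'[m'-a'+1,m'] = S[r_s-a'+1, r_s]$. It therefore suffices to show $P[l_p+k] = S[r_s-a'+1+k]$ for all $k \in [0,a'-1]$.

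Substituting the definition $a' = a - l_p - m + r_s + 1$ gives $r_s - a' + 1 = m - a + l_p$. The required equality thus becomes $P[l_p+k] = S[m-a+(l_p+k)]$, which is exactly the pointwise hypothesis with $x = l_p+k$, as long as $x$ lies in $[1,a]$.

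The remaining step, and the only place where the assumption $a' \in [1,m']$ enters, is to check that all indices involved are in range. I verify three things:
\begin{itemize}
	\item The lower bound $x \geq 1$ holds because $l_p \geq 1$.
	\item The upper bound $x \leq l_p + a' - 1 = a - m + r_s \leq a$ holds because $r_s \leq m$.
	\item The $S$-indices stay inside $S'$: the largest is $r_s$, and the smallest is $r_s - a' + 1 \geq r_s - m' + 1 = l_s$, using $a' \leq m'$ and $m' = r_s - l_s + 1$.
\end{itemize}
Likewise $l_p + a' - 1 \leq l_p + m' - 1 = r_p$, so the prefix of $P'$ is well defined. Finally, $a' \geq 1$ guarantees that the compared window is non-empty.

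I expect no genuine obstacle. The main risk is an off-by-one error in the formula for $a'$, so I will sanity-check it on the trivial case $P' = P$, $S' = S$, where the formula should give $a' = a$.
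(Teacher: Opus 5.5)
Your proof is correct. The paper states this observation without proof, and your pointwise reduction is exactly the intended bookkeeping: you use $r_s-a'+1=m-a+l_p$ to reduce the claim to $P[x]=S[m-a+x]$ with $x=l_p+k\in[1,a]$, where $x\le a$ follows from $r_s\le m$. You are also right that the conclusion must read $S'[m'-a'+1,m']$. The literal version with $S$ already fails for $P=S=\mathrm{xy}$, $a=2$, $P'=S'=\mathrm{y}$: there $a'=1$, but $P'[1,1]=\mathrm{y}\neq\mathrm{x}=S[1,1]$.
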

An update that affects some block $P_i$ or $S_i$ is passed to the recursive instances that use this block.
There are up to two instances that use any given block.

The algorithm then recalculates all its matches by looking at the finite arithmetic progression $(a',d',n')$ in each result segment of a recursive instance \dynamicPrefixSuffix~with $P' = P[l_p,r_p]$ and $S' = S[l_s,r_s]$.
There are three cases, based on the number of matches.

\begin{itemize}
	\item $n' = 0$: There are no candidates and no matches.
	\item $n' = 1$: Verify if the sole candidate $a$ is a match.
	\begin{itemize}
		\item Verify with one query to \dynamicLCE.
	\end{itemize}
	\item $n > 1$: There is a periodic substring and therefore a "square".
\end{itemize}

If the algorithm encountered a finite arithmetic progression with $n'>1$, it does not have to calculate a result (apart from reporting that it found a "square") and terminates.
Otherwise, it calculates a finite arithmetic progression for every result segment by finding the first, second and last match in this segment.

One instance \dynamicPrefixSuffix, not counting recursive calls, runs in constant time with $\mathcal{O}(m^{\epsilon})$ work.
An update affects up to two recursive instances, while queries simply return the finite arithmetic progression which was calculated and stored during the update.

With each level of the recursion, the size of the inputs is reduced by a factor of $n^{\epsilon}$.
The recursion reaches inputs of constant size after $\frac{1}{\epsilon}$ levels, which is constant.

It follows that the entire algorithm runs in constant time with $\mathcal{O}(n^{\epsilon})$ work.

\subsection{Dynamic Prefix-Suffix to Dynamic String-Matching} \label{subsec:sq:stringMatch}

The second stage solves \dynamicStringMatching~using two instances of \dynamicPrefixSuffix.
The idea behind this algorithm appears in Amir et al.\cite[Section 4]{amirRepetitionDetectionDynamic2019}.

\begin{lemma}\label{the:sqfreeness:stringmatching}
	For any $\epsilon > 0$, there is a dynamic constant-time algorithm for \dynamicStringMatching~on a "common CRCW PRAM" with $\mathcal{O}(n^{\epsilon})$ work.
	If $P$ or $S$ contains a periodic substring, a query may return this information instead of the regular result.
\end{lemma}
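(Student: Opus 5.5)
We reduce \dynamicStringMatching{} with text $T$ ($|T| = 2|P| = 2n$) to two instances of \dynamicPrefixSuffix{} from \Cref{the:sqfreeness:prefixsuffix}, in the spirit of Amir et al. Split $T = T_1T_2$ with $|T_1| = |T_2| = n$. An occurrence of $P$ at position $x \in [1, n+1]$ of $T$ is either exactly $x = n+1$, or it straddles the midpoint, and then $P$ decomposes as $P = P[1,a]\,P[a+1,n]$ with $a = n-x+1$. Here $P[1,a]$ is a suffix of $T_1$ and $P[a+1,n]$ is a prefix of $T_2$. The first condition is handled by the instance $\mathcal{A}$ with prefix string $P$ and suffix string $T_1$. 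The second condition is handled by the instance $\mathcal{B}$ with prefix string $T_2$ and suffix string $P$, after matching the roles correctly: a prefix of $T_2$ that is a suffix of $P$ of length $n-a$. The boundary cases $x=1$ and $x=n+1$ are single candidates, which we check with one \textsc{Lcp} query to the \dynamicLCE{} instance (\Cref{the:LCE}) maintained over $T$ and $P$. Updates $\textsc{Set}_T(x,\sigma)$ go to $\mathcal{A}$ or $\mathcal{B}$ according to the half containing $x$; updates $\textsc{Set}_P$ go to both. Every update is also forwarded to the \dynamicLCE{} instance as a deletion followed by an insertion.

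For the query, we iterate in parallel over the $\mathcal{O}(\log n)$ result segments $[2^{\ell},2^{\ell+1})$. From $\mathcal{A}$ we obtain $\textsc{Match}(\ell)$, a finite arithmetic progression of candidate lengths $a$. If a progression has more than one element, \Cref{lem:prefix-suffix-result-periodicity} shows that $P[1,a]$ carries a period, so $P$ or $T$ contains a periodic substring. In that case we are allowed to report this fact instead of the regular result. Otherwise each segment contributes at most one candidate $a$, so there are $\mathcal{O}(\log n)$ candidates in total. For each candidate we verify the complementary condition $P[a+1,n] = T_2[1,n-a]$ directly with one \textsc{Lcp} query to \dynamicLCE. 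Instead of consulting $\mathcal{B}$'s progressions, one could equivalently intersect them with $\mathcal{A}$'s, but the direct \textsc{Lcp} check avoids intersecting progressions.

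It remains to assemble the verified candidates into a single finite arithmetic progression. If two or more occurrences of $P$ survive at distance $d \le n$, then $T$ contains a periodic substring of period $d$, and we may again report periodicity. So in the non-degenerate case the output has at most one element (or is empty) and is trivially an arithmetic progression. We pick it with \Cref{lem:linSearch} over the $\mathcal{O}(\log n)$ flags.

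Complexity: each update touches at most two \dynamicPrefixSuffix{} instances and one \dynamicLCE{} instance, each costing $\mathcal{O}(n^{\epsilon})$ work. A query performs $\mathcal{O}(\log n)$ \textsc{Lcp} queries of $\mathcal{O}(n^{\epsilon'})$ work each, and with $\epsilon' < \epsilon$ this gives $\mathcal{O}(n^{\epsilon})$ overall, all in constant time. The main obstacle is making sure that the periodic escape is legitimate precisely when a progression has more than one element. This requires checking that two matches of $\mathcal{A}$ in one segment really force a periodic substring, with period at most half its length, inside $P$ or $T$. I would argue this via \Cref{lem:prefix-suffix-result-periodicity}: two matches $i<j$ in $[2^\ell,2^{\ell+1})$ have $j-i<2^\ell\le j/1$, but a period at most $j/2$ needs a third match or the ``square'' observation that $P[1,j]$ with period $j-i<j$ already contains a square. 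The square is the fact the application actually needs.
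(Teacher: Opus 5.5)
Your reduction is essentially the paper's: split $T=T_1T_2$, take candidates per result segment from the prefix--suffix instance on $(P,T_1)$, report ``periodic'' when a segment's progression has more than one element or when two verified occurrences survive, and verify the remaining $\mathcal{O}(\log n)$ candidates. There are two minor differences.
\begin{itemize}
\item The paper verifies a candidate $a$ by testing whether $n-a$ lies in the progression of the matching result segment of the second instance on $(T_2,P)$. That is a constant-time membership test, so no intersection of progressions is needed. You use an \textsc{Lcp} query instead, which makes your instance $\mathcal{B}$ superfluous but is otherwise fine.
\item You handle the occurrence at $x=n+1$ explicitly. It corresponds to match length $0$, which lies in no result segment, and the paper does not discuss it.
\end{itemize}
You should also state that if $\mathcal{A}$'s query itself returns ``periodic'', you forward it. This is legitimate since $P$ and $T_1$ are substrings of $P$ and $T$.

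The step you single out as the main obstacle is easier than you think, and the fallback you propose is false. Let $i<j$ be two matches in $[2^\ell,2^{\ell+1})$. Then $P[1,i]$ is a border of $P[1,j]$, so $P[1,j]$ has period $p=j-i$. Moreover $p<2^\ell\le i$, which gives $2p<j$. Hence $P[1,j]$ is periodic with period at most half its length, and it contains the square $P[1,2p]=(P[1,p])^2$.

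You used $2^\ell\le j$ where you needed $2^\ell\le i$. With the weaker bound you only get $p<j$, and a period $p<|w|$ does \emph{not} force a square: $w=aba$ has period $2$ and is square-free. So your ``square observation'' as stated would not close the argument. With the inequality above, no third match is needed and the periodic escape is justified.
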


\subparagraph{Result Representation}
This algorithm will only ever report a single match.
If there are two or more matches, they have to touch or overlap, creating a periodic substring.
A periodic substring always contains a "square", meaning that the algorithm does not have to calculate a result.

\subparagraph{Algorithm}
The algorithm for \dynamicStringMatching~with $(T,P)$ divides $T$ in two halves $T_1T_2 = T$.
It uses two instances \dynamicPrefixSuffix, one with $(P,T_1)$ and another with $(T_2,P)$.
If any instance of \dynamicPrefixSuffix~reports a "square", \dynamicStringMatching~simply passes this information along.
In the following, we assume that neither instance found a "square" and instead returned a finite arithmetic progression for each of its result segments.

The algorithm is based on the observation that an occurrence of $P$ in $T$ corresponds to one prefix-suffix-match between $P$ and $T_1$ and an opposite match in between $T_2$ and $P$.
\begin{observation}
	There is an $i$ with $T[i,i+|P|) = P$ if and only if $P[1,|P|-i+1] = T_1[i,|P|]$ and $T_2[1,i-1] = P[|P|-i+2,|P|]$.
\end{observation}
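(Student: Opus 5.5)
The plan is to prove both directions at once by cutting the occurrence at the boundary between $T_1$ and $T_2$. Let $p = |P|$, so that $|T| = 2p$, $T_1 = T[1,p]$ and $T_2 = T[p+1,2p]$, hence $T_2[k] = T[p+k]$ for $k \in [1,p]$. An occurrence $T[i,i+p)$ must fit inside $T$, so I restrict attention to $i \in [1,p+1]$. For such $i$, the window $T[i,i+p)$ is the concatenation of $T[i,p] = T_1[i,p]$, of length $p-i+1$, and $T[p+1,p+i-1] = T_2[1,i-1]$, of length $i-1$. The two lengths sum to $p$.

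\textbf{Step 1.} I will use the elementary fact that, for strings $x = x_1x_2$ and $y$ with $|y| = |x|$, we have $x = y$ if and only if $x_1 = y[1,|x_1|]$ and $x_2 = y[|x_1|+1,|y|]$. This holds because string equality is positionwise equality of characters, and the positions split into the two consecutive blocks.

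\textbf{Step 2.} Apply this with $x = T[i,i+p)$, $x_1 = T_1[i,p]$, $x_2 = T_2[1,i-1]$ and $y = P$, so $|x_1| = p-i+1$. The first condition becomes $T_1[i,p] = P[1,p-i+1]$. The second becomes $T_2[1,i-1] = P[p-i+2,p]$, since $|x_1|+1 = p-i+2$. These are exactly the two conditions in the observation, read as equalities. Both directions then follow at once for every fixed $i$, and the ``there is an $i$'' quantifier transfers unchanged.

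\textbf{Step 3.} I will check the degenerate boundaries, which I expect to be the only point needing care. For $i=1$, the $T_2$-part $T_2[1,0]$ and $P[p+1,p]$ are both empty, so the condition reduces to $P = T_1$. For $i = p+1$, the $T_1$-part and $P[1,0]$ are empty, and the condition reduces to $T_2 = P$. Both are consistent with the empty-interval convention $S[a,a-1] = \epsilon$.

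\textbf{Step 4.} I will interpret the two conditions as prefix-suffix matches, which is how the reduction to \dynamicPrefixSuffix~uses them. The first says that the prefix of $P$ of length $p-i+1$ equals the suffix of $T_1$ of that length, i.e.\ a match of $(P,T_1)$. The second says that the prefix of $T_2$ of length $i-1$ equals the suffix of $P$ of that length, i.e.\ a match of $(T_2,P)$. This step is not needed for the equivalence itself, only to connect it to the algorithm.
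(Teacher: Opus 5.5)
Your proof is correct and follows the paper's approach. The paper states this observation without proof, justifying it only by noting that an occurrence of $P$ in $T$ corresponds to a prefix-suffix match of $(P,T_1)$ together with an opposite match of $(T_2,P)$. That is exactly your positionwise split of $T[i,i+|P|)$ into $T_1[i,|P|]$ and $T_2[1,i-1]$, and your restriction to $i \in [1,|P|+1]$ and the empty-interval checks at $i=1$ and $i=|P|+1$ fill in details the paper leaves implicit.
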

The algorithm uses one instance of \dynamicPrefixSuffix~to identify candidates and the other to verify them.

For every result segment of the first instance, it looks at its finite arithmetic progression $(a',d',n')$.
There are two cases.
\begin{itemize}
	\item $n' = 0$: There are no candidates and no matches.
	\item $n' = 1$: Verify if the sole candidate $a'$ is a match by testing if there is an opposite match in the second instance.
	\item $n' > 1$: There is a periodic substring and therefore a "square".
\end{itemize}
If the algorithm encountered a finite arithmetic with $n'>1$, it can simply report that it found a "square" and terminate.
If it found a match stemming from two or more result segments, it can also report that it found a "square" and terminate.
Otherwise, it reports the only match it found, if it exists.

The algorithm uses two instances of \dynamicPrefixSuffix, which run in constant time with $\mathcal{O}(n^{\epsilon})$ work according to \Cref{the:sqfreeness:prefixsuffix}.
The rest of the algorithm also runs in constant time with $\mathcal{O}(n^{\epsilon})$ work.

\subsection{Dynamic String-Matching to Dynamic Range-Square-Freeness} \label{subsec:sq:rangeSquare}

The third stage solves \dynamicRangeSquareFreeness~using one instance of \dynamicStringMatching.
For the two topmost stages, we switch to the static "squares" algorithm by Apostolico and Breslauer \cite{apostolicoOptimalOlogLog1996}, because this algorithm already uses a parallel machine.
One instance of \dynamicRangeSquareFreeness~stage corresponds to what Apostolico and Breslauer call a \textit{substage} \cite[Section 3.1]{apostolicoOptimalOlogLog1996}.

\begin{lemma}\label{the:sqfreeness:rangesquare}
	For any $\epsilon > 0$, there is a dynamic constant-time algorithm for \dynamicRangeSquareFreeness~on a "common CRCW PRAM" with $\mathcal{O}(n^{\epsilon})$ work.
\end{lemma}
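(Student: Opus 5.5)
We reduce a \dynamicRangeSquareFreeness{} instance with parameters $\ell,i$ to the single \dynamicStringMatching{} instance of \Cref{the:sqfreeness:stringmatching}, following the substage of Apostolico and Breslauer. The key observation is the following. A square $S[x,x+m]=uu$ with $x \in [\ell(i-2)+2,\ell(i-1)+2)$ and $m\in[4\ell,6\ell)$ has half-length $|u| \in [2\ell,3\ell)$. Hence the block $P' = S[\ell(i-1)+2, \ell i+2)$ of length $\ell$ lies entirely inside the first copy of $u$. Its shifted copy $S[\ell(i-1)+2+|u|, \ell i+2+|u|)$ therefore lies inside the text window $T' = S[\ell(i+1)+2,\ell(i+3)+2)$ of length $2\ell$. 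Since both $P'$ and $T'$ are fixed substrings of $S$, we maintain the instance with text $T'$ and pattern $P'$. An update $\textsc{Set}(x,\sigma)$ is forwarded as $\textsc{Set}_P$ or $\textsc{Set}_T$ whenever $x$ falls into the pattern block or the text window, respectively. This is a constant number of forwarded updates, each costing $\mathcal{O}(n^{\epsilon})$ work.

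For the query, we read the result of the \dynamicStringMatching{} instance. If it reports a periodic substring, we report a square immediately, since every periodic substring of length at least twice its period contains a square. If it reports no occurrence, there is no square in the range: every such square would force $P'$ to occur in $T'$ at offset $|u|$. Otherwise exactly one occurrence is reported, say at position $y$, which determines the candidate half-length $p = y-(\ell(i-1)+2)$. The candidate square is then verified with two \dynamicLCE{} queries on the maintained instance of \Cref{the:LCE}, since squares of period $p$ through the block correspond to a run. We compute $a=\textsc{Lcs}$ of the positions $\ell(i-1)+1$ and $\ell(i-1)+1+p$ (extending left) and $b=\textsc{Lcp}$ of the positions $\ell i+2$ and $\ell i+2+p$ (extending right). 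Together with the block itself, this gives a maximal stretch with period $p$ of length $L=a+\ell+b+p$. A square of half-length $p$ exists exactly when $L\ge 2p$. Its possible start positions form an interval, and we intersect that interval with the allowed range for $x$ and check that $2p\in[4\ell,6\ell)$, all in constant time. By \Cref{the:LCE}, each query takes $\mathcal{O}(n^{\epsilon})$ processors.

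The main obstacle is the boundary bookkeeping in the key observation. We must check that the chosen index offsets guarantee $P'\subseteq$ first half and the shifted copy $\subseteq T'$ for every admissible $x$ and $m$, including the off-by-one conventions of $S[x,x+m]$, so that no square in the range escapes detection. If the given constants do not fit exactly, we would first try enlarging the text window by a constant factor. Doing so covers all squares in the range, but a longer text may contain several occurrences of $P'$ that do not touch, so we would then also have to show that either a single occurrence is reported or that multiple occurrences certify a periodic substring, and hence a square, within the range. Correctness of the early termination on periodic substrings also requires that the reported periodicity actually yields a square inside $S$. This holds because the pattern and text are substrings of $S$ and any periodic segment contains a square. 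The total work is a constant number of $\mathcal{O}(n^{\epsilon})$ operations.
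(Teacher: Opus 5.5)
Your proposal is correct and is essentially the paper's proof: one \dynamicStringMatching{} instance whose pattern is a length-$\ell$ block lying in the left half of every square in the range, and whose text is the length-$2\ell$ window containing all possible shifted copies of that block; a reported periodicity is passed through as a square, and the unique reported occurrence is verified with \dynamicLCE{} queries. The paper takes $P=\mathcal{B}_{\ell,i}$ and $T=\mathcal{B}_{\ell,i+2}\mathcal{B}_{\ell,i+3}$, i.e.\ your windows shifted left by one position, and both choices pass the boundary check you flag (the left half always contains $S[\ell(i-1)+1,\ell i+1]$, and each block shifted by any offset in $[2\ell,3\ell]$ stays inside its text window), so the fallback of enlarging the text window is never needed.
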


\subparagraph{Result Representation}
An instance stores a single bit, set to 1 if and only if it finds a "square".

\subparagraph{Algorithm}
Recall that \dynamicRangeSquareFreeness~has two parameters $\ell,i$, set during the initialization, and searches for "squares" of length in $[4\ell,6l]$ and with starting index in $[l(i-2)+2,l(i-1)+1]$.

The algorithm for \dynamicRangeSquareFreeness~divides $S$ in blocks $\mathcal{B}_{\ell,i}$ of length $\ell$, where $\mathcal{B}_{\ell,i}$ starts at index $\ell(i-1)+1$.
This instance is only concerned with the series of blocks $\mathcal{B}_{\ell,i-1},\mathcal{B}_{\ell,i},\ldots,\mathcal{B}_{\ell,i+5}$.

It uses one instance of \dynamicStringMatching~with $T = \mathcal{B}_{\ell,i+2}\mathcal{B}_{\ell,i+3}$ and $P = \mathcal{B}_{\ell,i}$.
The algorithm passes changes to $\mathcal{B}_{\ell,i}$, $\mathcal{B}_{\ell,i+2}$ or $\mathcal{B}_{\ell,i+3}$ along to \dynamicStringMatching.
If this instance reports a "square", \dynamicRangeSquareFreeness~simply passes this along with no further calculations.
In the following, we assume that \dynamicStringMatching~did not report a "square" and instead returns a single match $a'$.

The algorithm is based on the observation that all "squares" this instance is interested in fully contain $\mathcal{B}_{\ell,i}$ in their left side.
\begin{observation}
	If $S[x,x+m)$ is a "square" with $m \in [4\ell,6l]$ and $x \in [l(i-2)+2,l(i-1)+1]$, then $S[x, x+\frac{m}{2})$ fully contains $\mathcal{B}_{\ell,i}$ and $S[x+\frac{m}{2},x+m)$ contains a copy of $\mathcal{B}_{\ell,i}$ with offset $\frac{m}{2} \in [2\ell,3l]$.
\end{observation}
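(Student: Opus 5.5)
The observation reduces to index arithmetic on the interval endpoints; I would argue directly from the definitions with no earlier results needed. Write $h = \frac{m}{2}$. Since $S[x,x+m)$ is a "square" $uu$, $m$ is even and $|u| = h$. The hypothesis $m \in [4\ell,6\ell]$ gives $h \in [2\ell,3\ell]$, which is already the claimed range of the offset. Recall that $\mathcal{B}_{\ell,i}$ occupies the indices $[\ell(i-1)+1,\ \ell i]$.

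The first step is to show that the left half $S[x,x+h) = [x,\ x+h-1]$ contains this interval. For the left end, the hypothesis $x \le \ell(i-1)+1$ says the left half starts no later than the block does. For the right end, I use the lower bounds on both $x$ and $h$:
\[ x+h-1 \;\ge\; \ell(i-2)+2+2\ell-1 \;=\; \ell i + 1 \;\ge\; \ell i. \]
So the left half ends at or after the end of the block, and $\mathcal{B}_{\ell,i} \subseteq [x,\ x+h)$.

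The second step transfers the block to the right half. Because $S[x,x+h) = S[x+h,x+m)$ as strings, we have $S[y] = S[y+h]$ for every $y \in [x,\ x+h)$. Applying this to every $y \in \mathcal{B}_{\ell,i}$ gives that $S[\ell(i-1)+1+h,\ \ell i+h]$ equals $\mathcal{B}_{\ell,i}$. Shifting the containment from the first step by $h$ places this interval inside $[x+h,\ x+m)$, which is the right half. This is the claimed copy at offset $h = \frac{m}{2} \in [2\ell,3\ell]$.

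For later use in the algorithm, I would also record where the copy sits. Its start lies in $[\ell(i+1)+1,\ \ell(i+2)+1]$, so it lies inside $\mathcal{B}_{\ell,i+1}\cdots\mathcal{B}_{\ell,i+3}$, within the window of blocks the instance tracks. I do not expect a real obstacle here. The only points needing care are the half-open versus closed interval conventions (the statement mixes $[\cdot,\cdot)$ and $[\cdot,\cdot]$) and the evenness of $m$, which follows from $S[x,x+m)$ being a square.
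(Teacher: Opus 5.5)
Your argument is correct and is the direct index computation the paper leaves implicit; the paper states this observation without a written proof. The bounds $x \ge \ell(i-2)+2$ and $\frac{m}{2} \ge 2\ell$ push the end of the left half to at least $\ell i$, and the square property transports the block by $\frac{m}{2}$. Your closing remark can be sharpened: the copy starts in $[\ell(i+1)+1,\ell(i+2)+1]$ and ends at most at $\ell(i+3)$, so it lies inside $\mathcal{B}_{\ell,i+2}\mathcal{B}_{\ell,i+3}$. Note that $\mathcal{B}_{\ell,i+2}$, not $\mathcal{B}_{\ell,i+1}$, begins at $\ell(i+1)+1$, and $\mathcal{B}_{\ell,i+2}\mathcal{B}_{\ell,i+3}$ is exactly the text the algorithm passes to the string-matching instance.
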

The algorithm uses \dynamicStringMatching~to find the occurrence of $\mathcal{B}_{\ell,i}$ in $S[x+\frac{m}{2},x+m)$.
The copy lies in $\mathcal{B}_{\ell,i+2}\mathcal{B}_{\ell,i+3}$, which is precisely the text for the instance of \dynamicStringMatching.

The algorithm verifies if the single match reported by \dynamicStringMatching~can be extended to become a "square" using \dynamicLCE.

\subsection{Dynamic Range-Square-Freeness to Dynamic Square-Freeness} \label{subsec:sq:squareFree}

The final stage solves \dynamicSquareFreeness~using $\mathcal{O}(n)$ instances of \dynamicRangeSquareFreeness.
The idea of decomposing one instance \dynamicSquareFreeness~into many instances of \dynamicRangeSquareFreeness~is taken from Apostolico and Breslauer \cite[Section 3]{apostolicoOptimalOlogLog1996}.

\subparagraph{Result Representation}

The algorithm stores a single bit for each instance of \dynamicRangeSquareFreeness~where the bit is $0$ if and only if that instance does not report a "square".
The algorithm uses an instance of \dynamicNeighbor~to store and query these bits efficiently.

\subparagraph{Algorithm}
The algorithm uses $\log_{\frac{3}{2}} n$ parallel levels.
Level $k$ is comprised of instances of \dynamicRangeSquareFreeness~with $\ell = (\frac{3}{2})^k$ and $i \in [1, n(\frac{2}{3})^k]$.
Updates are passed along to every instance of \dynamicRangeSquareFreeness~which are affected.
Recall that \dynamicRangeSquareFreeness~divides $S$ in blocks $\mathcal{B}_{\ell,i}$ of length $\ell$ and is only interested in the seven blocks $\mathcal{B}_{\ell,i-1},\mathcal{B}_{\ell,i},\ldots,\mathcal{B}_{\ell,i+5}$.
It follows that any update to a single character affects up to seven instances per level, or $\mathcal{O}(\log n)$ instances in total.

After all affected instances of \dynamicRangeSquareFreeness~have recalculated their results, the algorithm updates the bits for \dynamicNeighbor~accordingly.

A query uses \dynamicNeighbor~to determine if all bits are $0$ and reports that $S$ is "square-free" if that is the case.

\subsection{Summary}

We have shown that we can modify the algorithm from Amir et al.\ \cite{amirRepetitionDetectionDynamic2019} to run in parallel constant time, using our algorithm for \dynamicLCE.
We have only shown the modifications for \dynamicSquareFreeness.
It is also possible to maintain the maximum "square", all "squares", and all runs with extensions of this algorithm, as shown by Amir et al.\ for their sequential algorithm.

\section{Overcoming the Size Limitation}\label{sec:SizeLimit}

Our algorithm has the limitation that a maximum size for the string is fixed during the initialization.
We give a rough description of a modified algorithm without this limitation using ideas from the Muddling Lemma \cite{dattaStrategyDynamicPrograms2019}.

The modified algorithm uses two instances of the original algorithm.
It uses the ""current instance"" $\mathcal{I}$ with some maximum size $m$ that is larger then the current string.
The ""future instance"" $\mathcal{I}'$ has maximum size $2m$ and is constructed as the string grows larger.

Simply speaking, the algorithm uses the "current instance" $\mathcal{I}$ to answer queries, while in the background building the "future instance" $\mathcal{I}'$.
Both instances represent the same string, but the "future instance" is (in general) not yet fully built.
It is fully built by the time the current string grows to large for the "current instance" $\mathcal{I}$, at which point the algorithm exchanges $\mathcal{I}$ for $\mathcal{I}'$, then starts building a new "future instance" $\mathcal{I}'$ with twice the maximum size.

Building the "future instance" $\mathcal{I}'$ is split across the updates.
The algorithm starts building the "future instance" $\mathcal{I}'$ when the current string has length $\frac{m}{2}$, then swaps over when it reaches length $m$.
It therefore builds $\mathcal{I}'$ over the course of at least $\frac{m}{2}$ updates, with $\mathcal{O}(m^{\epsilon})$ processors per update.
It is possible to build $\mathcal{I}'$ with maximum size $m$ for any underlying string in $\mathcal{O}(m)$ time on $\mathcal{O}(m^{\epsilon})$ processors.

One complication with this approach is that the underlying string for the "future instance" $\mathcal{I}'$ changes as it is built.
To overcome this, each update is also applied to the "future instance" $\mathcal{I}'$, even though it may not be fully built yet.
Effects of the update to parts of $\mathcal{I}'$ that have not been built yet are simply ignored, while other effects are applied as usual.

\end{document}